\documentclass[12pt]{book} 
\usepackage{fancyhdr}
\usepackage{emptypage}
\pagestyle{fancy}

\fancyhead{} 
\fancyfoot{} 
\fancyhead[LE,RO]{\slshape \leftmark}
\fancyfoot[C]{\thepage}


\usepackage{times}
\usepackage{setspace}
\usepackage[pdftex]{graphicx,color}
\usepackage{pst-node}
\usepackage{epsfig}
\usepackage{amssymb}
\usepackage{amsthm}
\usepackage{stmaryrd}
\usepackage{alltt}
\usepackage{mathpartir}

\let\RescueRightarrow=\Rightarrow
\usepackage{marvosym}
\renewcommand{\Rightarrow}{\RescueRightarrow}

\usepackage{mflogo}
\usepackage[hang,flushmargin]{footmisc} 
\usepackage[
  pagebackref=true,
  colorlinks=true,
  plainpages=false,
  backref=page,
  linkcolor=black,
  citecolor=black,
  urlcolor=black,
  pdfpagemode=None,
  pdftitle={The Complexity of Flow Analysis in Higher-Order Languages},
  pdfauthor={David Van Horn},
  baseurl={http://svn.lambda-calcul.us/dissertation/},
  pdfsubject={Computer science, PhD dissertation, Brandeis University},
  pdfkeywords={flow analysis, cfa, kcfa, complexity, lambda calculus, functional programming,
               linear logic, program analysis}]{hyperref}

\urlstyle{rm} 
\definecolor{darkred}{rgb}{.8,0,0}

\setlength\parindent{0pt}
\setlength\parskip{0.1in} 

\newcommand\papa{\em in memory of William Gordon Mercer\\ July 22, 1927--October 8, 2007}

\renewcommand*{\backref}[1]{}
\renewcommand*{\backrefalt}[4]{%
  \ifcase #1 %
   No citations.%
  \or 
   Cited on page #2.%
  \else 
   Cited on pages #2.%
  \fi}

\DeclareGraphicsRule{*}{mps}{*}{}

\newcommand\TB{\ \ \ \ }

\newcommand\tensor{\otimes}
\newcommand{\parr}{\bindnasrepma}

\newcommand\kcfa{\texorpdfstring{$k$}{k}CFA}

\newcommand\ptime{\texorpdfstring{{\sc ptime}}{PTIME}}
\newcommand\twonpda{\texorpdfstring{{\sc 2npda}}{2NPDA}}
\newcommand\pspace{\texorpdfstring{{\sc pspace}}{PSPACE}}
\newcommand\dtime{{\sc dtime}}
\newcommand\ntime{{\sc ntime}}
\newcommand\dspace{{\sc dspace}}
\newcommand\np{\texorpdfstring{{\sc nptime}}{NPTIME}}
\newcommand\nc{{\sc nc}}
\newcommand\exptime{\texorpdfstring{{\sc exptime}}{EXPTIME}}
\newcommand\logspace{\texorpdfstring{{\sc logspace}}{LOGSPACE}}

\newcommand\Lab{\mathbf{Lab}}
\newcommand\Env{\mathbf{Env}}

\newcommand\CEnv{\mathbf{CEnv}}
\newcommand\Cache{\mathbf{Cache}}
\newcommand\ACache{\mathbf{\widehat{Cache}}}
\newcommand\Var{\mathbf{Var}}
\newcommand\Val{\mathbf{Val}}
\newcommand\Exp{\mathbf{Exp}}
\newcommand\Unit{\mathbf{Unit}}
\newcommand\Term{\mathbf{Term}}
\newcommand\cache{\widehat{\mathsf{C}}}
\newcommand\ecache{\mathsf{C}}          
\newcommand\eenv{\mathsf{r}}
\newcommand\aenv{\hat{\mathsf{r}}}

\newcommand\ce{\rho}
\newcommand\emptyenv{\bullet}
\newcommand\kmodels[2]{\models^{#1}_{#2}}
\newcommand\fv[1]{\ensuremath{\mathbf{fv}(#1)}}
\newcommand\bv[1]{\ensuremath{\mathbf{bv}(#1)}}

\newcommand\evalf[1][]{\ensuremath{\mathcal{E}{#1}}}
\newcommand\ivf{\ensuremath{\mathcal{I}}}
\newcommand\avf{\ensuremath{\mathcal{A}}}
\newcommand\eval[2][]{\ensuremath{\mathcal{E}{#1}\sem{#2}}}
\newcommand\iv[1]{\ensuremath{\mathcal{I}\sem{#1}}}
\newcommand\ev[3]{\ensuremath{\evalf\sem{#1}^{#2}_{#3}}}
\newcommand\av[4][]{\ensuremath{\avf{#1}\sem{#2}^{#3}_{#4}}}
\newcommand\avz[1]{\av[_0]{#1}\empty\empty}
\newcommand\avk[3]{\av[_k]{#1}{#2}{#3}}

\newcommand\dom[1]{\ensuremath{\mathbf{dom}(#1)}}
\newcommand\sem[1]{\ensuremath{\llbracket #1\rrbracket}}
\newcommand\lab[1]{\ensuremath{\mathbf{lab}(#1)}}
\newcommand\restrict{\ensuremath{\!\upharpoonright\!}}
\newcommand{\loves}{\ensuremath{\vdash}}
\newcommand\unknown{\ensuremath{\mathit{unknown}}}

\newcommand\expnd{\ensuremath{\Rightarrow}}
\newcommand\cfared{\ensuremath{\Rightarrow_{\mathsf{cfa}}}}
\newcommand\vp{\leadsto}
\newcommand\lolli{\multimap}

\newcommand\True{\ensuremath{\mbox{\tt True}}}
\newcommand\False{\ensuremath{\mbox{\tt False}}}
\newcommand\TT{\ensuremath{\mbox{\tt tt}}}
\newcommand\FF{\ensuremath{\mbox{\tt ff}}}
\newcommand\Not{\ensuremath{\mbox{\tt Not}}}
\newcommand\Copy{\ensuremath{\mbox{\tt Copy}}}
\newcommand\Implies{\ensuremath{\mbox{\tt Implies}}}
\newcommand\AND{\ensuremath{\mbox{\tt And}}}
\newcommand\Andgate{\ensuremath{\mbox{\tt Andgate}}}
\newcommand\Or{\ensuremath{\mbox{\tt Or}}}
\newcommand\Orgate{\ensuremath{\mbox{\tt Orgate}}}
\newcommand\Notgate{\ensuremath{\mbox{\tt Notgate}}}
\newcommand\Copygate{\ensuremath{\mbox{\tt Copygate}}}
\newcommand\Implgate{\ensuremath{\mbox{\tt Impliesgate}}}

\newcommand\Widget{\ensuremath{\mbox{\tt Widget}}}
\newcommand\Extract{\ensuremath{\mbox{\tt Extract}}}
\newcommand\Null{\ensuremath{\mbox{\tt Null}}}
\newcommand\Zero{\ensuremath{\mathbf 0}}  
\newcommand\One{\ensuremath{\mathbf 1}}

\newtheorem{theorem}{Theorem}
\newtheorem{lemma}{Lemma}
\newtheorem{corollary}{Corollary}
\newtheorem{proposition}{Proposition}
\newtheorem{definition}{Definition}

\newtheorem{conjecture}{Conjecture}

\usepackage{natbib}
\bibpunct();A{},
\let\cite=\citep
\bibliographystyle{brandeis-dissertation}


\title{The Complexity of Flow Analysis\\ 
       in Higher-Order Languages}
\author{David {Van Horn}}

\let\OrigTableofcontents=\tableofcontents
\renewcommand{\tableofcontents}{%
  \setlength\parskip{0in} 
  \OrigTableofcontents}
\let\OrigListoffigures=\listoffigures
\renewcommand{\listoffigures}{%
  \setlength\parskip{0in} 
  \OrigListoffigures
  \setlength\parskip{0.1in}}
\let\bdquotation\quotation
\let\bdendquotation\endquotation
\renewenvironment{quotation}%
  {\bdquotation%
     \noindent%
     \setlength\parindent{0pt}%
     \setlength\parskip{0.1in}}%
  {\bdendquotation}

\begin{document}
  \pagenumbering{roman}

\thispagestyle{empty}%
\begin{center}
  \vspace*{.25in}%
  {\Huge \bf\baselineskip=.8\baselineskip 
    The Complexity of Flow Analysis in Higher-Order Languages \\}
  \vspace*{.5in}%
  \includegraphics[height=2in]{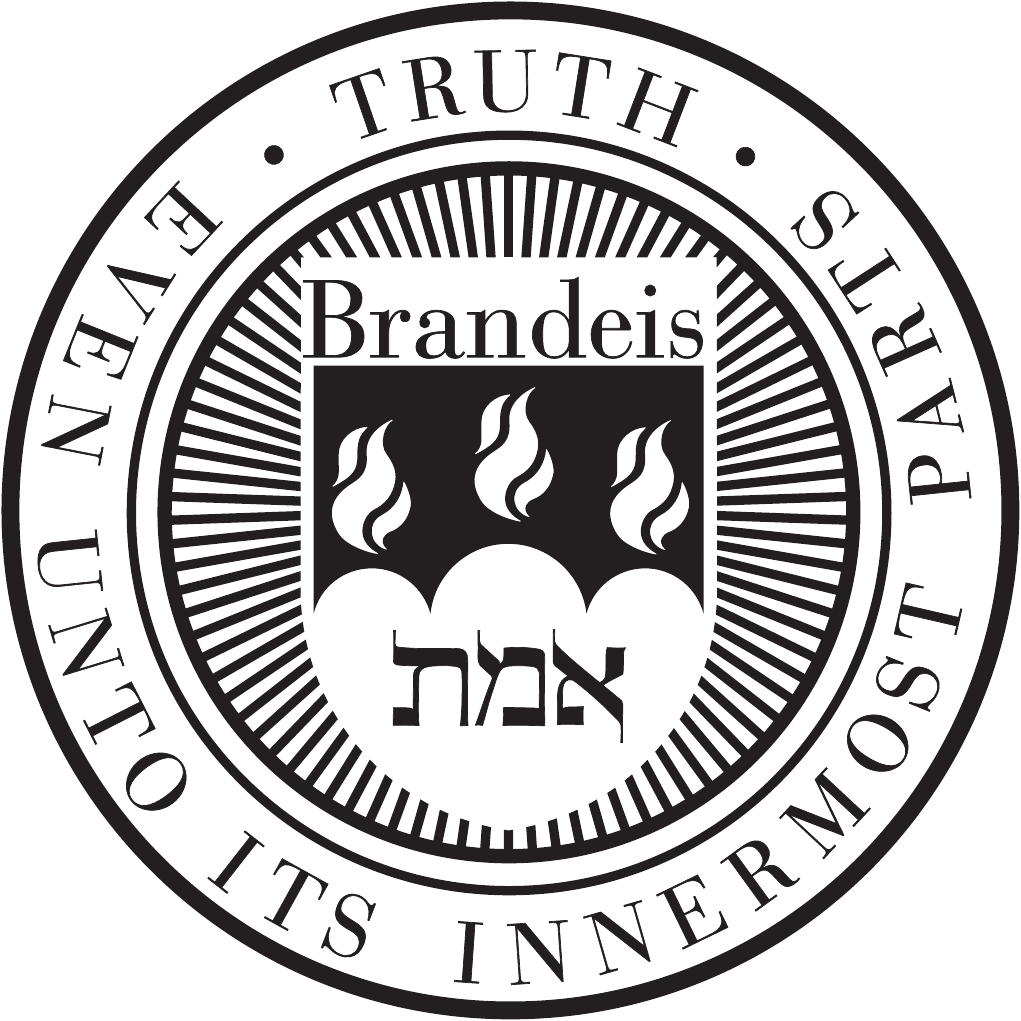}\\
  \vspace*{.5in}%
  {\large David Van Horn}
\end{center}
\cleardoublepage

\thispagestyle{empty}%
\begin{center}
  \vspace*{.25in}%
  {\Huge \bf\baselineskip=.8\baselineskip 
    The Complexity of Flow Analysis in Higher-Order Languages \\}
  \vspace*{.5in}%
  A Dissertation\\
  \vspace*{.25in}%
  Presented to\\
  The Faculty of the Graduate School of Arts and Sciences\\
  Brandeis University\\
  Mitchom School of Computer Science\\
  \vspace*{.5in}
  In Partial Fulfillment\\
  of the Requirements for the Degree\\
  Doctor of Philosophy\\
  \vspace*{.5in}
  by\\
  David Van Horn\\
  August, 2009\\
\end{center}%
\cleardoublepage

\thispagestyle{empty}%
  \vspace*{.15in}%
  This dissertation, directed and approved by David Van Horn's
  committee, has been accepted and approved by the Graduate
  Faculty of Brandeis University in partial fulfillment of the
  requirements for the degree of:
  \vspace*{0.25in}%
  \begin{flushright}%
    \centerline{\bfseries DOCTOR OF PHILOSOPHY}%
    \par
    \parbox{3.6in}{%
      \vspace{.8in}%
      Adam B. Jaffe, Dean of Arts and Sciences}%
  \end{flushright}%
  \vspace*{0.4in}%
  Dissertation Committee:
  \par
  Harry G. Mairson, Brandeis University, Chair\\
  Olivier Danvy, University of Aarhus\\
  Timothy J. Hickey, Brandeis University\\
  Olin Shivers, Northeastern University  
\cleardoublepage

\thispagestyle{empty}%
\vspace*{2.5in}%
\begin{center}\copyright\ David Van Horn, 2009\\
Licensed under the Academic Free License version 3.0.
\end{center}
\cleardoublepage

\thispagestyle{empty}%
\vspace*{2.5in}%
\begin{center}\papa\end{center}
\cleardoublepage

\chapter*{Acknowledgments}
\addcontentsline{toc}{chapter}{Acknowledgments}
Harry taught me so much, not the least of which was a compelling kind
of science.

It is fairly obvious that I am not uninfluenced by Olivier Danvy and
Olin Shivers and that I do not regret their influence upon me.

My family provided their own weird kind of emotional support and
humor.

I gratefully acknowledge the support of the following people, groups,
and institutions, in no particular order: Matthew Goldfield.  Jan
Midtgaard.  Fritz Henglein.  Matthew Might.  Ugo Dal Lago.
Chung-chieh Shan.  Kazushige Terui.  Christian Skalka.  Shriram
Krishnamurthi.  Michael Sperber.  David McAllester.  Mitchell Wand.
Damien Sereni.  Jean-Jacques L\'evy.  Julia Lawall.  Matthias
Felleisen.  Dimitris Vardoulakis.  David Herman.  Ryan Culpepper.
Richard Cobbe.  Felix S Klock II.  Sam Tobin-Hochstadt.  Patrick
Cousot.  Alex Plotnick.  Peter M{\o}ller Neergaard.  Noel Welsh.
Yannis Smaragdakis.  Thomas Reps.  Assaf Kfoury.  Jeffrey Mark
Siskind.  David Foster Wallace.  Timothy J.~Hickey.  Myrna Fox.  Jeanne
DeBaie.  Helene Greenberg.  Richard Cunnane.  Larry Finkelstein.  Neil
D.~Jones and the lecturers of the Program Analysis and Transformation
Summer School at DIKU. New England Programming Languages and Systems
Symposium.  IBM Programming Language Day.  Northeastern University
Semantics Seminar and PL Jr.~Seminar series.  The reviewers of
ICFP'07, SAS'08, and ICFP'08.  Northeastern University.  Portland
State University. The National {\it Science} Foundation, grant
CCF-0811297.

And of course, Jessie.  All the women in the world aren't whores,
just mine.

\chapter*{Abstract}
\addcontentsline{toc}{chapter}{Abstract}

This dissertation proves lower bounds on the inherent difficulty of
deciding flow analysis problems in higher-order programming languages.
We give exact characterizations of the computational complexity of
0CFA, the $k$CFA hierarchy, and related analyses. 
In each case, we precisely capture both the {\em expressiveness} and
{\em feasibility} of the analysis, identifying the elements
responsible for the trade-off.

0CFA is complete for polynomial time. 
This result relies on the insight that when a program is linear (each
bound variable occurs exactly once), the analysis makes no
approximation; abstract and concrete interpretation coincide, and
therefore program analysis becomes evaluation under another guise.
Moreover, this is true not only for 0CFA, but for a number of {\em
  further approximations} to 0CFA.  In each case, we derive polynomial
time completeness results.

For any $k > 0$, $k$CFA is complete for exponential time.  
Even when $k = 1$, the distinction in binding contexts results in a
limited form of {\em closures}, which do not occur in 0CFA.
This theorem validates empirical observations that $k$CFA is
intractably slow for any $k > 0$. 
There is, in the worst case---and plausibly, in practice---no way to
tame the cost of the analysis. 
Exponential time is required.  
The empirically observed intractability of this analysis can be
understood as being {\em inherent in the approximation problem being
  solved}, rather than reflecting unfortunate gaps in our programming
abilities.

\chapter*{Preface}
\addcontentsline{toc}{chapter}{Preface}
\markboth{\slshape \MakeUppercase{Preface}}{}

\subsection*{What to expect, What not to expect}

This dissertation investigates lower bounds on the computational
complexity of flow analysis for higher-order languages, uncovering its
inherent computational costs and the fundamental limits of efficiency
for {\em any} flow analysis algorithm.  As such, I have taken
existing, representative, flow analysis specifications ``off the
shelf'' without modification.  This is {\em not} a dissertation on the
design and implementation of novel flow analyses (although it should
inform such work).  The reader is advised to expect no benchmarks or
prototype implementations, but rather insightful proofs and theorems.

This dissertation relates existing research in order to situate the
novelty and significance of this work.  It does not attempt to
comprehensively survey the nearly thirty years of research on flow
analysis, nor the wealth of frameworks, formulations, and variants.  A
thorough survey on flow analysis has been undertaken by
\citet{midtgaard-07}.


\subsection*{Assumptions on the reader}


For the reader expecting to understand the intuitions, proofs, and
consequences of the results of this dissertation, I assume familiarity
with the following, in roughly descending order of importance:
\begin{itemize}

\item functional programming.  

  The reader should be at ease programming with higher-order
  procedures in languages such as Scheme or ML.  For an introduction
  to programming in Scheme specifically, {\em The Scheme Programming
    Language} by \citet{dybvig-tspl} and {\em Teach Yourself Scheme in
    Fixnum Days} by \citet{sitaram-tyscheme} are recommended; for ML,
  {\em Programming in Standard ML} by \citet{harper-sml} and {\em ML
    for Working Programmer} by \citet{paulson} are recommended.

This dissertation relies only on the simplest applicative subsets of
these languages.

\item interpreters (evaluators).

  The reader should understand the fundamentals of writing an
  interpreter, in particular an environment-based interpreter
  \cite{landin-64} for the functional core of a programming
  language.\footnote{Note that almost every modern programming
    language includes a higher-order, functional core: Scheme, ML,
    JavaScript, Java, Haskell, Smalltalk, Ruby, C\#, etc., etc.}  The
  definitive reference is ``Definitional interpreters for higher-order
  programming languages'' by \citet{reynolds-acm72,reynolds-hosc98}.
  Understanding sections 2--6 are an absolute must (and joy).  For a
  more in-depth textbook treatment, see the gospel according to
  \citet{sicp}: {\em Structure and Interpretation of Computer
    Programs}, Chapter 3, Section 2, ``The Environment Model of
  Evaluation,'' and Chapter 4, Section 1, ``The Metacircular
  Evaluator.''
Finally, {\em Essentials of Programming Languages} by \citet{eopl3} is
highly recommended.\footnote{As an undergraduate, I cut my teeth on
  the first edition \citeyearpar{eopl1}.  
}

\item the $\lambda$-calculus.  

  The encyclopedic reference is {\em The Lambda Calculus: Its Syntax
    and Semantics} by \citet{barendregt}, which is an overkill for the
  purpose of understanding this dissertation.  Chapters 1 and 3 of
  {\em Lectures on the Curry-Howard Isomorphism} by
  \citet{sorensen-urzyczyn} offers a concise and sufficient
  introduction to untyped and typed $\lambda$-calculus, respectively.
  There are numerous others, such as {\em An Introduction to Lambda
    Calculi for Computer Scientists} by \citet{hankin-lambda}, {\em
    Functional programming and lambda calculus} by
  \citet{barendregt-volb}, and so on.  Almost any will
  do.\footnote{See \citet[Footnote 1]{cardone-hindley-06} for
    references to French, Japanese, and Russian overviews of the
    $\lambda$-calculus.}

\item basic computational complexity theory.   

  The reader should be familiar with basic notions such as complexity
  classes, Turing machines, undecidability, hardness, and complete
  problems.  \citet{Papadimitriou94} is a standard introduction (See
  chapters 2--4, 7--9, 15, and 16).  \citet{jones97} is a good
  introduction with a stronger emphasis on programming and programming
  languages (See part IV and V). Almost any decent undergraduate text
  on complexity would suffice.

  In particular, the classes \logspace, \ptime, \np, and \exptime\ are
  used.  Reductions are given from canonical complete problems for
  these classes to various flow analysis problems.  These canonical
  complete problems include, respectively: the permutation problem,
  circuit value problem (CVP), Boolean satisfiability (SAT), and a
  generic reduction for simulating deterministic, exponential time
  Turing machines.

  Such a reduction from a particular complete computational problem to
  a corresponding flow analysis problem establishes a {\em lower
    bound} on the complexity of flow analysis: solving the flow
  problem is {\em at least as hard} as solving the corresponding
  computational problem (SAT, CVP, etc.), since any instance of these
  problems can be transformed (reduced), using very limited resources,
  to an instance of the flow analysis problem.  In other words, an
  algorithm to solve one problem can be used as an algorithm to solve
  the other.



\item fundamentals of program analysis.

  A basic understanding of program analysis would be beneficial,
  although I have tried to make plain the connection between analysis
  and evaluation, so a thorough understanding of program {\em
    interpretation} could be sufficient.  Perhaps the standard text on
  the subject is {\em Principles of Program Analysis} by
  \citet{nielson-nielson-hankin}, which I have followed closely
  because it offers an authoritative and precise definition of flow
  analysis.  It is thorough and rigorous, at the risk of slight {\em
    rigor mortis}.  Shivers' dissertation, {\em Control-Flow Analysis
    of Higher-Order Languages}, contains the original development of
  $k$CFA and is replete with intuitions and explanations.

\item logic and proof theory. 

  The reader should be comfortable with the basics of propositional
  logic, such as De Morgan duality, modus ponens, etc.  The reader is
  also assumed to be comfortable with sequent calculi, and in
  particular sequents for linear logic.
  \citet{girard-proofs-and-types} provides a solid and accessible
  foundation.

All of the theorems will be accessible, but without this background,
only a small number of the more supplemental proofs will be
inaccessible.  Fear not if this is not your cup of meat.

\end{itemize}



\subsection*{Previously published material}

Portions of this dissertation are derived from material previously
published in the following papers, written jointly with Harry Mairson:
\begin{enumerate}
\item Relating Complexity and Precision in Control Flow Analysis.  In
{\em Proceedings of the 12th International Conference on Functional
Programming}, Frieburg, Germany. \citet{vanhorn-mairson-icfp07}.

\item Flow Analysis, Linearity, and PTIME. In {\em The 15th
International Static Analysis
Symposium}, Valencia, Spain. \citet{vanhorn-mairson-sas08}.

\item Deciding $k$CFA is complete for EXPTIME. In {\em Proceedings of
the 13th International Conference on Functional Programming},
Victoria, BC, Canada. \citet{vanhorn-mairson-icfp08}.

\end{enumerate}

\cleardoublepage
\phantomsection
\addcontentsline{toc}{chapter}{Contents}
\tableofcontents
\cleardoublepage
\phantomsection
\addcontentsline{toc}{chapter}{List of Figures}
\listoffigures

\pagenumbering{arabic}
\chapter{Introduction}
\label{chap:introduction}

We analyze the computational complexity of flow analysis for
higher-order languages, yielding a number of novel insights: $k$CFA is
provably intractable; 0CFA and its approximations are inherently
sequential; and analysis and evaluation of linear programs are
equivalent.

\section{Overview}
\label{sec-overview}
Predicting the future is hard.

Nevertheless, such is the business of an optimizing compiler: it reads
in an input program, predicts what will happen when that program is
run, and then---based on that prediction---outputs an optimized
program.

Often these predictions are founded on semantics-based program
analysis \cite{cousot-cousot-popl77, cousot-cousot-jlc92,
  muchnick-jones-81, nielson-nielson-hankin}, which aims to discover
the run-time behavior of a program {\em without actually running it}
\cite[page xv]{muchnick-jones-81}.  But as a natural consequence of
Rice's theorem \citeyearpar{rice}, a perfect prediction is almost
always impossible.  So {\em tractable} program analysis must
necessarily trade exact evaluation for a safe, computable
approximation to it.  This trade-off induces a fundamental dichotomy
at play in the design of program analyzers and optimizing compilers.
On the one hand, the more an analyzer can discover about what will
happen when the program is run, the more optimizations the compiler
can perform.  On the other, compilers are generally valued not only
for producing fast code, but doing so quickly and efficiently; some
optimizations may be forfeited because the requisite analysis is too
difficult to do in a timely or space-efficient manner.


As an example in the extreme, if we place {\em no limit} on the
resources consumed by the compiler, it can perfectly predict the
future---the compiler can simply simulate the running of the program,
watching as it goes.  When (and if) the simulation completes, the
compiler can optimize with perfect information about what will happen
when the program is run.  With good reason, this seems a bit like
cheating.

So at a minimum, we typically expect a compiler will eventually finish
working and produce an optimized program.  (In other words, we expect
the compiler to compute within bounded resources of time and space).
After all, what good is an optimizing compiler that never finishes?

But by requiring an analyzer to compute within bounded resources, we
have necessarily and implicitly limited its ability to predict the
future.

As the analyzer works, it {\em must} use some form of approximation;
knowledge must be given up for the sake of computing within bounded
resources.  Further resource-efficiency requirements may entail
further curtailing of knowledge that a program analyzer can discover.
%
%
But the relationship between approximation and efficiency is far from
straightforward.  Perhaps surprisingly, as has been observed
empirically by researchers
\cite{wright-jagannathan-toplas98,jagannathan-etal-popl98,might-shivers-icfp06},
added precision may avoid needless computation induced by
approximation in the analysis, resulting in computational {\em
  savings}---that is, better information can often be produced faster
than poorer information.  So what exactly is the analytic relationship
between forfeited information and resource usage for any given design
decision?

In trading exact evaluation for a safe, computable approximation to
it, analysis negotiates a compromise between complexity and precision.
But what exactly are the trade-offs involved in this negotiation?  For
any given design decision, what is given up and what is gained?  What
makes an analysis rich and expressive?  What makes an analysis fast
and resource-efficient?

We examine these questions in the setting of {\em flow analysis}
\cite{jones81,sestoft-ms,sestoft-fpca89,shivers-pldi88,shivers-phd,midtgaard-07}, a
fundamental and ubiquitous static analysis for higher-order
programming languages.  It forms the basis of almost all other
analyses and is a much-studied component of compiler technology.

Flow analysis answers basic questions such as ``what functions can be
applied?,'' and ``to what arguments?''  These questions specify
well-defined, significant {\em decision problems}, quite apart from
any algorithm proposed to solve them.  This dissertation examines the
inherent computational difficulty of deciding these problems.







If we consider the most useful analysis the one which yields complete
and perfectly accurate information about the running of a program, then
clearly this analysis is intractable---it consumes the same
computational resources as running the program.  At the other end of
the spectrum, if the least useful analysis yields no information about
the running of a program, then this analysis is surely feasible, but
useless.


If the design of software is really a science, we have to understand
the trade-offs between the running time of static analyzers, and the
accuracy of their computations. 

There is substantial empirical experience, which gives a partial
answer to these questions.  However, despite being the fundamental
analysis of higher-order programs, despite being the subject of
investigation for over twenty-five years, and the great deal of
expended effort deriving clever ways to tame the cost, there has
remained a poverty of analytic knowledge on the complexity of flow
analysis, the essence of how it is computed, and where the sources of
approximation occur that make the analysis work.  

This dissertation is intended to repair such lacunae in our
understanding.



\section{Summary of Results}

\begin{itemize}
\item Normalization and analysis are equivalent for linear programs.

\item 0CFA and other monovariant flow analyses are complete for \ptime.

\item 0CFA of typed, $\eta$-expanded programs is complete for \logspace.

\item $k$CFA is complete for \exptime\ for all $k>0$.
\end{itemize}

\section{Details}

\subsection{Linearity, Analysis and Normalization}
\begin{itemize}
\item Normalization and analysis are equivalent for linear programs.
\end{itemize}

Although variants of flow analysis abound, we identify a core
language, the linear $\lambda$-calculus, for which all of these
variations coincide.  In other words, for {\em linear}
programs---those written to use each bound variable exactly once---all
known flow analyses will produce equivalent information.  

It is straightforward to observe that in a {\em linear}
$\lambda$-term, each abstraction $\lambda x.e$ can be applied to at
most one argument, and hence the abstracted value can be bound to at
most one argument.
Generalizing this observation, analysis of a linear $\lambda$-term
coincides exactly with its evaluation.  So not only are the varying
analyses equivalent to each other on linear terms, they are all
equivalent to evaluation.

Linearity is an equalizer among variations of static analysis, and a
powerful tool in proving lower bounds.

\subsection{Monovariance and \ptime}
\begin{itemize}
\item 0CFA and other monovariant flow analyses are complete for \ptime.
\end{itemize}

By definition, a {\em monovariant} analysis (e.g.~0CFA), does not
distinguish between occurrences of the same variable bound in
different calling contexts.  But the distinction is needless for
linear programs and analysis becomes evaluation under another name.
This opens the door to proving lower bounds on the complexity of the
analysis by writing---to the degree possible---computationally
intensive, linear programs, which will be faithfully executed by the
analyzer rather than the interpreter.

We rely on a symmetric coding of Boolean logic in the linear
$\lambda$-calculus to simulate circuits and reduce the 0CFA decision
problem to the canonical \ptime\ problem, the circuit value problem.
This shows, since the inclusion is well-known, that 0CFA is complete
for \ptime.  Consequently, 0CFA is inherently sequential and there is
no fast parallel algorithm for 0CFA (unless \ptime\ $=$ \nc).
Moreover, this remains true for a number of {\em further
  approximations} to 0CFA.

The best known algorithms for computing 0CFA are often not practical
for large programs. Nonetheless, information can be given up in the
service of quickly computing a necessarily less precise analysis.  For
example, by forfeiting 0CFA's notion of directionality, algorithms for
Henglein's simple closure analysis run in near linear time
\citeyearpar{henglein92d}.  Similarly, by explicitly bounding the
number of passes the analyzer is allowed over the program, as in
Ashley and Dybvig's sub-0CFA \citeyearpar{ashley-dybvig-toplas98}, we
can recover running times that are linear in the size of the program.
But the question remains: Can we do better?  For example, is it
possible to compute these less precise analyses in logarithmic space?
We show that without profound combinatorial breakthroughs (\ptime\ $=$
\logspace), the answer is no.  Simple closure analysis, sub-0CFA, and
other analyses that approximate or restrict 0CFA, {\em require}---and
are therefore, like 0CFA, complete for---polynomial time.

\subsection{0CFA with \texorpdfstring{$\eta$}{Eta}-Expansion and \logspace}
\begin{itemize}
\item 0CFA of typed, $\eta$-expanded programs is complete for \logspace.
\end{itemize}

We identify a restricted class of functional programs whose 0CFA
decision problem may be simpler---namely, complete for
\logspace.  Consider programs that are simply typed, and where a
variable in the function position or the argument position of an
application is fully $\eta$-expanded.  This case---especially, but not
only when the programs are linear---strongly resembles multiplicative
linear logic with {\em atomic} axioms. 

We rely on the resemblance to bring recent results on the complexity
of normalization in linear logic to bear on the analysis of
$\eta$-expanded programs resulting in a \logspace-complete variant of
0CFA.

\subsection{\kcfa\ and \exptime}
\begin{itemize}
\item $k$CFA is complete for \exptime\ for all $k>0$.
\end{itemize}

We give an exact characterization of the computational complexity of
the $k$CFA hierarchy.  For any $k > 0$, we prove that the control flow
decision problem is complete for deterministic exponential time.  This
theorem validates empirical observations that such control flow
analysis is intractable.  It also provides more general insight into
the complexity of abstract interpretation.

A fairly straightforward calculation shows that $k$CFA can be computed
in exponential time.  We show that the naive algorithm is essentially
the best one.  There is, in the worst case---and plausibly, in
practice---no way to tame the cost of the analysis.  Exponential time
is required.

\chapter{Foundations}
\label{chap:foundations}

The aim of flow analysis is to safely approximate an answer the
question:\footnote{See for example the transparencies accompanying
  Chapter 3 ``Control Flow Analysis'' of
  \citet{nielson-nielson-hankin}:
  \url{http://www2.imm.dtu.dk/~riis/PPA/ppa.html}}
\begin{center}\it
For each function application, which functions may be applied?
\end{center}




Analysis can easily be understood as the safe approximation to program
evaluation.  It makes sense, then, to first consider evaluation in
detail.  In the following sections, an evaluation function
($\mathcal{E}$) is defined, from which an instrumented variation
($\mathcal{I}$) is derived and abstracted to obtain the abstract
evaluator ($\mathcal{A}$).  Finally, we review basic concepts from
complexity theory and sketch our approach to proving lower bounds.

\section{Structure and Interpretation}

The meaningful phrases of a program are called {\em expressions}, the
process of executing or interpreting these expressions is called {\em
evaluation}, and the result of evaluating an expression is called a
{\em value} \cite{reynolds-acm72}.

We will consider a higher-order applicative programming language based
on the $\lambda$-calculus, in which evaluation is environment based
and functional values are represented using closures.
The syntax of the language is given by the following grammar:
\begin{displaymath}
\begin{array}{l@{\quad}l@{\quad}l}
\Exp & e ::= x\ |\ e\;e\ |\ \lambda x.e & \mbox{expressions}
\end{array}
\end{displaymath}

\begin{quotation}%
{\bf Note to the reader:} This may seem a rather minimal programming
language and you may wonder what the broader applicability of these
results are in the face of other language features.  But as noted
earlier, this dissertation is concerned with {\em lower bounds} on
static analysis.  By examining a minimal, core language, all results
will immediately apply to any language which includes this core.  In
other words, the more restricted the subject language, the broader the
applicability.

It may be that the lower bound can be improved in the presence of some
language features, that is, the given set of features may make
analysis provably harder, but it certainly can not make it any
easier.\footnote{This is to be contrasted with, for example, a type
  soundness theorem, where it is just the opposite: adding new language
  feature may revoke soundness.

  For similar remarks, see the discussion in section 2 of
  \citet{reps-ai96} concerning the benefits of formulating an analysis
  problem in ``trimmed-down form,'' which not only leads to a wider
  applicability of the lower bounds, but also ``allows one to gain
  greater insight into exactly what aspects of an
  interprocedural-analysis problem introduce what computational
  limitations on algorithms for these problems.''

  In other words, lower bounds should be derived not by piling feature
  on top of feature, but by removing the weaknesses and restrictions
  that make additional features appear necessary.}
\end{quotation}

Following \citet{landin-64}, substitution is modelled using {\em
environments}.  Procedures will be represented as {\em closures}, a
$\lambda$-term together with its lexical environment, which closes
over the free variables in the term, binding variables to values.

We use $\rho$ to range over {\em environments} (mappings from
variables to closures), and $v$ to range over {\em closures} (pairs
consisting of a term and an environment that closes the term).  The
empty environment (undefined on all variables) is denoted $\emptyenv$,
and we occasionally abuse syntax and write the closed term $e$ in
place of the closure $\langle e,\emptyenv\rangle$.  Environment extension
is written $\rho[x\mapsto v]$ and we write $\emptyenv[x\mapsto v]$ as
$[x\mapsto v]$ and $[x_1\mapsto v_1]\dots[x_n\mapsto v_n]$ as
$[x_1\mapsto v_1,\dots,x_n\mapsto v_n]$.

\begin{displaymath}
\begin{array}{l@{\quad}l@{\quad}l}
\Env & \rho \in \Var \rightharpoonup \Val & \mbox{environments}\\
\Val & v \in \langle\Exp,\Env\rangle & \mbox{closures}
\end{array}
\end{displaymath}

The meaning of an expression is given in terms of an evaluation
function, or interpreter.  Following \citet[Chapter 4,
``Metalinguistic Abstraction'']{sicp}, an interpreter is defined as
follows:
\begin{quotation}%
An {\em evaluator} (or {\em interpreter}) for a programming language
is a procedure that, when applied to an expression of the language,
performs the actions required to evaluate that expression.
\end{quotation}

\begin{figure}
\begin{displaymath}
\begin{array}{lcl}
\evalf & : & \Exp \times \Env \rightharpoonup \Val\\
\\
\eval{x}\rho & = & \rho(x)\\
\eval{\lambda x.e}\rho & = & \langle\lambda x.e,\rho'\rangle\\
\ &\ & \quad\mbox{where } \rho' = \rho \restrict \fv{\lambda x.e}\\
\eval{e_1 e_2}\rho & = &
       \mbox{\bf let }\langle\lambda x.e_0,\rho'\rangle = 
                      \eval{e_1}\rho \mbox{\bf\ in }\\   
\ &\ & \mbox{\bf let }v = 
                      \eval{e_2}\rho \mbox{\bf\ in }\\   
\ &\ &   \quad   \eval{e_0}{\rho'[x \mapsto v]}
\end{array}
\end{displaymath}
\caption{Evaluator $\evalf$.}
\label{figure-eval}
\end{figure}

The evaluation function for the language is given in
\autoref{figure-eval}.  We say $e$ evaluates to $v$ under environment
$\rho$ iff $\eval{e}\rho = v$ and computing the evaluation function
defines our notion of the ``running of a program.''  Some examples of
evaluation:
\begin{eqnarray*}
\eval{\lambda x.x}{\emptyenv} & = & \langle \lambda x.x, \emptyenv\rangle\\
\eval{(\lambda x.\lambda z. x) (\lambda y.y)}\emptyenv & = & \langle \lambda z.x,[x\mapsto \langle \lambda y.y,\emptyenv\rangle]\rangle\\
\eval{(\lambda f.ff(\lambda y.y))(\lambda x.x)}\emptyenv & = & \langle\lambda y.y,\emptyenv\rangle
\end{eqnarray*}

This gives us a mostly {\em extensional} view of program
behaviour---evaluation maps programs to values, but offers little
information regarding {\em how} this value was computed.  For the sake
of program optimization it is much more useful to know about
operational (or {\em intensional}) properties of programs.  These
properties are formulated by appealing to an ``instrumented
interpreter,'' which is the subject of the next section.  Intuitively,
the instrumented evaluator works just like the un-instrumented
evaluator, but additionally maintains a complete history of the
operations carried out during the running of the program.



\section{Instrumented Interpretation}
\label{section-instrumented}

Instrumented (or concrete) interpretation carries out the running of
program while maintaining a trace of the operations performed, thus
providing an operational history of evaluation.  A suitable
reformulation of the original definition of an evaluator to
incorporate instrumentation is then:

\begin{quotation}%
An {\em instrumented evaluator} (or {\em instrumented interpreter})
for a programming language is a procedure that, when applied to an
expression of the language, performs {\em and records} the actions
required to evaluate that expression.
\end{quotation}

Exactly which actions should be record will vary the domain of any
given static analysis and there is no universal notion of a program
trace, but for flow analysis, the interesting actions are:

\begin{itemize}
\item Every time the value of a subexpression is computed, record its
value and the context in which it was evaluated.

\item Every time a variable is bound, record the value and context in
which it was bound.
\end{itemize}

These actions are recorded in a {\em cache}, and there is one for each
kind of action:
\begin{eqnarray*}
\ecache & : & \Lab \times \Delta \rightharpoonup \Val \\
\eenv   & : & \Var \times \Delta \rightharpoonup \Val\\
\Cache & = & (\Lab \times \Delta \rightharpoonup \Val) \times
             (\Var \times \Delta \rightharpoonup \Val)\\
\end{eqnarray*}
The $\ecache$ cache records the result, or returned value, of each
subcomputation, and the $\eenv$ cache records each binding of a
variable to a value.  Given the label of a subexpression ($\Lab$) and
a description of the context ($\Delta$), $\ecache$ returns the value
produced by that subexpression evaluated in that context.  Given the
name of a variable ($\Var$) and a description of the context
($\Delta$), $\eenv$ returns the value bound to that variable in that
context.  The $\ecache$ cache is a partial function since a
subexpression 1) may not produce a value, it may diverge, or 2) may
not be evaluated in the given context.  The $\eenv$ cache is partial
for analogous reasons.

The set $\Lab$ is used to index subexpressions.  It can easily be
thought of as the implicit source location of the expression, but our
formalism will use an explicit labelling scheme.  We use $\ell$ to
range over labels.  The syntax of the source language is given by the
following grammar, and programs are assumed to be uniquely labelled:
\begin{displaymath}
\begin{array}{l@{\quad}l@{\quad}l}
\Exp & e ::= t^\ell & \mbox{expressions (or labeled terms)}\\
\Term & t ::= x\ |\ (e\;e)\ |\ (\lambda x.e) & \mbox{terms (or unlabeled expressions)}
\end{array}
\end{displaymath}

Irrelevant labels are frequently omitted for presentation purposes.

The set $\Delta$ consists of {\em contours}, which are strings of
labels from application nodes in the abstract syntax of the program.
A string of application labels describes the context under which the
term evaluated.

A variable may be bound to any number of values during the course of
evaluation.  Likewise, a subexpression that occurs once syntactically
may evaluate to any number of values during evaluation.  So asking
about the flows of a subexpression is ambiguous without further
information.  Consider the following example, where \True\ and \False\
are closed, syntactic values:
\begin{displaymath}
(\lambda f.f(f\; \True)) (\lambda y. \False)
\end{displaymath}
During evaluation, $y$ gets bound to both \True\ and \False---asking
``what was $y$ bound to?'' is ambiguous.  But let us label the
applications in our term:
\begin{displaymath}
((\lambda f.(f(f\; \True)^1)^2) (\lambda y. \False))^3
\end{displaymath}
Notice that $y$ is bound to different values within different
contexts.  That is, $y$ is bound to \True\ when evaluating the
application labeled 1, and $y$ is bound to \False\ when evaluating the
application labeled 2.  Both of these occur while evaluating the
outermost application, labeled 3.  A string of these application
labels, called a {\em contour}, uniquely describes the {\em context}
under which a subexpression evaluates.  Adopting the common convention
of writing a context as an expression with a hole ``$[\;]$'' in it
\cite{pllc}, the following contours describe the given contexts:
\begin{eqnarray*}
3 2 1 & \mbox{describes} & ((\lambda f.(f [\;]^1)^2) (\lambda y.\False))^3\\
3 2 & \mbox{describes} & ((\lambda f.[\;]^2) (\lambda y.\False))^3
\end{eqnarray*}
So a question about what a subexpression evaluates to {\em within a
  given context} has an unambiguous answer.  The interpreter,
therefore, maintains an environment that maps each variable to a
description of the context in which it was bound.  Similarly, flow
questions about a particular subexpression or variable binding must be
accompanied by a description of a context.  Returning to the example,
the binding cache would give $\eenv(y,321) = \True$ and $\eenv(y,32) =
\False$.

The set $\Val$ consists of closures, however the environment component
of the closures are non-standard.  Rather than mapping variables to
values, these environments map variables to contours; the contour
describes the context in which the variable was bound, so the value
may be retrieved from the $\eenv$ cache.  In other words, these
environments include an added layer of indirection through the cache:
variables are mapped not to their values but the location of their
definition site, where the value can be found.

So we have the following data definitions:
\begin{displaymath}
\begin{array}{cclcll}
\delta & \in & \Delta & = & \Lab^{\star}  & \mbox{contours}\\
     v & \in & \Val & = & \Term \times \Env & \mbox{(contour) values}\\
  \rho & \in & \Env & = & \Var \rightharpoonup \Delta & \mbox{(contour) environments}
\end{array}
\end{displaymath}
Note that this notation overloads the meaning of $\Val$, $\Exp$, and
$\Env$ with that given in the previous section.  It should be clear
from setting which is meant, and when both meanings need to be used in
the same context, the latter will be refered to as {\em contour
values} and {\em contour environments}.

The cache is computed by the instrumented interpreter, $\mathcal{I}$,
the instrumented, intentional analog of $\mathcal{E}$.  It can be
concisely and intuitively written as an imperative program that
mutates an initially empty cache, as given in
\autoref{fig-i-imperative}.

\begin{figure}[h]
\begin{displaymath}
\begin{array}{lcl}
\ivf & : & \Exp \times \Env \times \Delta \rightharpoonup \Unit\\
\\
\iv{x^\ell}^\rho_\delta & = & \ecache(\ell,\delta) \leftarrow \eenv(x,\rho(x))\\
\iv{(\lambda x.e)^\ell}^\rho_\delta & = & \ecache(\ell,\delta) \leftarrow \langle\lambda x.e,\rho'\rangle\\
\ &\ & \quad\mbox{where } \rho' = \rho \restrict \fv{\lambda x.e}\\
\iv{(t^{\ell_1}t^{\ell_2})^\ell}^\rho_\delta & = & \iv{t^{\ell_1}}^{\rho}_\delta;\ \iv{t^{\ell_2}}^{\rho}_\delta;\\
\ &\ & \mbox{let }\langle\lambda x.t^{\ell_0},\ce'\rangle = \ecache(\ell_1,\delta)\mbox{ in}\\
\ & \ & \quad \eenv(x,\delta\ell) \leftarrow \ecache(\ell_2,\delta);\\
\ & \ & \quad \iv{t^{\ell_0}}_{\delta\ell}^{\rho'[x\mapsto \delta\ell]};\\
\ & \ & \quad \ecache(\ell,\delta) \leftarrow \ecache(\ell_0,\delta\ell)
\end{array}
\end{displaymath}
\caption{Instrumented evaluator $\mathcal{I}$, imperative style.}
\label{fig-i-imperative}
\end{figure}

$\iv{t^\ell}^\ce_\delta$ evaluates $t$ and writes the result of
evaluation into the $\ecache$ cache at location $(\ell,\delta)$.  The
notation $\ecache(\ell,\delta) \leftarrow v$ means that the cache is
mutated so that $\ecache(\ell,\delta) = v$, and similarly for
$\eenv(x,\delta) \leftarrow v$.  The type $\Unit$ is used here to
emphasize the imperative nature of the instrumented evaluator; no
meaningful return value is produced, the evaluator is run only for
effect on the caches.  The notation $\delta\ell$ denotes the
concatenation of contour $\delta$ and label $\ell$.  The symbol
$\epsilon$ denotes the empty contour.

We interpret $\ecache(\ell,\delta) = v$ as saying, ``the expression
labeled $\ell$ {\em evaluates to} $v$ in the context described by
$\delta$,'' and $\eenv(x,\delta) = v$ as ``the variable $x$ is {\em
  bound to} $v$ in the context described by $\delta$.''  
Conversely, we say ``$v$ {\em flows to} the expression labelled $\ell$
into the context described by $\delta$,'' and ``$v$ {\em flows to} the
binding of $x$ in the context described by $\delta$,'' respectively.
We refer to a fact such as, $\ecache(\ell,\delta) = v$ or
$\eenv(x,\delta) = v $, as a {\em flow}.
The instrumented interpreter works by accumulating a set of flows as
they occur during evaluation.

Notice that this evaluator does not return values---it writes them
into the cache: if the expression $t^\ell$ evaluates in the contour
$\delta$ to $v$, then $\ecache(\ell,\delta)$ is assigned $v$.  When
the value of a subexpression is needed, as in the application case,
the subexpression is first interpreted (causing its value to be
written in the cache) and subsequently retrieved from the $\ecache$
cache.  When the value of a variable is needed, it is retrieved from
the $\eenv$ cache, using the contour environment to get the
appropriate binding.  

In other words, $\ecache$ is playing the role of a global return
mechanism, while $\eenv$ is playing the role of a global environment.

Although the evaluator is mutating the cache, each location is written
into just once.  A straight-forward induction proof shows that the
current label together with the current contour---which constitute the
cache address that will be written into---forms a unique string.

Returning to the earlier example, the cache constructed by
\begin{displaymath}
\iv{((\lambda f.(f(f\; \True)^1)^2) (\lambda y. \False))^3}_\epsilon^\emptyenv
\end{displaymath}
includes the following entries:
\begin{eqnarray*}
\eenv(f,3) & = & \lambda y.\False\\
\eenv(y,321) & = & \True\\
\eenv(y,32) & = & \False\\
\ecache(1,32) & = & \lambda y.\False\\
\ecache(3,\epsilon) & = & \False
\end{eqnarray*}

The evaluator can be written in a functional style by threading the
cache through the computation as seen in \autoref{fig-i-functional}.
\begin{figure}[h]
\begin{displaymath}
\begin{array}{lcl}
\ivf & : & \Exp \times \Env \times \Delta \times \Cache \rightharpoonup \Cache\\
\\
\iv{x^\ell}^\rho_\delta\;\ecache,\eenv & = & 
\ecache[(\ell,\delta) \mapsto \eenv(x,\ce(x))], \eenv\\
\iv{(\lambda x.e)^\ell}^\rho_\delta\;\ecache,\eenv & = & 
\ecache[(\ell,\delta) \mapsto \langle\lambda x.e,\ce'\rangle], \eenv\\
\ &\ & \quad\mbox{where } \rho' = \rho \restrict \fv{\lambda x.e}\\
\iv{(t^{\ell_1}t^{\ell_2})^\ell}^\rho_\delta\;\ecache,\eenv & = & 
\mbox{let }\ecache_1,\eenv_1 = \iv{t^{\ell_1}}_{\ce}^\delta\;\ecache,\eenv\mbox{ in}\\
\ & \ & \mbox{let }\ecache_2,\eenv_2 = \iv{t^{\ell_2}}_{\ce}^\delta\;\ecache_1,\eenv_1\mbox{ in}\\
\ &\ & \mbox{let }\langle\lambda x.t^{\ell_0},\ce'\rangle = \ecache_2(\ell_1,\delta)\mbox{ in}\\
\ &\ & \mbox{let }\eenv_3 = \eenv_2[(x,\delta\ell) \mapsto \ecache_3(\ell_2,\delta)]\mbox{ in}\\
\ &\ & \mbox{let }\ecache_3,\eenv_4 = \iv{t^{\ell_0}}_{\ce'[x\mapsto \delta\ell]}^{\delta\ell}\;\ecache_2,\eenv_3\mbox{ in}\\
\ &\ & \mbox{let }\ecache_4 = \ecache_3[(\ell,\delta) \mapsto \ecache_3(\ell_0,\delta\ell)]\mbox{ in}\\
\ &\ & \quad \ecache_4,\eenv_4
\end{array}
\end{displaymath}
\caption{Instrumented evaluator $\mathcal{I}$, functional style.}
\label{fig-i-functional}
\end{figure}

In a more declarative style, we can write a specification of {\em
  acceptable caches}; a cache is acceptable iff it records at least
all of the flows which occur during instrumented evaluation.  The
smallest cache satisfying this acceptability relation is the one that
is computed by the above interpreter, clearly.  The acceptability
relation is given in \autoref{fig-i-declarative}.  It is same cache
acceptability relation can be obtained from that given by \citet[Table
3.10, page 192]{nielson-nielson-hankin} for $k$CFA by letting $k$ be
arbitrarily large.  (Looking ahead, the idea of $k$CFA is that the
evaluator will begin to lose information and approximate evaluation
after a contour has reached a length of $k$.  If $k$ is sufficiently
large, approximation never occurs.  So the acceptability relation of
\autoref{fig-i-declarative} can also be seen as the specification of
``$\infty$CFA''.  For any program that terminates, there is a $k$ such
that performing $k$CFA results in a cache meeting the specification of
\autoref{fig-i-declarative}.  In other words, for any program that
halts, there is a $k$ such that $k$CFA runs it.)
\begin{figure}[h]
\begin{displaymath}
\begin{array}{lcl}
\ecache,\eenv \kmodels{\ce}{\delta} x^\ell & 
\mbox{ iff } & 
\ecache(\ell,\delta) = \eenv(x,\ce(x)) \\
\ecache,\eenv \kmodels{\ce}{\delta}  (\lambda x .e)^\ell & 
\mbox{ iff } & 
\ecache(\ell,\delta) = \langle\lambda x .e,\ce'\rangle\\
\ & \ & \quad  \mbox{where } \ce' = ce \restrict \fv{\lambda x.e}\\
\ecache,\eenv \kmodels{\ce}{\delta} (t^{\ell_1}\ t^{\ell_2})^\ell & 
\mbox{ iff } & 
\ecache\kmodels{\ce}{\delta} t^{\ell_1} \wedge
\ecache\kmodels{\ce}{\delta} t^{\ell_2} \;\wedge \\
\ & \ & \mbox{let }\langle\lambda x.t^{\ell_0},\ce'\rangle = \ecache(\ell_1,\delta) \mbox{ in}\\
\ & \ & \quad\eenv(x,\delta\ell) = \ecache(\ell_2,\delta) \;\wedge \\
\ & \ & \quad\ecache,\eenv\kmodels{\ce'[x\mapsto\delta\ell]}{\delta\ell} t^{\ell_0}\; \wedge  \\
\ & \ & \quad\ecache(\ell,\delta) = \ecache(\ell_0,\delta\ell)
\end{array}
\end{displaymath}
\caption{Exact cache acceptability, or Instrumented evaluator $\mathcal{I}$, declarative style.}
\label{fig-i-declarative}
\end{figure}

There may be a multitude of acceptable analyses for a given program,
so caches are partially ordered by:

\begin{displaymath}
\begin{array}{lllcl}
\ecache & \sqsubseteq & \ecache' & 
\mbox{ iff } & 
\forall \ell,\delta : \ecache(\ell,\delta) = v\Rightarrow \ecache'(\ell,\delta) = v\\
\eenv & \sqsubseteq & \eenv' & 
\mbox{ iff } & 
\forall x,\delta : \eenv(x,\delta) = v\Rightarrow \eenv'(x,\delta) = v
\end{array}
\end{displaymath}

Generally, we are concerned only with the {\em least} such caches with
respect to the domain of variables and labels found in the given
program of interest.

Clearly, because constructing such a cache is equivalent to evaluating
a program, it is not effectively computable.

All of the flow analyses considered in this dissertation can be
thought of as an abstraction (in the sense of being a {\em computable
  approximation}) to this instrumented interpreter, which not only
evaluates a program, but records a history of {\em flows}.

\section{Abstracted Interpretation}
\label{sec:ai}

Computing a complete program trace can produce an arbitrarily large
cache.  One way to regain decidability is to bound the size of the
cache.  This is achieved in $k$CFA by bounding the length of contours
to $k$ labels.

If during the course of evaluation, or more appropriately {\em
  analysis}, the contour is extended to exceed $k$ labels, the
analyzer will truncate the string, keeping the $k$ most recent labels.

But now that the cache size is bounded, a sufficiently large
computation will exhaust the cache.  Due to truncation, the uniqueness
of cache locations no longer holds and there will come a point when a
result needs to be written into a location that is already occupied
with a different value.  If the analyzer were to simply overwrite the
value already in that location, the analysis would be unsound.
Instead the analyzer must consider {\em both} values as flowing out of
this point.

This in turn can lead to further approximation.  Suppose a function
application has two values given for flow analysis of the operator
subterm and another two values given for the operand.  The analyzer
must consider the application of each function to each argument.

$k$CFA is a safe, computable approximation to this instrumented
interpreter; it is a kind of abstract interpretation
\cite{cousot-cousot-popl77,jones-nielson,nielson-nielson-hankin}.
Rather than constructing an {\em exact} cache $\ecache,\eenv$, it
constructs an {\em abstract} cache $\cache,\aenv$:
\begin{displaymath}
\begin{array}{cclcll}
\cache & : & \Lab \times \Delta \rightarrow \widehat{\Val} \\
\aenv & : & \Var \times \Delta \rightarrow \widehat{\Val}\\
\\
\ACache & = & (\Lab \times \Delta \rightarrow \widehat{\Val}) \times
              (\Var \times \Delta \rightarrow \widehat{\Val})
\end{array}
\end{displaymath}
which maps labels and variables, not to values, but to sets of
values---{\em abstract values}:
\begin{displaymath}
\begin{array}{cclcll}
\hat{v}      & \in & \widehat{\Val}   & = & \mathcal{P}(\Term \times \Env) & \mbox{abstract values}.
\end{array}
\end{displaymath}

Approximation arises from contours being bounded at length $k$.  If
during the course of instrumented evaluation, the length of the
contour would exceed length $k$, then the $k$CFA abstract interpreter
will truncate it to length $k$.  In other words, only a partial
description of the context can be given, which results in ambiguity.
A subexpression may evaluate to two distinct values, but within
contexts which are only distinguished by $k+1$ labels.  Questions
about which value the subexpression evaluates to can only supply $k$
labels, so the answer must be {\em both}, according to a sound
approximation.

When applying a function, there is now a set of possible closures that
flow into the operator position.  Likewise, there can be a
multiplicity of arguments.  What is the interpreter to do?  The
abstract interpreter must apply all possible closures to all possible
arguments.

The abstract interpreter $\mathcal{A}$, the imprecise analog of
$\mathcal{I}$, is given in \autoref{fig-a-imperative} using the
concise imperative style.
\begin{figure}[h]
\begin{displaymath}
\begin{array}{lcl}
\avf_k & : & \Exp \times \Env \times \Delta \rightarrow \Unit\\ 
\\
\avk{x^\ell}{\ce}{\delta} & = & \cache(\ell,\delta) \leftarrow \aenv(x,\ce(x))\\
\avk{(\lambda x.e)^\ell}{\ce}{\delta} & = & \cache(\ell,\delta) \leftarrow \{\langle\lambda x.e,\ce'\rangle\}\\
\ &\ & \quad\mbox{where } \ce' = \ce \restrict \fv{\lambda x.e}\\
\avk{(t^{\ell_1} t^{\ell_2})^\ell}{\ce}{\delta} & = & \avk{t^{\ell_1}}{\ce}{\delta}; \avk{t^{\ell_2}}{\ce}{\delta};\\
\ &\ &\mbox{\bf for each }\langle\lambda x.t^{\ell_0},\ce'\rangle \mbox{\bf\ in\ } \cache(\ell_1,\delta) \mbox{\bf\ do}\\
\ &\ & \quad\aenv(x,\lceil\delta\ell\rceil_k) \leftarrow \cache(\ell_2,\delta);\\
\ &\ & \quad\avk{t^{\ell_0}}{\ce'[x\mapsto\lceil\delta\ell\rceil_k]}{\lceil\delta\ell\rceil_k};\\
\ &\ & \quad\cache(\ell,\delta)\leftarrow\cache(\ell_0,\lceil\delta\ell\rceil_k)
\end{array}
\end{displaymath}
\caption{Abstract evaluator $\avf$, imperative style.}
\label{fig-a-imperative}
\end{figure}
We write $\cache(\ell,\delta) \leftarrow \hat{v}$ (or $\aenv(x,\delta)
\leftarrow \hat{v}$) to indicate an updated cache where $\ell,\delta$
(resp., $x,\delta$) maps to $\cache(\ell,\delta) \cup \hat{v}$ (resp.,
$\aenv(\ell,\delta) \cup \hat{v}$).  The notation
$\lceil\delta\rceil_k$ denotes $\delta$ truncated to the rightmost
(i.e., most recent) $k$ labels.

There are many ways the concise imperative abstract evaluator can be
written in a more verbose functional style, and this style will be
useful for proofs in the following sections. 

\begin{figure}[h]
\begin{displaymath}
\begin{array}{lcl}
\avf_k & : & \Exp \times \Env \times \Delta \times \ACache\rightarrow \ACache\\ 
\\
\avk{x^\ell}{\ce}{\delta}\;\cache,\aenv & = & 
\cache[(\ell,\delta) \mapsto \aenv(x,\ce(x))], \aenv\\
\avk{(\lambda x.e)^\ell}{\ce}{\delta}\;\cache,\aenv & = & 
\cache[(\ell,\delta) \mapsto \{\langle\lambda x.e,\ce'\rangle\}], \aenv\\
\ &\ & \quad\mbox{where } \ce' = \ce \restrict \fv{\lambda x.e}\\
\avk{(t^{\ell_1}\ t^{\ell_2})^\ell}{\ce}{\delta}\;\cache,\aenv & = &
\cache_3[(\ell,\delta)\mapsto \cache_3(\ell_0,\delta')],\aenv_3,\mbox{ where}\\
\ &\ &\begin{array}{lcl}
  \delta' & = & \lceil\delta\ell\rceil_k\\
  \cache_1,\aenv_1 & = & \avk{t^{\ell_1}}\ce\delta\;\cache,\aenv\\
  \cache_2,\aenv_2 & = & \avk{t^{\ell_2}}\ce\delta\;\cache_1,\aenv_1\\
  \cache_3,\aenv_3 & = & \\
\multicolumn{3}{c}{
\bigsqcup\nolimits_{\langle\lambda x.t^{\ell_0},\ce'\rangle}^{\cache_2(\ell, \delta)}
\left(\avk{t^{\ell_0}}{\ce'[x\mapsto\delta']}{\delta'}\;\cache_2,\aenv_2[(x,\delta')\mapsto \cache_2(\ell_2,\delta)]\right)}
\end{array}
\end{array}
\end{displaymath}
\caption{Abstract evaluator $\avf$, functional style.}
\label{fig-a-functional}
\end{figure}

Compared to the exact evaluator, contours similarly distinguish
evaluation within contexts described by as many as $k$ application
sites: beyond this, the distinction is blurred.  The imprecision of
the analysis requires that $\avf$ be iterated until the cache reaches
a fixed point, but care must taken to avoid looping in an iteration
since a single iteration of $\avk{e}{\ce}{\delta}$ may in turn make a
recursive call to $\avk{e}{\ce}{\delta}$ under the same contour and
environment.  This care is the algorithmic analog of appealing to the
co-inductive hypothesis in judging an analysis acceptable (described
below).

We interpret $\cache(\ell,\delta) = \{v,\dots\}$ as saying, ``the
expression labeled $\ell$ {\em may evaluate to} $v$ in the context
described by $\delta$,'' and $\eenv(x,\delta) = v$ as ``the variable
$x$ is {\em may be bound to} $v$ in the context described by
$\delta$.''
Conversely, we say ``$v$ {\em flows to} the expression labelled $\ell$
into the context described by $\delta$'' and each $\{v,\dots\}$ ``{\em
  flows out} of the expression labelled $\ell$ in the context
described by $\delta$,'' and ``$v$ {\em flows to} the binding of $x$
in the context described by $\delta$,'' respectively.
We refer to a fact such as, $\cache(\ell,\delta) \ni v$ or
$\aenv(x,\delta) \ni v $, as a {\em flow}.
The abstract interpreter works by accumulating a set of flows as they
occur during abstract interpretation until reaching a fixed point.
Although this overloads the terminology used in describing the
instrumented interpreter, the notions are compatible and the setting
should make it clear which sense is intended.

An acceptable $k$-level control flow analysis for an expression $e$ is
written $\cache,\aenv\kmodels{\ce}{\delta} e$, which states that
$\cache,\aenv$ is an acceptable analysis of $e$ in the context of the
current environment $\ce$ and current contour $\delta$ (for the top
level analysis of a program, these will both be empty).

Just as done in the previous section, we can write a specification of
acceptable caches rather than an algorithm that computes.  The
resulting specification given in \autoref{fig-kcfa-declarative} is
what is found, for example, in \citet{nielson-nielson-hankin}.

\begin{figure}[h]
\begin{displaymath}
\begin{array}{lcl}
\cache,\aenv \kmodels{\ce}{\delta} x^\ell 
& \mbox{ iff } 
& \aenv(x,\ce(x)) \subseteq \cache(\ell,\delta)\\
\cache,\aenv \kmodels{\ce}{\delta} (\lambda x .e)^\ell 
& \mbox{ iff } 
& \langle\lambda x .e,\ce'\rangle \in \cache(\ell,\delta)\\
\ & \ & \quad  \mbox{where } \ce' = ce \restrict \fv{\lambda x.e}\\
\cache,\aenv \kmodels{\ce}{\delta} (t^{\ell_1}\ t^{\ell_2})^\ell &
\mbox{ iff } &
\cache,\aenv\kmodels{\ce}{\delta} t^{\ell_1} \wedge
\cache,\aenv\kmodels{\ce}{\delta} t^{\ell_2} \wedge \\
\ &\ & \quad\forall \langle\lambda x.t^{\ell_0},\ce'\rangle \in \cache(\ell_1,\delta) : \\
\ &\ & \qquad\cache(\ell_2,\delta) \subseteq \aenv(x,\lceil\delta\ell\rceil_k) \wedge\\
\ &\ & \qquad\cache,\aenv\kmodels{ce'[x \mapsto \lceil\delta\ell\rceil_k]}{\lceil\delta\ell\rceil_k} t^{\ell_0} \wedge  \\
\ &\ & \qquad\cache(\ell_0,\lceil\delta\ell\rceil_k) \subseteq \cache(\ell,\delta)
\end{array}
\end{displaymath}
\caption{Abstract cache acceptability, or Abstract evaluator $\mathcal{A}$, declarative style.}
\label{fig-kcfa-declarative}
\end{figure}

There may be a multitude of acceptable analyses for a given program,
so caches are partially ordered by:

\begin{displaymath}
\begin{array}{lllcl}
\cache & \sqsubseteq & \cache' & 
\mbox{ iff } & 
\forall \ell,\delta : \cache(\ell,\delta) = \hat{v}\Rightarrow \hat{v} \subseteq \cache'(\ell,\delta) \\
\aenv & \sqsubseteq & \aenv' & 
\mbox{ iff } & 
\forall x,\delta : \aenv(x,\delta) = \hat{v}\Rightarrow \hat{v}\subseteq \aenv'(x,\delta)
\end{array}
\end{displaymath}

Generally, we are concerned only with the {\em least} such caches with
respect to the domain of variables and labels found in the given
program of interest.

By bounding the contour length, the inductive proof that
$(\ell,\delta)$ was unique for any write into the cache was
invalidated.  Similarly, induction can no longer be relied upon for
verification of acceptability.  It may be the case that proving
$\cache,\aenv\models_\delta^\rho t^\ell$ obligates proofs of other
propositions, which in turn rely upon verification of
$\cache,\aenv\models_\delta^\rho t^\ell$.  Thus the acceptability
relation is defined {\em co-inductively}, given by the greatest fixed
point of the functional defined according to the following clauses of
\autoref{fig-kcfa-declarative}.  Proofs by co-induction would allow
this later obligation to be dismissed by the {\em co-inductive
  hypothesis}.

\paragraph{Fine print:} 
To be precise, we take as our starting point {\em uniform} $k$CFA
rather than a $k$CFA in which,
\begin{eqnarray*}
\ACache & = & (\Lab \times \Env \rightarrow \widehat{\Val}) \times
              (\Var \times \Env \rightarrow \widehat{\Val})
\end{eqnarray*}
The differences are immaterial for our purposes.  See
\citet{nielson-nielson-hankin} for details and a discussion on the use
of coinduction in specifying static analyses.

Having established the foundations of evaluation and analysis, we now
turn to the foundations of our tools and techniques employed in the
investigation of program analysis.

\section{Computational Complexity}

\subsection{Why a Complexity Investigation?}

Static analysis can be understood as a ``technique for computing
conservative approximations of solution for undecidable
problems.''\footnote{Quoted from Michael Schwartzbach's 2009 Static
  Analysis course description from University of Aarhus
  (\url{http://www.brics.dk/~mis/static.html/}, accessed June 3,
  2009).  The same sentiment is expressed in Patrick Cousot's 2005 MIT
  Abstract Interpretation lecture notes on undecidability, complexity,
  automatic abstract termination proofs by semidefinite programming
  (\url{http://web.mit.edu/16.399/www/}).}  Complexity
characterizations therefore offer a natural refinement of that
understanding.

A fundamental question we need to be able to answer is this: what can
be deduced about a long-running program with a time-bounded analyzer?
When we statically analyze exponential-time programs with a
polynomial-time method, there should be a analytic bound on what we
can learn at compile-time: a theorem delineating how exponential time
is being viewed through the compressed, myopic lens of polynomial time
computation.

We are motivated as well by yardsticks such as Shannon's theorem from
information theory \cite{shannon}: specify a bandwidth for
communication and an error rate, and Shannon's results give bounds on
the channel capacity.  We too have essential measures: the time
complexity of our analysis, the asymptotic differential between that
bound and the time bound of the program we are analyzing.  There ought
to be a fundamental result about what information can be yielded as a
function of that differential.  At one end, if the program and
analyzer take the same time, the analyzer can just run the program to
find out everything.  At the other end, if the analyzer does no work
(or a constant amount of work), nothing can be learned.  Analytically
speaking, what is in between?

In the closely related area of pointer analysis, computational
complexity has played a prominent role in the development and
evaluation of analyses.\footnote{See \autoref{sec-pointer-analysis}
  for details.}  It was the starting point for the widely influential
\citet{landi-ryder-pldi92}, according to the authors' retrospective
\citeyearpar{landi-ryder-pldi92retro}.

The theory of computational complexity has proved to be a fruitful
tool in relating seemingly disparate computational problems.  Through
notions of logspace reductions\footnote{Logspace reductions are
  essentially memory efficient translations between instances of
  problems.  The PL-theorist may be most comfortable thinking of these
  reductions as space-efficient compilers.} between problems, what may
have appeared to be two totally unrelated problems can be shown to be,
in fact, so closely related that a method for efficiently computing
solutions to one can be used as a method for efficiently computing the
other, and {\em vice versa}.  For example, at first glance, 0CFA and
circuit evaluation have little to do with each other.  Yet, as shown
in this dissertation, the two problems are intimately related; they
are both complete for \ptime.

There are two fundamental relations a problem can have to a complexity
class.  The problem can be {\em included} in the complexity class,
meaning the problem is no harder than the hardest problems in the
class.  Or, the problem can be a {\em hard} problem within the class,
meaning that no other problem in the class is harder than this one.
When a problem is both included and hard for a given class, it said to
be {\em complete} for that class; it is as hard as the hardest
problems in the class, but no harder.

Inclusion results establish feasibility of analysis---it tells us
analysis can be performed within some upper-bound on resources.  These
results are proven by constructing efficient and correct program
analyzers, either by solving the analysis problem directly or reducing
it to another problem with a known inclusion.

Hardness results, on the other hand, establish the minimum resource
requirements to perform analysis in general.  They can be viewed as
lower-bounds on the difficulty of analysis, telling us, for example,
when no amount of cleverness in algorithm design will improve on the
efficiency of performing analysis.  So while inclusion results have an
existential form: ``there exists an algorithm such that it operates
within these bounded resources,'' hardness results have a universal
form: ``for all algorithms computing the analysis, at least these
resources can be consumed.''

Whereas inclusion results require clever algorithmic insights applied
to a program analyzer, hardness results require clever exploitation of
the analysis to perform computational work.  

Lower bounds are proved by giving reductions---efficient
compilers---for transforming instances of some problem that is known
to be hard for a class, (e.g. circuit evaluation and \ptime) into
instances of a flow analysis problem.
Such a reduction to a flow analysis problem establishes a {\em lower
  bound} on the complexity of flow analysis: solving the flow problem
must be {\em at least as hard} as solving the original problem, which
is known to be of the hardest in the class.

The aim, then, is to solve hard problems by make principled use of the
analysis.  From a programming language perspective, the analyzer can
be regarded as an evaluator of a language, albeit a language with
implicit resource bounds and a (sometimes) non-standard computational
model.  Lower bounds can be proved by writing computationally
intensive programs in this language.  That is, proving lower bounds is
an exercise in clever hacking in the language of analysis.

For flow analysis, inclusion results are largely known.  Simple
algorithmic analysis applied to the program analyzer is often
sufficient to establish an upper bound.

Much work in the literature on flow analysis has been concerned with
finding more and more efficient ways to compute various program
analyses.  But while great effort has been expended in this direction,
there is little work addressing the fundamental limits of this
endeavour.  Lower-bounds tell us to what extent this is possible.

This investigation also provides insight into a more general subject:
the complexity of computing via abstract interpretation.  It stands to
reason that as the computational domain becomes more refined, so too
should computational complexity.  In this instance, the domain is the
size of the abstract cache $\cache$ and the values (namely, {\em
  closures}) that can be stored in the cache.  As the cache size and
number of closures increase\footnote{Observe that since closure
  environments map free variables to contours, the number of closures
  increases when the contour length $k$ is increased.}, so too should
the complexity of computation.  From a theoretical perspective, we
would like to understand better the trade-offs between these various
parameters.

Viewed from another perspective, hardness results can be seen as a
characterization of the {\em expressiveness} of an analysis; it is a
measure of the work an analysis is capable of doing.  The complexity
and expressivity of an analysis are two sides of the same coin and
analyses can be compared and characterized by the class of
computations each captures.  In their definitive study,
\citet{nielson-nielson-hankin} remark, ``Program analysis is still far
from being able to precisely relate ingredients of different
approaches to one another,'' but we can use computational complexity
theory as an effective tool in relating analyses.  Moreover, insights
gleaned from this understanding of analysis can inform future analysis
design.  To develop rich analyses, we should expect larger and larger
classes to be captured.  In short: computational complexity is a means
to both organize and extend the universe of static analyses.

Other research has shown a correspondence between 0CFA and certain
type systems \cite{palsberg-okeefe-toplas95,heintze-sas95} and a
further connection has been made between intersection typing and
$k$CFA \cite{Mossin:97:FlowAnalysis, palsberg-pavlopoulou-jfp01}.
Work has also been done on relating the various flavors of control
flow analysis, such as 0CFA, $k$CFA, polymorphic splitting, and
uniform $k$CFA \cite{nielson-nielson-popl97}.
Moreover, control flow analysis can be computed under a number of
different guises such as set-based analysis \cite{heintze-lfp94},
closure analysis \cite{sestoft-ms,sestoft-fpca89}, abstract
interpretation
\cite{shivers-phd,tang-jouvelot-tacs94,might-shivers-popl06,might-phd,midtgaard-jensen-sas-08,midtgaard-jensen-icfp09},
and type and effect systems
\cite{faxen-sas95,heintze-sas95,faxen-lomaps97,banerjee-icfp97}.

We believe a useful taxonomy of these and other static analyses can be
derived by investigating their computational complexity.  Results on
the complexity of static analyses are way of understanding when two
seemingly different program analyses are in fact computing the same
thing.





\subsection{Complexity Classes}

In this section, we review some basic definitions about complexity
classes and define the flow analysis problem.

A complexity class is specified by a model of computation, a mode of
computation (e.g.~deterministic, non-deterministic), the designation
of a unit of {\em work}---a resource used up by computation
(e.g.~time, space), and finally, a function $f$ that bounds the use of
this resource.  The complexity class is defined as the set of all
languages decided by some Turing machine $M$ that for any input $x$
operates in the given mode within the bounds on available resources,
at most $f(|x|)$ units of work \cite[page 139]{Papadimitriou94}.

Turing machines are used as the model of computation, in both
deterministic and non-deterministic mode.  Recall the formal
definition of a Turing machine: a 7-tuple
\[
\langle Q,\Sigma,\Gamma,\delta,q_0,q_a,q_r\rangle
\]
where $Q$, $\Sigma$, and $\Gamma$ are finite sets, $Q$ is the set of
machine states (and $\{q_0,q_a,q_r\}\subseteq Q$), $\Sigma$ is the
input alphabet, and $\Gamma$ is the tape alphabet, where
$\Sigma\subseteq\Gamma$.  For a deterministic machine, the transition
function, $\delta$, is a partial function that maps the current
machine state and tape contents under the head position to the next
machine state, a symbol to write on the tape under the head, and left
or right shift operation for moving the head.  For a non-deterministic
machine, the transition function is actually a relation and may map
each machine configuration to multiple successor configurations.  The
states $q_0$, $q_a$, and $q_r$ are the machine's initial, accept, and
reject states, respectively.  Transitions consume one unit of time and
space consumption is measured by the amount of tape used.

\begin{definition}
Let $f:\mathcal{N} \rightarrow \mathcal{N}$.  We say that machine {\em
  $M$ operates within time $f(n)$} if, for any input string $x$, the
time required by $M$ on $x$ is at most $f(|x|)$ where $|x|$ denotes
the length of string $x$.  Function $f(n)$ is a {\em time bound} for
$M$.  

Let $g:\mathcal{N} \rightarrow \mathcal{N}$. We say that machine {\em
  $M$ operates within space $g(n)$} if, for any input string $x$, the
space required for the work tape of $M$ on $x$ is at most $g(|x|)$.
Function $g(n)$ is a {\em space bound} for $M$.
\end{definition}

Space bounds do not take into consideration the amount of space needed
to represent the input or output of a machine, but only its working
memory.  To this end, one can consider Turing machines with three
tapes: an input, output and work tape.  (Such machines can be
simulated by single tape Turing machines with an inconsequential loss
of efficiency). The input tape is read-only and contains the input
string.  The work tape is initially empty and can be read from and
written to.  The output tape, where the result of computation is
written, is initially empty and is write only.  A space bound
characterizes the size of the work only.  See \citet[Sections
2.3--5]{Papadimitriou94} or \citet[Section 7.5]{garey-johnson} for
further details.

A complexity class is a set of languages representing decision
problems that can be decided within some specified bound on the
resources used, typically time and space.  Suppose the decision
problem can be decided by a deterministic Turing machine operating in
time (space) $f(n)$, we say the problem is in \dtime$(f(n))$
(\dspace$(f(n))$); likewise, if a problem can be decided by a
non-deterministic Turing machine operating in time $f(n)$, we say the
problem is in \ntime$(f(n))$.

We make use of the following standard complexity classes:
\begin{displaymath}
\begin{array}{lrclcl}
 & \mbox{\logspace} & = & \bigcup_{j>0} \mbox{\dspace}(j \log n) \\
\subseteq & \mbox{\ptime} &  = & \bigcup_{j>0} \mbox{\dtime}(n^j)\\
\subseteq & \mbox{\np} & = & \bigcup_{j>0} \mbox{\ntime}(n^j)\\
\subseteq & \mbox{\exptime} & = & \bigcup_{j>0} \mbox{\dtime}(2^{n^j})
\end{array}
\end{displaymath}
In addition to the above inequalities, it is known that \ptime\
$\subset$ \exptime.



What is the difficulty of computing within this hierarchy? What are
the sources of approximation that render such analysis tractable?  We
examine these questions in relation to {\em flow analysis problems},
which are {\em decision problems}, computational problems that require
either a ``yes'' or ``no'' answer, and therefore are insensitive to
output size (it is just one bit).

The flow analysis decision problem we examine is the following:
\begin{description}
\item[Flow analysis decision problem:] Given an expression $e$, an
abstract value $v$, and a pair $(\ell,\delta)$, does $v$ flow into
$(\ell,\delta)$ by this flow analysis?
\end{description}

\section{Proving Lower Bounds: The Art of Exploitation}
\label{sec:approach}

Program exploitation---a staple of hacking---is simply a clever way of
making a program do what you want it to do, even if it was designed to
prevent that action \cite[page 115]{erickson-hacking}. 
This is precisely the idea in proving lower bounds, too.

The program to be exploited, in this setting, is the static analyzer.
What we would like to do varies, but generally we want to exploit
analysis to solve various computationally difficult classes of
problems.
In some cases, what we want to do is {\em evaluate}, rather than
approximate, the program being analyzed.
In this sense, we truly subvert the analyzer by using it to carry out
that which it was designed to avoid (to discover {\em without actually
  running} \cite[page xv]{muchnick-jones-81}).

The approach of exploiting analysis for computation manifests itself
in two ways in this dissertation:

\newpage
\begin{enumerate}
\item Subverting abstraction

The first stems from a observation that perhaps the languages of
abstract and concrete interpretation intersect.  That is, abstract
interpretation makes approximations compared to concrete
interpretation, but perhaps there is a subset of programs on which no
approximation is made by analysis.  For this class of programs,
abstract and concrete interpretation are synonymous.  Such a language
certainly exists for all of the flow analyses examined in this
dissertation.  We conjecture that in general, for every useful
abstract interpretation, there is a subset of the analyzed language
for which abstract and concrete interpretation coincide.  By
identifying this class of programs, lower bounds can be proved by
programming within the subset.

One of the fundamental ideas of computer science is that ``we can
regard almost any program as the evaluator for some language''
\cite[page 360]{sicp}.  So it is natural to ask of any algorithm, {\em
  what is the language being evaluated?}  The question is particularly
relevant when asked of an abstract evaluator.  We can gain insight
into an analysis by comparing the language of the abstract evaluator
to the language of the concrete evaluator.

So a program analysis itself can be viewed as a kind of programming
language, and an analyzer as a kind of evaluator.
Because of the requisite decidability of analysis, these languages
will come with implicit bounds on computational resources---if the
analysis is decidable, these languages cannot be Turing-complete.
But lower bounds can be proved by clever hacking within the
unconventional language of analysis.

This approach is the subject of \autoref{chapter-0cfa}.

\item Harnessing re-evaluation

  The second way analysis can be exploited is to identify the sources
  of approximation in the analysis and instead of avoiding them,
  (turning the abstract into the concrete as above), harness them for
  combinatorial power.  In this approach, lower bounds can be proved
  by programming the language of the analysis in a way that has little
  to do with programming in the language of concrete interpretation.

  Researchers have made empirical observations that computing a more
  precise analysis is often cheaper than performing a less precise
  one.  The less precise analysis ``yields coarser approximations, and
  thus induces more merging. More merging leads to more propagation,
  which in turn leads to more reevaluation''
  \cite{wright-jagannathan-toplas98}.  \citet{might-shivers-icfp06}
  make a similar observation: ``imprecision reinforces itself during a
  flow analysis through an ever-worsening feedback loop.''
  For the purposes of proving lower bounds, we are able to harness
  this re-evaluation as a computation engine.

This approach is the subject of \autoref{chapter-kcfa}.

\end{enumerate}
\chapter{Monovariant Analysis and \ptime}
\chaptermark{Monovariant Analysis and PTIME}
\label{chapter-0cfa}

The monovariant form of flow analysis defined over the pure
$\lambda$-calculus has emerged as a fundamental notion of flow
analysis for higher-order languages, and some form of flow analysis is
used in most analyses for higher-order languages
\cite{heintze-mcallester-pldi97}.

In this chapter, we examine several of the most well-known variations
of monovariant flow analysis: Shivers' 0CFA
\citeyearpar{shivers-pldi88}, Henglein's simple closure analysis
\citeyearpar{henglein92d}, Heintze and McAllester's subtransitive flow
analysis \citeyearpar{heintze-mcallester-pldi97}, Ashley and Dybvig's
sub-0CFA \citeyearpar{ashley-dybvig-toplas98}, Mossin's single
source/use analysis \citeyearpar{mossin-njc98}, and others.

In each case, evaluation and analysis are proved equivalent for the
class of linear programs and a precise characterization of the
computational complexity of the analysis, namely \ptime-completeness,
is given.

\section{The Approximation of Monovariance}

To ensure tractability of any static analysis, there has to be an {\em
  approximation} of something, where information is deliberately {\em
  lost} in the service of providing what is left in a reasonable
amount of time.  A good example of what is lost during {\em
  monovariant} static analysis is that the information gathered for
each occurrence of a bound variable is merged.  When variable $f$
occurs twice in function position with two different arguments,
\begin{displaymath}
(f\;v_1) \cdots (f\;v_2)
\end{displaymath}
a monovariant flow analysis will blur which copy of the function is
applied to which argument.  If a function $\lambda z.e$ flows into $f$
in this example, the analysis treats occurrences of $z$ in $e$ as
bound to {\em both} $v_1$ and
$v_2$. 

Shivers' 0CFA is among the most well-known forms of monovariant flow
analysis; however, the best known algorithm for 0CFA requires nearly
cubic time in proportion to the size of the analyzed program.

It is natural to wonder whether it is possible to do better, avoiding
this bottleneck, either by improving the 0CFA algorithm in some clever
way or by {\em further} approximation for the sake of faster
computation.

Consequently, several analyses have been designed to approximate 0CFA
by trading precision for faster computation.  Henglein's simple
closure analysis, for example, forfeits the notion of directionality
in flows.  Returning to the earlier example,
\begin{displaymath}
f(\lambda x.e') \cdots f (\lambda y.e'')
\end{displaymath}
simple closure analysis, like 0CFA, will blur $\lambda x.e'$ and
$\lambda y.e''$ as arguments to $f$, causing $z$ to be bound to both.
But unlike 0CFA, a {\em bidirectional} analysis such as simple closure
analysis will identify two $\lambda$-terms with each other.  That is,
because both are arguments to the same function, by the
bi-directionality of the flows, $\lambda x.e'$ may flow out of $\lambda
y.e''$ and {\em vice versa}.

Because of this further curtailing of information, simple closure
analysis enjoys an ``almost linear'' time algorithm.  But in making
trade-offs between precision and complexity, what has been given up
and what has been gained?  Where do these analyses differ and where do
they coincide?

We identify a core language---the linear $\lambda$-calculus---where
0CFA, simple closure analysis, and many other known approximations or
restrictions to 0CFA are rendered identical.  Moreover, for this core
language, analysis corresponds with (instrumented) evaluation.
Because analysis faithfully captures evaluation, and because the
linear $\lambda$-calculus is complete for \ptime, we derive
\ptime-completeness results for all of these analyses.

Proof of this lower bound relies on the insight that linearity of
programs subverts the approximation of analysis and renders it
equivalent to evaluation.  We establish a correspondence between
Henglein's simple closure analysis and evaluation for linear terms.
In doing so, we derive sufficient conditions effectively
characterizing not only simple closure analysis, but many known flow
analyses computable in less than cubic time, such as Ashley and
Dybvig's sub-0CFA, Heintze and McAllester's subtransitive flow
analysis, and Mossin's single source/use analysis.

By using a nonstandard, symmetric implementation of Boolean logic
within the linear lambda calculus, it is possible to simulate circuits
at analysis-time, and as a consequence, we prove that all of the above
analyses are complete for \ptime.  Any sub-polynomial algorithm for
these problems would require (unlikely) breakthrough results in
complexity, such as \ptime\ $=$ \logspace.

We may continue to wonder whether it is possible to do better, either
by improving the 0CFA algorithm in some clever way or by further
approximation for faster computation.  However these theorems
demonstrate the limits of both avenues.  0CFA is inherently
sequential, and so is {\em any} algorithm for it, no matter how
clever.  Designing a provably efficient parallel algorithm for 0CFA is
as hard as parallelizing all polynomial time computations.  On the
other hand, further approximations, such as simple closure analysis
and most other variants of monovariant flow analysis, make no
approximation on a linear program.  This means they too are inherently
sequential and no easier to parallelize.

\section{0CFA}
\label{sec:0cfa}

Something interesting happens when $k=0$.  Notice in the application
rule of the $k$CFA abstract evaluator of \autoref{fig-a-imperative}
that environments are extended as $\rho[x\mapsto
\lceil\delta\ell\rceil_k]$.  When $k=0$,
$\lceil\delta\ell\rceil_0=\epsilon$.  All contour environments map to
the empty contour, and therefore carry no contextual information.  As
such, 0CFA is a ``monovariant'' analysis, analogous to simple-type
inference, which is a monovariant type analysis.

Since there is only one constant environment (the ``everywhere
$\epsilon$'' environment), environments of \autoref{sec:ai} can be
eliminated from the analysis altogether and the cache no longer needs
a contour argument.  Likewise, the set of abstract values collapses
from $\mathcal{P}(\Term\times\Env)$ into $\mathcal{P}(\Term)$.

The result of 0CFA is an {\em abstract cache} that maps each program
point (i.e., label) to a set of lambda abstractions which potentially
flow into this program point at run-time:
\begin{displaymath}
\begin{array}{ccl}
\cache & : & \Lab \rightarrow \mathcal{P}(\Term) \\
\aenv & : & \Var \rightarrow \mathcal{P}(\Term)\\
\\
\ACache & = & (\Lab \rightarrow \mathcal{P}(\Term)) \times
              (\Var \rightarrow \mathcal{P}(\Term))  
\end{array}
\end{displaymath}
Caches are extended using the notation $\cache[\ell \mapsto s]$, and
we write $\cache[\ell \mapsto^+ s]$ to mean $\cache[\ell \mapsto
(s\cup \cache(\ell))]$.  It is convenient to sometimes think of caches
as mutable tables (as we do in the algorithm below), so we abuse
syntax, letting this notation mean both functional extension and
destructive update.  It should be clear from context which is implied.

\paragraph{The Analysis:} We present the specification of the analysis
here in the style of \citet{nielson-nielson-hankin}.  Each
subexpression is identified with a unique superscript {\em label}
$\ell$, which marks that program point; $\cache(\ell)$ stores all
possible values flowing to point $\ell$, $\aenv(x)$ stores all
possible values flowing to the definition site of $x$.  An {\em
  acceptable} 0CFA analysis for an expression $e$ is written
$\cache,\aenv\models e$ and derived according to the scheme given in
\autoref{fig-0cfa-acceptability}.

\begin{figure}[h]
\begin{displaymath}
\begin{array}{lcl}
\cache,\aenv \models x^\ell &
\mbox{ iff } &
\aenv(x) \subseteq \cache(\ell)\\
\cache,\aenv \models (\lambda x .e)^\ell &
\mbox{ iff } &
\lambda x.e \in \cache(\ell)\\
\cache,\aenv \models (t^{\ell_1}\ t^{\ell_2})^\ell &
\mbox{ iff } &
\cache,\aenv\models t^{\ell_1}\ \wedge
\cache,\aenv\models t^{\ell_2}\ \wedge \\
\ & \ & \quad \forall \lambda x.t^{\ell_0} \in \cache(\ell_1) : \\
\ & \ & \qquad \cache(\ell_2) \subseteq \aenv(x)\ \wedge\ \\
\ & \ & \qquad \cache,\aenv \models t^{\ell_0}\ \wedge \\
\ & \ & \qquad \cache(\ell_0) \subseteq \cache(\ell)
\end{array}
\end{displaymath}
\caption{0CFA abstract cache acceptability.}
\label{fig-0cfa-acceptability}
\end{figure}

The $\models$ relation needs to be coinductively defined since
verifying a judgment $\cache,\aenv \models e$ may obligate
verification of $\cache,\aenv \models e'$ which in turn may require
verification of $\cache,\aenv \models e$.  The above specification of
acceptability, when read as a table, defines a functional, which is
monotonic, has a fixed point, and $\models$ is defined coinductively
as the greatest fixed point of this functional.\footnote{See
  \citet{nielson-nielson-hankin} for details and a thorough discussion
  of coinduction in specifying static analyses.}

Writing $\cache,\aenv \models t^\ell$ means ``the abstract cache
contains all the flow information for program fragment $t$ at program
point $\ell$.''  The goal is to determine the {\em least} cache
solving these constraints to obtain the most precise analysis.  Caches
are partially ordered with respect to the program of interest:
\begin{displaymath}
\begin{array}{lllcl}
\cache & \sqsubseteq & \cache' & 
\mbox{ iff } & 
\forall \ell : \cache(\ell) \subseteq \cache'(\ell)\\
\aenv & \sqsubseteq & \aenv' & 
\mbox{ iff } & 
\forall x : \aenv(x) \subseteq \aenv'(x)
\end{array}
\end{displaymath}

These constraints can be thought of as an abstract evaluator---
$\cache,\aenv \models t^\ell$ simply means {\em evaluate} $t^\ell$, which
serves {\em only} to update an (initially empty) cache.

\begin{figure}[h]
\begin{displaymath}
\begin{array}{lcl}
\avz{x^\ell} & = & 
\cache(\ell) \leftarrow \aenv(x)\\
\avz{(\lambda x.e)^\ell} & = & 
\cache(\ell) \leftarrow \{ \lambda x.e \}\\
\avz{(t^{\ell_1}\ t^{\ell_2})^\ell} & = & 
\avz{t^{\ell_1}};\ \ \avz{t^{\ell_2}};\\
\ &\ & \mbox{\bf for each }\lambda x.t^{\ell_0} \mbox{\bf\ in\ } \cache(\ell_1)\mbox{\bf\ do }\\
\ &\ &   \quad   \aenv(x) \leftarrow \cache(\ell_2);\\ 
\ &\ &   \quad   \avz{t^{\ell_0}};\\
\ &\ &   \quad   \cache(\ell) \leftarrow \cache(\ell_0)
\end{array}
\end{displaymath}
\caption{Abstract evaluator $\mathcal{A}_0$ for 0CFA, imperative style.}
\label{fig-0cfa-imperative}
\end{figure}

The abstract evaluator $\avz\cdot$ is iterated until the finite cache
reaches a fixed point.

\paragraph{Fine Print:} A single iteration of $\avz{e}$ may in turn
make a recursive call $\avz{e}$ with no change in the cache, so care
must be taken to avoid looping.  This amounts to appealing to the
coinductive hypothesis $\cache,\aenv \models e$ in verifying
$\cache,\aenv \models e$.  However, we consider this inessential
detail, and it can safely be ignored for the purposes of obtaining our
main results in which this behavior is {\em never triggered}.

Since the cache size is polynomial in the program size, so is the
running time, as the cache is {\em monotonic}---values are put in, but
never taken out.  Thus the analysis and any decision problems
answered by the analysis are clearly computable within polynomial
time.

\begin{lemma}
The control flow problem for 0CFA is contained in \ptime.
\end{lemma}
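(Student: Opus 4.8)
The plan is to turn the two structural observations already made just before the statement into a genuine running-time bound: the 0CFA cache is \emph{polynomially bounded} in the size of the program, and the abstract evaluator $\avz\cdot$ is \emph{monotonic}, only ever inserting flows and never retracting them. Together these give a polynomial ceiling on the total work needed to compute the least acceptable cache, after which the decision problem is answered by a single membership test. First I would fix a program $e$ of size $n$ and make the cache bound precise. Since $k=0$ there are no contours, and by the collapse noted in \autoref{sec:0cfa} the abstract values are drawn from $\mathcal{P}(\Term)$, where the only relevant terms are the $\lambda$-abstractions occurring as subterms of $e$; there are at most $n$ of these. The caches $\cache : \Lab \to \mathcal{P}(\Term)$ and $\aenv : \Var \to \mathcal{P}(\Term)$ are indexed by the $O(n)$ labels and $O(n)$ variables of $e$, so the entire cache records at most $O(n^2)$ distinct flow facts of the form $\lambda x.t^{\ell_0} \in \cache(\ell)$ or $\lambda x.t^{\ell_0} \in \aenv(x)$.

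Next I would bound the iteration. Reading the clauses of \autoref{fig-0cfa-acceptability} as a monotone functional on the lattice of caches ordered by $\sqsubseteq$, the \emph{least} acceptable cache is its least fixed point, which the imperative evaluator of \autoref{fig-0cfa-imperative} computes by repeated iteration of $\avz{e}$. Because every step only inserts elements and never removes them, a strictly increasing chain in this lattice has length at most equal to the number of flow facts, namely $O(n^2)$; hence at most $O(n^2)$ insertions can occur before a fixed point is reached. A single pass of $\avz{e}$ over the program visits $O(n)$ subexpressions and, at each application node, iterates over the at-most-$n$ abstractions currently recorded in $\cache(\ell_1)$, so one pass costs polynomial time. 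Running passes until the cache stabilizes therefore costs polynomial time overall. The decision problem itself---whether a given abstract value $v$ flows into a given program point $\ell$---is then settled by testing $v \in \cache(\ell)$ against the computed least cache, which is immediate.

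The one point that demands care is the ``fine print'' flagged above: a single invocation of $\avz{e}$ may recursively call $\avz{e}$ with no intervening change to the cache, so a naive recursive realization can diverge. The obstacle is thus not the counting but organizing the fixed-point computation so that it provably halts within the insertion ceiling---for instance by memoizing which obligations $\cache,\aenv \models t^\ell$ are already in progress (the algorithmic counterpart of appealing to the coinductive hypothesis), or, equivalently, by recasting the whole computation as a worklist algorithm that processes each newly added flow fact exactly once. Given the $O(n^2)$ bound on the number of flow facts, either organization manifestly terminates after polynomially many steps, and since each step is itself polynomial, the least cache---and therefore the answer to the flow query---is produced in polynomial time, establishing containment in \ptime.
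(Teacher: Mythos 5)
Your argument is essentially the paper's own proof: 0CFA monotonically builds a polynomially-bounded relation (the cache), each increment costs polynomial time, and finiteness forces a fixed point after polynomially many insertions. Your version is more explicit about the $O(n^2)$ bound and the worklist/memoization discipline needed to avoid divergence, but the approach is the same.
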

\begin{proof}
0CFA computes a binary relation over a fixed structure.  The
computation of the relation is monotone: it begins as empty and is
added to incrementally.  Because the structure is finite, a fixed
point must be reached by this incremental computation.  The binary
relation can be at most polynomial in size, and each increment is
computed in polynomial time.
\end{proof}

\paragraph{An Example:} 

Consider the following program, which we will return to discuss
further in subsequent analyses:

\begin{displaymath}
((\lambda f.((f^1 f^2)^3(\lambda y.y^4)^5)^6)^7 (\lambda
x.x^8)^9)^{10}
\end{displaymath}

The least 0CFA is given by the following cache:
\begin{displaymath}
\begin{array}{l@{\:=\:}l@{\hspace{10pt}}l@{\:=\:}l}
\cache(1) & \{ \lambda x \} &
\cache(6) & \{ \lambda x, \lambda y \} \\
\cache(2) & \{ \lambda x \} &
\cache(7) & \{ \lambda f \} \\
\cache(3) & \{ \lambda x, \lambda y \} &
\cache(8) & \{ \lambda x, \lambda y \} \\
\cache(4) & \{ \lambda y \} &
\cache(9) & \{ \lambda x \} \\
\cache(5) & \{ \lambda y \} &
\cache(10)& \{ \lambda x, \lambda y \}
\end{array}
\hspace{10pt}
\begin{array}{l@{\:=\:}l}
\aenv(f) & \{ \lambda x \} \\
\aenv(x) & \{ \lambda x, \lambda y \}\\
\aenv(y) & \{ \lambda y \}
\end{array}
\end{displaymath}
where we write $\lambda x$ as shorthand for $\lambda x.x^8$, etc.

\section{Henglein's Simple Closure Analysis}
\label{sec:simple}

Simple closure analysis follows from an observation by Henglein some
15 years ago ``in an influential though not often credited technical
report'' \cite[page 4]{midtgaard-07}: he noted that the standard
control flow analysis can be computed in dramatically less time by
changing the specification of flow constraints to use equality rather
than containment \cite{henglein92d}.  The analysis bears a strong
resemblance to simple-type inference---analysis can be performed by
emitting a system of equality constraints and then solving them using
{\em unification}, which can be computed in almost linear time with a
union-find data structure.

Consider a program with both $(f\;x)$ and $(f\;y)$ as subexpressions.
Under 0CFA, whatever flows into $x$ and $y$ will also flow into the
formal parameter of all abstractions flowing into $f$, but it is not
necessarily true that whatever flows into $x$ {\em also} flows into
$y$ and {\em vice versa}.  However, under simple closure analysis,
this is the case.  For this reason, flows in simple closure analysis
are said to be {\em bidirectional}.

\paragraph{The Analysis:} 

The specification of the analysis is given in
\autoref{fig-simple-declarative}.

\begin{figure}[h]
\begin{displaymath}
\begin{array}{lcl}
\cache,\aenv \models x^\ell & 
\mbox{ iff } &
\aenv(x) = \cache(\ell)\\
\cache,\aenv \models (\lambda x .e)^\ell &
\mbox{ iff } &
\lambda x.e \in \cache(\ell)\\
\cache,\aenv \models (t^{\ell_1}\ t^{\ell_2})^\ell & 
\mbox{ iff } &
\cache,\aenv \models t^{\ell_1} \wedge \cache,\aenv \models t^{\ell_2}\ \wedge\\
\ & \ & \quad \forall \lambda x.t^{\ell_0} \in \cache(\ell_1) : \\
\ & \ & \qquad \cache(\ell_2) = \aenv(x)\ \wedge\\
\ & \ & \qquad \cache,\aenv \models t^{\ell_0}\ \wedge \\
\ & \ & \qquad \cache(\ell_0) = \cache(\ell)
\end{array}
\end{displaymath}
\caption{Simple closure analysis abstract cache acceptability.}
\label{fig-simple-declarative}
\end{figure}

\paragraph{The Algorithm:} 

We write $\cache[\ell \leftrightarrow \ell']$ to mean
$\cache[\ell\mapsto^+ \cache(\ell')][\ell'\mapsto^+ \cache(\ell)]$.

\begin{displaymath}
\begin{array}{lcl}
\avz{x^\ell}               & = & \cache(\ell) \leftrightarrow \aenv(x) \\
\avz{(\lambda x.e)^\ell} & = & \cache(\ell) \leftarrow \{ \lambda x.e \}\\
\avz{(t^{\ell_1}_1 t^{\ell_2}_2)^\ell} & = & \avz{t^{\ell_1}_1};\ \ \avz{t^{\ell_2}_2};\\
\ &\ & \mbox{\bf for each }\lambda x.t^{\ell_0}_0 \mbox{\bf\ in\ } \cache(\ell_1)\mbox{\bf\ do }\\
\ &\ & \quad \aenv(x)\leftrightarrow\cache(\ell_2);\\
\ &\ & \quad \avz{t^{\ell_0}_0};\\
\ &\ & \quad \cache(\ell)\leftrightarrow\cache(\ell_0)
\end{array}
\end{displaymath}
The abstract evaluator $\avz\cdot$ is iterated until a fixed point
is reached.\footnote{The fine print of \autoref{sec:0cfa} applies as
well.}  By similar reasoning to that given for 0CFA, simple closure
analysis is clearly computable within polynomial time.

\begin{lemma}
The control flow problem for simple closure analysis is contained in \ptime.
\end{lemma}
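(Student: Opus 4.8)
The plan is to follow exactly the same reasoning that established the previous lemma for 0CFA, since simple closure analysis differs only in using equality constraints ($\leftrightarrow$) rather than containment ($\subseteq$). The core observation is that the analysis computes a fixed point over a finite, monotone structure, and the total amount of information it can accumulate is polynomially bounded in the size of the program.

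First I would identify the structure being computed. For a program $e$ of size $n$, there are at most $n$ labels and at most $n$ variables, and the set of abstract values is $\mathcal{P}(\Term)$, where the relevant terms are the (at most $n$) $\lambda$-abstractions occurring syntactically in $e$. Thus $\cache$ and $\aenv$ together represent a relation of size at most $O(n^2)$: each of the $O(n)$ labels or variables can be associated with at most $n$ abstractions. This bounds the size of the final cache polynomially.

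Next I would argue monotonicity and termination. Inspecting the algorithm in the figure, every operation on the cache is an extension: $\cache(\ell)\leftarrow\{\lambda x.e\}$ adds an element, and the bidirectional operation $\cache[\ell\leftrightarrow\ell']$ unions the two cells' contents into both. Crucially, although simple closure analysis propagates flows in both directions, it still only ever \emph{adds} abstractions to cache cells; values are put in but never taken out. Hence the computation is monotone on the finite lattice described above, so iterating the abstract evaluator $\avz\cdot$ must reach a fixed point after at most $O(n^2)$ productive iterations. Each individual iteration examines the current cache and performs a polynomial number of set operations, each computable in polynomial time.

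Combining these, the entire fixed-point computation, and therefore any decision problem answered by consulting the resulting cache (such as whether a given abstract value flows into a given $\ell$), runs in polynomial time; this places the control flow problem for simple closure analysis in \ptime. I do not anticipate a genuine obstacle here: the only point requiring slight care is confirming that the equality/bidirectional constraint does not break monotonicity or introduce unbounded growth, but since $\leftrightarrow$ is defined purely in terms of the monotone $\mapsto^+$ extension, this is immediate, and the argument reduces to the same counting used for 0CFA.
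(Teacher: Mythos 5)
Your proposal is correct and follows essentially the same route as the paper, which proves the 0CFA case by the monotone-fixed-point-over-a-finite-polynomial-size-relation argument and then simply notes that simple closure analysis is in \ptime\ ``by similar reasoning.'' Your added observation that $\leftrightarrow$ is built from the monotone $\mapsto^+$ extension and so preserves the counting argument is exactly the point the paper leaves implicit.
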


\paragraph{An Example:}

Recall the example program of the previous section:
\begin{displaymath}
((\lambda f.((f^1 f^2)^3(\lambda y.y^4)^5)^6)^7 (\lambda
x.x^8)^9)^{10}
\end{displaymath}

Notice that $\lambda x.x$ is applied to itself and then to $\lambda
y.y$, so $x$ will be bound to both $\lambda x.x$ and $\lambda y.y$,
which induces an equality between these two terms.  Consequently,
everywhere that 0CFA was able to deduce a flow set of $\{ \lambda x
\}$ or $\{ \lambda y \}$ will be replaced by $\{ \lambda x, \lambda y
\}$ under a simple closure analysis.  The least simple closure
analysis is given by the following cache (new flows are underlined):
\begin{displaymath}
\begin{array}{l@{\:=\:}l@{\hspace{10pt}}l@{\:=\:}l}
\cache(1) & \{ \lambda x, \underline{\lambda y} \} &
\cache(6) & \{ \lambda x, \lambda y \} \\
\cache(2) & \{ \lambda x, \underline{\lambda y} \} &
\cache(7) & \{ \lambda f \} \\
\cache(3) & \{ \lambda x, \lambda y \} &
\cache(8) & \{ \lambda x, \lambda y \} \\
\cache(4) & \{ \lambda y, \underline{\lambda x} \} &
\cache(9) & \{ \lambda x, \underline{\lambda y} \} \\
\cache(5) & \{ \lambda y, \underline{\lambda x} \} &
\cache(10)& \{ \lambda x, \lambda y \}
\end{array}
\hspace{10pt}
\begin{array}{l@{\:=\:}l}
\aenv(f) & \{ \lambda x, \underline{\lambda y} \} \\
\aenv(x) & \{ \lambda x, \lambda y \}\\
\aenv(y) & \{ \lambda y, \underline{\lambda x} \}
\end{array}
\end{displaymath}

\section{Linearity: Analysis is Evaluation}
\label{sec:linearity}

It is straightforward to observe that in a {\em linear}
$\lambda$-term, each abstraction $\lambda x.e$ can be applied to at
most one argument, and hence the abstracted value can be bound to at
most one argument.\footnote{Note that this observation is clearly
untrue for the {\em nonlinear} $\lambda$-term $(\lambda f.f (a (f b)))
(\lambda x.x)$, as $x$ is bound to $b$, and also to $ab$.}
Generalizing this observation, analysis of a linear $\lambda$-term
coincides exactly with its evaluation.  So not only are the analyses
equivalent on linear terms, but they are also synonymous with
evaluation.

A natural and expressive class of such linear terms are the ones which
implement Boolean logic.  When analyzing the coding of a Boolean
circuit and its inputs, the Boolean output will flow to a
predetermined place in the (abstract) cache.  By placing that value in
an appropriate context, we construct an instance of the control flow
problem: a function $f$ flows to a call site $a$ iff the Boolean
output is $\True$.

Since the circuit value problem \cite{ladner-75}, which is complete
for \ptime, can be reduced to an instance of the 0CFA control flow
problem, we conclude this control flow problem is \ptime-hard.
Further, as 0CFA can be computed in polynomial time, the control flow
problem for 0CFA is \ptime-complete.

One way to realize the computational potency of a static analysis is
to subvert this loss of information, making the analysis an {\em
exact} computational tool.  Lower bounds on the expressiveness of an
analysis thus become exercises in hacking, armed with this newfound
tool.  Clearly the more approximate the analysis, the less we have to
work with, computationally speaking, and the more we have to do to
undermine the approximation.  But a fundamental technique has emerged
in understanding expressivity in static analysis---{\em linearity}.

In this section, we show that when the program is {\em linear}---every
bound variable occurs exactly once---analysis and evaluation are
synonymous.

First, we start by considering an alternative evaluator, given in
\autoref{figure-eval-alt}, which is slightly modified from the one
given in \autoref{figure-eval}.  Notice that this evaluator
``tightens'' the environment in the case of an application, thus
maintaining throughout evaluation that the domain of the environment
is exactly the set of free variables in the expression.  When
evaluating a variable occurrence, there is only one mapping in the
environment: the binding for this variable. Likewise, when
constructing a closure, the environment does not need to be
restricted: it already is.

\begin{figure}
\begin{displaymath}
\begin{array}{lcl}
\evalf['] & : & \Exp \times \Env \rightharpoonup \Val\\
\\
\eval[']{x^\ell}[x \mapsto v]   & = & v\\
\eval[']{(\lambda x.e)^\ell}\rho & = & \langle\lambda x.e,\rho\rangle \\
\eval[']{(e_1\ e_2)^\ell}\rho & = &
       \mbox{\bf let }\langle\lambda x.e_0,\rho'\rangle = 
                      \eval[']{e_1}\rho\restrict\fv{e_1}\mbox{\bf\ in }\\
\ &\ & \mbox{\bf let }v = 
                      \eval[']{e_2}\rho\restrict\fv{e_2}\mbox{\bf\ in }\\
\ &\ &   \quad   \eval[']{e_0}{\rho'[x \mapsto v]}
\end{array}
\end{displaymath}
\caption{Evaluator $\evalf[']$.}
\label{figure-eval-alt}
\end{figure}


This alternative evaluator $\evalf[']$ will be useful in reasoning
about linear programs, but it should be clear that it is equivalent to
the original, standard evaluator $\evalf$ of \autoref{figure-eval}.

\begin{lemma}
$\eval{e}\rho \Longleftrightarrow \eval[']{e}\rho$, when $\dom{\rho} = \fv{e}$.
\end{lemma}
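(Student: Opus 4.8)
The plan is to prove the two evaluators agree in the strong sense of Kleene equality: for \( \dom\rho = \fv e \), the value \( \eval{e}\rho \) is defined and equals \( v \) if and only if \( \eval[']{e}\rho \) is defined and equals \( v \). Because both evaluators are partial and the application rule recurs on the body \( e_0 \) of a closure --- which is not a syntactic subterm of \( e_1 e_2 \) --- a naive structural induction on \( e \) will not close. Instead I would prove each implication separately by induction on the height of the corresponding computation (the number of nested recursive unfoldings), and then recover the divergence case by contraposition: if, say, \( \eval{e}\rho \) diverged while \( \eval[']{e}\rho = v \), the converse implication would force \( \eval{e}\rho = v \), a contradiction. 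A single preliminary \emph{relevance lemma} --- that \( \eval{e}\rho = \eval{e}{\rho\restrict\fv e} \), proved by the same height induction --- lets me absorb the environment restrictions that \( \eval[']{\cdot}{} \) performs on the operator and operand before each recursive call.

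Throughout, I would carry the invariant \( \dom\rho = \fv e \), which is exactly the hypothesis, and check that it is re-established at every recursive call. The base cases are then immediate: for \( e = x^\ell \) the hypothesis forces \( \rho = [x\mapsto\rho(x)] \), so \( \eval{x}\rho = \rho(x) \) matches the singleton-pattern rule \( \eval[']{x}{[x\mapsto\rho(x)]} = \rho(x) \); for \( e = (\lambda x.e_0)^\ell \) the standard evaluator returns \( \langle\lambda x.e_0, \rho\restrict\fv{\lambda x.e_0}\rangle \), and since \( \dom\rho = \fv{\lambda x.e_0} \) the restriction is the identity, giving exactly the closure that \( \eval[']{\cdot}{} \) pairs together. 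The application case \( (e_1 e_2)^\ell \) is the heart of the argument: I would apply the induction hypothesis to \( e_1 \) and \( e_2 \) --- whose domain conditions hold because \( \fv{e_i}\subseteq\fv{e_1 e_2}=\dom\rho \), using relevance to match \( \eval{e_i}\rho \) with \( \eval[']{e_i}{\rho\restrict\fv{e_i}} \) --- to conclude that both evaluators obtain the same operator closure \( \langle\lambda x.e_0,\sigma\rangle \) and the same argument \( v \). The key point is that this closure's environment satisfies \( \dom\sigma = \fv{\lambda x.e_0} \) under both evaluators, so the extended environment \( \sigma[x\mapsto v] \) has domain \( \fv{e_0}\cup\{x\} \), re-establishing \( \dom{(\sigma[x\mapsto v])} = \fv{e_0} \) before the final recursive call, to which the induction hypothesis then applies.

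The main obstacle is precisely this bookkeeping: showing that \( \eval[']{\cdot}{} \)'s tight-environment discipline (a singleton at each variable, an unrestricted pairing at each abstraction, and a restrict-before-descent at each application) stays in lockstep with \( \eval{\cdot}{} \)'s restrict-at-each-closure discipline. The one delicate spot is re-establishing \( \dom\rho = \fv e \) across the binding rule: \( \dom{(\sigma[x\mapsto v])} = \fv{e_0} \) requires \( x \in \fv{e_0} \), which holds automatically for the \emph{linear} programs that motivate this development (there \( x \) occurs exactly once in \( e_0 \)); in general it is recovered by noting, via the relevance lemma, that a binding for a variable absent from \( e_0 \) is irrelevant to its evaluation. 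Everything else is a routine unfolding of the two definitions, and specializing the generalized statement to \( \rho\restrict\fv e = \rho \) yields the lemma as stated.
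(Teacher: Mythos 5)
The dissertation never actually proves this lemma---it is asserted with ``it should be clear''---so there is no official argument to compare against, and your overall strategy is the right one: Kleene equality established by induction on the depth of the computation rather than on term structure (correctly diagnosing that the application clause recurs on a closure body that is not a subterm), a relevance lemma for $\evalf$, and the invariant $\dom{\rho}=\fv{e}$ threaded through every recursive call. The variable, abstraction, and application cases all go through exactly as you describe \emph{whenever that invariant is maintained}. The genuine problem is the spot you yourself flag as delicate, and your patch for it does not work. When $x\notin\fv{e_0}$, the body is evaluated in $\rho'[x\mapsto v]$ with $\dom{\rho'[x\mapsto v]}\supsetneq\fv{e_0}$, and while your relevance lemma lets $\evalf$ ignore the spurious binding, the corresponding statement is \emph{false} for $\evalf[']$: its abstraction clause performs no restriction, so the binding $x\mapsto v$ is captured in the returned closure, and its variable clause pattern-matches a singleton environment, so an extra binding leaves $\evalf[']$ undefined. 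Concretely, for $e\equiv(\lambda x.\lambda z.z)(\lambda y.y)$ one gets $\eval{e}\emptyenv=\langle\lambda z.z,\emptyenv\rangle$ but $\eval[']{e}\emptyenv=\langle\lambda z.z,[x\mapsto\langle\lambda y.y,\emptyenv\rangle]\rangle$, and applying this term to a further argument $(\lambda w.w)$ leaves $\evalf[']$ stuck at the occurrence of $z$ with a two-entry environment while $\evalf$ returns $\langle\lambda w.w,\emptyenv\rangle$. So for terms with vacuous binders, literal value equality fails, and under the paper's singleton-pattern reading of the variable clause even mutual definedness fails; no appeal to a relevance property of $\evalf$ can repair this, because the discrepancy lives entirely on the $\evalf[']$ side.

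The repair is to weaken what is being proved rather than to strengthen the bookkeeping: either state the equivalence up to restriction of closure environments to free variables (a simple logical relation on values, preserved by the application step), or restrict the lemma to terms in which every $\lambda$-bound variable occurs free in its body---a class that contains all linear terms, which are the only terms the dissertation ever feeds to $\evalf[']$. Under either amendment your induction closes and the rest of your argument is sound; as literally stated, with $\Longleftrightarrow$ read as equality of values, the claim is false and your proof cannot be completed.
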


In a linear program, each mapping in the environment corresponds to
the single occurrence of a bound variable.  So when evaluating an
application, this tightening {\em splits} the environment $\rho$ into
$(\rho_1,\rho_2)$, where $\rho_1$ closes the operator, $\rho_2$ closes
the operand, and $\dom{\rho_1} \cap \dom{\rho_2} = \emptyset$.

\begin{definition}
Environment $\rho$ {\em linearly closes} $t$ (or $\langle
t,\rho\rangle$ is a {\em linear closure}) iff $t$ is linear, $\rho$
closes $t$, and for all $x\in\dom{\rho}$, $x$ occurs exactly once
(free) in $t$, $\rho(x)$ is a linear closure, and for all
$y\in\dom{\rho}, x$ does not occur (free or bound) in $\rho(y)$. The
{\em size} of a linear closure $\langle t,\rho\rangle$ is defined as:
\begin{eqnarray*}
|t,\rho| & = & |t|+|\rho|\\
|x| & = & 1\\
|(\lambda x.t^\ell)| & = & 1+|t|\\
|(t_1^{\ell_1}\; t_2^{\ell_2})| & = & 1+|t_1|+|t_2|\\
|[x_1\mapsto c_1,\dots,x_n\mapsto c_n]| & = & n+\sum_i |c_i|
\end{eqnarray*}
\end{definition}

The following lemma states that evaluation of a linear closure cannot
produce a larger value.  This is the environment-based analog to the
easy observation that $\beta$-reduction {\em strictly} decreases the
size of a linear term.
\begin{lemma}\label{lem:smaller}
If $\rho$ linearly closes $t$ and $\eval[']{t^\ell}\rho = c$, then
$|c|\leq|t,\rho|$.
\end{lemma}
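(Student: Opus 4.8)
The plan is to prove the statement by induction on the \emph{height of the evaluation derivation} $\eval[']{t^\ell}\rho = c$, not on the structure of $t$. The reason is visible in the application rule of \autoref{figure-eval-alt}: there the result is obtained by evaluating $e_0$, the body of the abstraction produced by the operator, and $e_0$ is in no way a subterm of $t = (e_1\ e_2)^\ell$. The three recursive evaluations the rule performs---of $e_1$, of $e_2$, and of $e_0$---are each strictly shorter derivations, so each is available to the induction hypothesis. I would also \emph{strengthen} the statement, adding that $c$ is itself a linear closure; this preservation fact is exactly what licenses reapplying the hypothesis to $e_0$, and it costs nothing extra to carry along.

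Before the cases I would record the consequence of the linear-closure definition that $\dom{\rho} = \fv{t}$: $\rho$ closing $t$ gives $\fv{t}\subseteq\dom{\rho}$, and the requirement that every $x\in\dom{\rho}$ occur (exactly once) free in $t$ gives the reverse inclusion. The two base cases are then immediate. For $t = x$ we have $\rho = [x\mapsto v]$ with $v$ a linear closure by hypothesis, $c = v$, and $|c| = |v| \leq 2 + |v| = |x,\rho|$. For $t = \lambda x.e$ we have $c = \langle\lambda x.e,\rho\rangle$, which is literally the closure $\langle t,\rho\rangle$ and hence a linear closure, with $|c| = |t,\rho|$ exactly (the no-reduction case, so equality is expected).

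The application case is where the work lies. Linearity of $t$ forces $\fv{e_1}$ and $\fv{e_2}$ to be disjoint, so the tightening performed by the evaluator splits $\rho$ into $\rho_1 = \rho\restrict\fv{e_1}$ and $\rho_2 = \rho\restrict\fv{e_2}$ with $\dom{\rho_1}\sqcup\dom{\rho_2} = \dom{\rho}$; from the size formula this yields the crucial additivity $|\rho| = |\rho_1| + |\rho_2|$. Each restriction inherits the linear-closure conditions, so $\rho_i$ linearly closes $e_i$, and the induction hypothesis gives both preservation and the bounds $|\langle\lambda x.e_0,\rho'\rangle| \leq |e_1,\rho_1|$ and $|v|\leq|e_2,\rho_2|$. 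Preservation tells me $\langle\lambda x.e_0,\rho'\rangle$ and $v$ are linear closures, which is precisely what is needed to check that $\rho'[x\mapsto v]$ linearly closes $e_0$ (linearity of $\lambda x.e_0$ ensures $x$ occurs exactly once in $e_0$, and the range closures stay linear). A third use of the hypothesis on $e_0$ then gives $|c|\leq |e_0| + |\rho'| + 1 + |v|$. Substituting $1 + |e_0| + |\rho'| = |\lambda x.e_0| + |\rho'| \leq |e_1| + |\rho_1|$ and $|v|\leq|e_2|+|\rho_2|$ collapses the right-hand side to $|e_1|+|e_2|+|\rho_1|+|\rho_2| = |t,\rho| - 1$, so the size in fact strictly decreases, comfortably giving the claimed $\leq$.

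I expect the one genuinely delicate point to be verifying that $\rho'[x\mapsto v]$ satisfies the no-capture (freshness) clause of the linear-closure definition---that no domain variable occurs, free or bound, in any range closure, now that $v$ has been inserted under the fresh binding for $x$. This is the usual variable-hygiene bookkeeping and is discharged under the standard convention that all bound variables are chosen distinct and distinct from the free variables (maintainable through evaluation by renaming); I would state that convention explicitly and then treat the clause as routine. Everything else---the additivity of $|\cdot|$ over disjoint environments and the arithmetic of the final estimate---is straightforward given the size definition.
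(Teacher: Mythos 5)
Your proof is correct and is essentially the paper's argument: the paper proves this lemma by induction on $|t,\rho|$ with the same case analysis on $t$, the same environment-splitting additivity in the application case, and the same observation that the size strictly decreases in the variable and application cases and is preserved in the abstraction case. The only differences are your choice of induction measure (height of the evaluation derivation rather than $|t,\rho|$, both of which validly license the recursive appeal on the non-subterm body $e_0$) and your explicit strengthening with preservation of linear closures, which the paper's one-line proof leaves implicit.
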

\begin{proof}
Straightforward by induction on $|t,\rho|$, reasoning by case analysis
on $t$.  Observe that the size strictly decreases in the application
and variable case, and remains the same in the abstraction case.
\end{proof}

The function $\lab{\cdot}$ is extended to closures and environments by
taking the union of all labels in the closure or in the range of the
environment, respectively.

\begin{definition}
The set of labels in a given term, expression, environment, or closure
is defined as follows:
\begin{displaymath}
\begin{array}{rcl@{\qquad}rcl}
  \lab{t^\ell}      &=& \lab{t}\cup\{\ell\}      & 
  \lab{e_1\;e_2}    &=& \lab{e_1} \cup \lab{e_2} \\
  \lab{x}           &=& \{x\}                    & 
  \lab{\lambda x.e} &=& \lab{e}\cup\{x\}         \\
  \lab{t,\rho}      &=& \lab{t}\cup\lab{\rho}   & 
  \lab{\rho}        &=& \bigcup_{x\in\dom\rho} \lab{\rho(x)}
\end{array}
\end{displaymath}
\end{definition}

\begin{definition}
  A cache $\cache,\aenv$ {\em respects} $\langle t,\rho\rangle$
  (written $\cache,\aenv\loves t,\rho$) when,
\begin{enumerate}
\item $\rho$ linearly closes $t$,
\item $\forall x \in \dom{\rho} . \rho(x) = \langle t',\rho'\rangle
\Rightarrow \aenv(x) = \{t'\} \mbox { and } \cache,\aenv\loves t',\rho'$,
\item $\forall \ell \in \lab{t}, \cache(\ell) = \emptyset$, and
\item $\forall x \in \bv{t}, \aenv(x) = \emptyset$.
\end{enumerate}
\end{definition}
Clearly, $\emptyset\loves t,\emptyset$ when $t$ is closed and linear,
i.e.~$t$ is a linear 

\begin{figure}[h]
\begin{displaymath}
\begin{array}{lcl}
\avf_0 & : & \Exp \times \ACache\rightarrow \ACache\\ 
\\
\avz{x^\ell}\;\cache,\aenv & = & 
\cache[\ell \mapsto \aenv(x)], \aenv\\
\avz{(\lambda x.e)^\ell}\;\cache,\aenv & = & 
\cache[\ell \mapsto \{\lambda x.e\}], \aenv\\
\avz{(t^{\ell_1}\ t^{\ell_2})^\ell}\;\cache,\aenv & = &
\cache_3[\ell\mapsto \cache_3(\ell_0)],\aenv_3,\mbox{ where}\\
\ &\ &\begin{array}{lcl}
  \delta' & = & \lceil\delta\ell\rceil_k\\
  \cache_1,\aenv_1 & = & \avz{t^{\ell_1}}\;\cache,\aenv\\
  \cache_2,\aenv_2 & = & \avz{t^{\ell_2}}\;\cache_1,\aenv_1\\
  \cache_3,\aenv_3 & = & \\
\multicolumn{3}{c}{
\bigsqcup\nolimits_{\lambda x.t^{\ell_0}}^{\cache_2(\ell)}
\left(\avz{t^{\ell_0}}\;\cache_2,\aenv_2[x\mapsto \cache_2(\ell_2)]\right)}
\end{array}
\end{array}
\end{displaymath}
\caption{Abstract evaluator $\mathcal{A}_0$ for 0CFA, functional style.}
\label{fig-0cfa-functional}
\end{figure}

\autoref{fig-0cfa-functional} gives a ``cache-passing'' functional
algorithm for $\avz{\cdot}$ of \autoref{sec:simple}.  It is equivalent
to the functional style abstract evaluator of
\autoref{fig-a-functional} specialized by letting $k = 0$.
We now state and prove the main theorem of this section in terms of
this abstract evaluator.

\begin{theorem}\label{thm:main}
  If $\cache,\aenv\loves t,\rho$, $\cache(\ell)=\emptyset$,
  $\ell\notin\lab{t,\rho}$, $\eval[']{t^\ell}{\rho} = \langle
  t',\rho'\rangle$, and $\avz{t^\ell}{\cache,\aenv} = \cache',\aenv'$,
  then $\cache'(\ell) = \{ t' \}$, $\cache'\loves t',\rho'$, and
  $\cache',\aenv'\models t^\ell$.
\end{theorem}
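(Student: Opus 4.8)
The plan is to prove \autoref{thm:main} by induction on the size $|t,\rho|$ of the linear closure, with a case analysis on the term $t$. This induction is well-founded because \autoref{lem:smaller} guarantees that every recursive call of $\evalf[']$ in the application case is made on a strictly smaller linear closure: the operator and operand calls drop a summand, and the body call $\eval[']{t_0^{\ell_0}}{\rho_1'[x\mapsto v]}$ satisfies $|t_0,\rho_1'[x\mapsto v]|\le |t_1,\rho_1|+|t_2,\rho_2| < |t_1\,t_2,\rho|$ by \autoref{lem:smaller}. The two base cases are immediate. For $t=x$, the respects hypothesis forces $\rho = [x\mapsto\langle t',\rho'\rangle]$ with $\aenv(x)=\{t'\}$ and $\cache,\aenv\loves t',\rho'$, so reading off the variable clauses of $\evalf[']$ and $\avf_0$ gives $\cache'(\ell)=\aenv(x)=\{t'\}$ at once; the abstraction case is simpler still, since $\evalf[']$ returns $\langle t,\rho\rangle$ unchanged while $\avf_0$ writes $\{t\}$ into $\cache(\ell)$. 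In both cases the freshness hypothesis $\ell\notin\lab{t,\rho}$, together with the fact that $\lab{t',\rho'}\subseteq\lab{t,\rho}$ (the label-analog of \autoref{lem:smaller}, proved the same way), shows that the single write at $\ell$ disturbs neither the respects relation nor acceptability.

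The application case $t=(t_1^{\ell_1}\,t_2^{\ell_2})$ is the crux. First I would record that linearity splits $\rho$ into $\rho_1=\rho\restrict\fv{t_1^{\ell_1}}$ and $\rho_2=\rho\restrict\fv{t_2^{\ell_2}}$ with disjoint domains, and that $\cache,\aenv\loves t_1\,t_2,\rho$ restricts to $\cache,\aenv\loves t_1,\rho_1$ and $\cache,\aenv\loves t_2,\rho_2$. Tracing $\avf_0$ on the application, I apply the induction hypothesis to $t_1^{\ell_1}$ to obtain $\cache_1(\ell_1)=\{\lambda x.t_0^{\ell_0}\}$ as a \emph{singleton} matching the operator to which $\evalf[']$ evaluates, together with $\cache_1,\aenv_1\loves\lambda x.t_0^{\ell_0},\rho_1'$ and $\cache_1,\aenv_1\models t_1^{\ell_1}$; then to $t_2^{\ell_2}$ to obtain $\cache_2(\ell_2)=\{t_2'\}$ and the analogous facts. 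The singleton flow set at $\ell_1$ is exactly what collapses the $\bigsqcup$ over the operator's flow set $\cache_2(\ell_1)$ to a single function, so the body is analyzed precisely once, against $\aenv_2[x\mapsto\cache_2(\ell_2)]$; this is the point at which analysis ceases to approximate and becomes evaluation. A final appeal to the induction hypothesis on the body $t_0^{\ell_0}$ under $\rho_1'[x\mapsto v]$ yields $\cache_3(\ell_0)=\{t'\}$, $\cache_3,\aenv_3\loves t',\rho'$, and $\cache_3,\aenv_3\models t_0^{\ell_0}$, from which the three conclusions for the application follow after the concluding write $\cache(\ell)\leftarrow\cache_3(\ell_0)$.

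The main obstacle is the bookkeeping of the \emph{frame conditions} that make these three successive invocations of the induction hypothesis legitimate. To invoke it on $t_2^{\ell_2}$ I must know that analyzing $t_1^{\ell_1}$ left $\cache,\aenv\loves t_2,\rho_2$ intact and did not overwrite $\cache(\ell_1)$; to invoke it on the body I must establish $\cache_2,\aenv_2[x\mapsto\cache_2(\ell_2)]\loves t_0,\rho_1'[x\mapsto v]$ and $\cache_2(\ell_0)=\emptyset$. I would discharge these with a locality lemma stating that $\avz{e}$ only adds flows at labels in $\lab{e}$ and at the labels of closures reachable through its environment, and only at variables in $\bv{e}$ and those closures. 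Unique labelling and the disjointness of free- and bound-variable occurrences guaranteed by linearity then make the ``worlds'' of $t_1,\rho_1$, of $t_2,\rho_2$, and of the argument value pairwise non-interfering, so each established invariant survives the later writes. The same locality, together with the observation that in the linear setting every operator position carries a singleton that never grows, is what lets the acceptability judgments $\models t_1^{\ell_1}$ and $\models t_2^{\ell_2}$ be transported monotonically up to the final cache $\cache',\aenv'$ without incurring new obligations from the universally quantified application clause.
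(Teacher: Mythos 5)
Your proposal is correct and follows essentially the same route as the paper's proof: induction on $|t,\rho|$ with case analysis on $t$, using linearity to split the environment, the induction hypothesis to obtain singleton flow sets that collapse the join over the operator's flows, and Lemma~\ref{lem:smaller} to justify the recursive appeal on the body. The ``frame conditions'' you isolate as the main obstacle are handled in the paper by the one-line observation that $\cache_1 = \cache\cup\cache'_1$ and $\cache_2 = \cache\cup\cache'_2$ with $\cache'_1,\cache'_2$ disjoint; your proposed locality lemma is just a more explicit discharge of that same fact.
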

An important consequence is noted in Corollary \ref{cor:normal}.

\begin{proof} By induction on $|t,\rho|$, reasoning by case analysis on $t$.
\begin{itemize}
\item Case $t\equiv x$.

Since $\cache\loves x,\rho$ and $\rho$ linearly closes $x$, thus $\rho
= [x\mapsto \langle t',\rho'\rangle]$ and $\rho'$ linearly closes
$t'$.  By definition,
\begin{eqnarray*}
\eval[']{x^\ell}\rho &=& \langle t',\rho'\rangle, \mbox{ and}\\
\avz{x^\ell}\cache &=& \cache[x\leftrightarrow \ell].
\end{eqnarray*}
Again since $\cache\loves x,\rho$, $\cache(x) = \{t'\}$, with which the
assumption $\cache(\ell)=\emptyset$ implies
\begin{displaymath}
\cache[x\leftrightarrow\ell](x) =
\cache[x\leftrightarrow\ell](\ell) = 
\{t'\},
\end{displaymath}
and therefore $\cache[x\leftrightarrow\ell]\models x^\ell$.  It
remains to show that $\cache[x\leftrightarrow\ell]\loves t',\rho'$.
By definition, $\cache\loves t',\rho'$.  Since $x$ and $\ell$ do not
occur in $t',\rho'$ by linearity and assumption, respectively, it
follows that $\cache[x\mapsto\ell]\loves t',\rho'$ and the case
holds.

\item Case $t\equiv \lambda x.e_0$.

By definition,
\begin{eqnarray*}
\eval[']{(\lambda x.e_0)^\ell}\rho & = & \langle\lambda x.e_0,\rho\rangle,\\
\avz{(\lambda x.e_0)^\ell}\cache & = & \cache[\ell\mapsto^+ \{\lambda x.e_0\}],
\end{eqnarray*}
and by assumption $\cache(\ell) = \emptyset$, so $\cache[\ell\mapsto^+
\{\lambda x.e_0\}](\ell) = \{\lambda x.e_0\}$ and therefore
$\cache[\ell\mapsto^+ \{\lambda x.e_0\}]\models (\lambda x.e_0)^\ell$.
By assumptions $\ell\notin\lab{\lambda x.e_0,\rho}$ and
$\cache\loves\lambda x.e_0,\rho$, it follows that
$\cache[\ell\mapsto^+ \{\lambda x.e_0\}]\loves\lambda x.e_0,\rho$ and
the case holds.

\item Case $t\equiv t_1^{\ell_1}\; t_2^{\ell_2}$. Let
\begin{eqnarray*}
\eval[']{t_1}\rho\restrict\fv{t_1^{\ell_1}} &=& \langle v_1,\rho_1\rangle = \langle\lambda x.t_0^{\ell_0},\rho_1\rangle,\\
\eval[']{t_2}\rho\restrict\fv{t_2^{\ell_2}} &=& \langle v_2,\rho_2\rangle,\\
\avz{t_1}\cache &=& \cache_1, \mbox{ and}\\
\avz{t_2}\cache &=& \cache_2.
\end{eqnarray*}
Clearly, for $i \in \{1,2\}$, $\cache\loves t_i,\rho\restrict\fv{t_i}$ and
\begin{eqnarray*}
1+\sum_i |t_i^{\ell_i},\rho\restrict\fv{t_i^{\ell_i}}| &=& |(t_1^{\ell_1}\;t_2^{\ell_2}),\rho|.
\end{eqnarray*}

By induction, for $i\in\{1,2\} : \cache_i(\ell_i) = \{v_i\},
\cache_i\loves\langle v_i,\rho_i\rangle,$ and $\cache_i\models
t_i^{\ell_i}$.  From this, it is straightforward to observe that
$\cache_1 = \cache \cup \cache'_1$ and $\cache_2 = \cache \cup
\cache'_2$ where $\cache'_1$ and $\cache'_2$ are disjoint.  So let
$\cache_3 = (\cache_1 \cup \cache_2)[x\leftrightarrow \ell_2]$.  It is
clear that $\cache_3\models t_i^{\ell_i}$.  Furthermore,
\begin{eqnarray*}
\cache_3 &\loves& t_0,\rho_1[x\mapsto \langle v_2,\rho_2\rangle],\\
\cache_3(\ell_0) &=& \emptyset,\mbox{ and}\\
\ell_0 &\notin& \lab{t_0,\rho_1[x\mapsto \langle v_2,\rho_2\rangle]}.
\end{eqnarray*}

By Lemma \ref{lem:smaller},
$|v_i,\rho_i| \leq |t_i,\rho\restrict\fv{t_i}|$, therefore 
\begin{eqnarray*}
|t_0,\rho_1[x\mapsto \langle v_2,\rho_2\rangle]| &<& |(t_1^{\ell_1}\;t_2^{\ell_2})|.
\end{eqnarray*}
Let
\begin{eqnarray*}
\eval[']{t_0^{\ell_0}}\rho_1[x\mapsto \langle v_2,\rho_2\rangle] &=& \langle v',\rho'\rangle,\\
\avz{t_0^{\ell_0}}\cache_3 &=& \cache_4,
\end{eqnarray*}
and by induction, $\cache_4(\ell_0) = \{v'\}$, $\cache_4\loves
v',\rho'$, and $\cache_4\models v'$.  Finally, observe that
$\cache_4[\ell\leftrightarrow\ell_0](\ell) =
\cache_4[\ell\leftrightarrow\ell_0](\ell_0) = \{v'\}$,
$\cache_4[\ell\leftrightarrow\ell_0]\loves v',\rho'$, and
$\cache_4[\ell\leftrightarrow\ell_0]\models
(t_1^{\ell_1}\;t_2^{\ell_2})^\ell$, so the case holds.
\end{itemize}
\end{proof}
We can now establish the correspondence between analysis and
evaluation.

\begin{corollary}\label{cor:normal}
If $\cache$ is the simple closure analysis of a linear program
$t^\ell$, then $\eval[']{t^\ell}\emptyset = \langle v,\rho'\rangle$ where
$\cache(\ell) = \{v\}$ and $\cache\loves v,\rho'$.
\end{corollary}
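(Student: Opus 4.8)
The plan is to obtain the corollary as a direct instantiation of Theorem~\ref{thm:main} at the empty cache and empty environment, followed by a short argument identifying the single-pass output of the abstract evaluator with the simple closure analysis itself.

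First I would discharge the hypotheses of Theorem~\ref{thm:main} taking $\rho = \emptyset$ and $\cache = \aenv = \emptyset$. Since $t^\ell$ is a closed linear program, the remark preceding the theorem gives $\emptyset \loves t,\emptyset$; the empty cache trivially satisfies $\cache(\ell) = \emptyset$; and $\ell \notin \lab{t,\emptyset} = \lab{t}$, because the program is uniquely labelled and the outermost label $\ell$ does not recur inside $t$. The remaining hypothesis is that $\eval[']{t^\ell}{\emptyset}$ converges. For this I would invoke strong normalization of linear terms: as noted before Lemma~\ref{lem:smaller}, $\beta$-reduction strictly decreases the size of a linear term (the environment analog being Lemma~\ref{lem:smaller} itself), so no infinite reduction sequence exists and the call-by-value evaluator $\evalf[']$ necessarily converges on $t^\ell$ to some closure $\langle v,\rho'\rangle$.

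With all hypotheses met, Theorem~\ref{thm:main} yields $\cache'(\ell) = \{v\}$, $\cache' \loves v,\rho'$, and $\cache',\aenv' \models t^\ell$, where $\cache',\aenv' = \avz{t^\ell}{\emptyset,\emptyset}$ is the result of a single pass of the evaluator from the bottom cache. It then remains to show that this single-pass cache $\cache'$ is exactly the simple closure analysis $\cache$, i.e.\ the least acceptable cache of Figure~\ref{fig-simple-declarative}. I would argue this through the standard correspondence between the declarative specification and the algorithm: a cache is acceptable precisely when a further application of $\avz{t^\ell}$ adds no new flows, so $\cache',\aenv' \models t^\ell$ says exactly that $\cache'$ is a fixed point of the (monotone, increasing) abstract evaluator. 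Being a fixed point reached from $\bot$ in one pass, $\cache'$ is the \emph{least} fixed point, hence equal to the simple closure analysis $\cache$; the two stated properties $\cache(\ell) = \{v\}$ and $\cache \loves v,\rho'$ transfer immediately.

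I expect the main obstacle to be this last identification step---pinning down that for a linear program the abstract evaluator attains its least fixed point in a single pass. The conceptual weight is already carried by Theorem~\ref{thm:main}, whose proof shows that each cache location is written exactly once, mirroring evaluation with no merging and no re-evaluation; the corollary must only repackage ``acceptable after one pass'' as ``least acceptable,'' which is where care is needed with the coinductive, greatest-fixed-point reading of $\models$ and its relationship to the least cache computed by the algorithm.
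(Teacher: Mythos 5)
Your proposal is correct and follows exactly the route the paper intends: the paper states Corollary~\ref{cor:normal} as an immediate consequence of Theorem~\ref{thm:main} without spelling out a proof, and your instantiation at $\rho=\emptyset$, $\cache=\aenv=\emptyset$ is precisely that argument. The two details you supply beyond what the paper writes down---termination of $\evalf[']$ on linear terms via the size-decrease observation behind Lemma~\ref{lem:smaller}, and the identification of the one-pass fixed point from $\bot$ with the least acceptable cache---are both sound and are exactly the points the paper leaves implicit.
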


By a simple replaying of the proof substituting the containment
constraints of 0CFA for the equality constraints of simple closure
analysis, it is clear that the same correspondence can be established,
and therefore 0CFA and simple closure analysis are identical for
linear programs.

\begin{corollary}
If $e$ is a linear program, then $\cache$ is the simple closure
analysis of $e$ iff $\cache$ is the 0CFA of $e$.
\end{corollary}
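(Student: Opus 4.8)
The plan is to show that both analyses factor through the \emph{same} object, namely evaluation, and then invoke the fact that evaluation is a function. Corollary \ref{cor:normal} already establishes that the least simple closure analysis $\cache$ of a linear program $t^\ell$ satisfies $\cache(\ell)=\{v\}$ and $\cache\loves v,\rho'$, where $\eval[']{t^\ell}\emptyset=\langle v,\rho'\rangle$; moreover the inductive proof of Theorem \ref{thm:main} threads through every subexpression, fixing the cache value at each label of $t$ in terms of the corresponding evaluation step. My first move is to obtain the identical statement for 0CFA. Since $\evalf[']$ is a (partial) \emph{function}, the result $\langle v,\rho'\rangle$ together with the intermediate value at each label is uniquely determined, so the least cache produced by either analysis is forced to be this one and the same object.

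To obtain the 0CFA analog of Corollary \ref{cor:normal}, I would replay the proof of Theorem \ref{thm:main} verbatim, replacing the bidirectional merge of simple closure analysis by the unidirectional update of 0CFA—equivalently, replacing the equality clauses of \autoref{fig-simple-declarative} by the containment clauses of \autoref{fig-0cfa-acceptability}. The point that makes this replay sound is that, on a linear program, the two kinds of constraint impose the \emph{same} condition. Concretely, the respects invariant $\cache,\aenv\loves t,\rho$ guarantees (clauses 3 and 4) that $\cache(\ell)=\emptyset$ for every $\ell\in\lab{t}$ and $\aenv(x)=\emptyset$ for every $x\in\bv{t}$ \emph{before} these locations are written. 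Hence at each step where simple closure analysis merges $\cache(\ell)$ with $\aenv(x)$, or $\cache(\ell)$ with $\cache(\ell_0)$, one of the two sides is empty, the reverse direction of the merge is vacuous, and the bidirectional update behaves exactly as the 0CFA update. Linearity is precisely what secures this emptiness invariant: each bound variable and each subexpression is written at most once, so no location ever accumulates a second, distinct value that an equality constraint would be obliged to propagate backwards.

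With the 0CFA analog of Corollary \ref{cor:normal} in hand, the biconditional follows immediately. If $\cache$ is the least simple closure analysis of the linear program $e$, then $\cache$ agrees with the evaluation-determined cache at every relevant label and variable; that same evaluation determines the least 0CFA cache; hence the two are equal and $\cache$ is the 0CFA of $e$, and conversely. The main obstacle I anticipate is the second paragraph: carefully checking, case by case in the replayed induction, that the emptiness invariant supplied by $\loves$ genuinely holds at the moment of every write, so that equality and containment coincide throughout. This is exactly the step that fails in the absence of linearity—the footnote's term $(\lambda f.f(a(f\,b)))(\lambda x.x)$, where $x$ is bound to both $b$ and $a\,b$, is a location that receives two distinct values, and there the equality constraints of simple closure analysis strictly enlarge the cache beyond the 0CFA solution.
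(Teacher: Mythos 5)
Your proposal is correct and takes essentially the same route as the paper, which disposes of this corollary in one sentence: ``By a simple replaying of the proof substituting the containment constraints of 0CFA for the equality constraints of simple closure analysis, it is clear that the same correspondence can be established, and therefore 0CFA and simple closure analysis are identical for linear programs.'' Your elaboration of \emph{why} the replay is sound---that clauses 3 and 4 of the $\loves$ invariant guarantee each cache location is empty at the moment of its unique write, so every bidirectional merge degenerates to the unidirectional 0CFA update---is a useful filling-in of detail the paper leaves implicit, but it is the same argument.
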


\paragraph{Discussion:}

Returning to our earlier question of the computationally potent
ingredients in a static analysis, we can now see that when the term is
linear, whether flows are directional and bidirectional is irrelevant.
For these terms, simple closure analysis, 0CFA, and evaluation are
equivalent.  And, as we will see, when an analysis is {\em exact} for
linear terms, the analysis will have a \ptime-hardness bound.

\section{Lower Bounds for Flow Analysis}
\label{sec:circuits}

There are at least two fundamental ways to reduce the complexity of
analysis.  One is to compute more approximate answers, the other is to
analyze a syntactically restricted language.

We use {\em linearity} as the key ingredient in proving lower bounds
on analysis.  This shows not only that simple closure analysis and
other flow analyses are \ptime-complete, but the result is rather
robust in the face of analysis design based on syntactic restrictions.
This is because we are able to prove the lower bound via a highly
restricted programming language---the linear $\lambda$-calculus.  So
long as the subject language of an analysis includes the linear
$\lambda$-calculus, and is exact for this subset, the analysis must be
at least \ptime-hard.

The decision problem answered by flow analysis, described in
\autoref{chap:foundations}, is formulated for monovariant analyses as
follows:
\begin{description}
\item[Flow Analysis Problem:] Given a closed expression $e$, a term
$v$, and label $\ell$, is $v \in \cache(\ell)$ in the analysis of $e$?
\end{description}

\begin{theorem}
If analysis corresponds to evaluation on linear terms, it is
\ptime-hard.
\end{theorem}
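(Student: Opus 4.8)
The plan is to reduce the circuit value problem (CVP), the canonical \ptime-complete problem \cite{ladner-75}, to the flow analysis decision problem, exploiting the hypothesis that the analysis evaluates linear terms exactly. Given a Boolean circuit $C$ together with its input assignment, I will construct, using only logarithmic workspace, a closed linear $\lambda$-term $e_C$ whose normal form encodes the output of $C$. Because the analysis is assumed to coincide with evaluation on linear terms---the abstracted content of Corollary \ref{cor:normal}---analyzing $e_C$ computes its value, so the encoded output Boolean is forced to flow to a predetermined location in the cache $\cache$. Reading off that flow answers CVP, establishing \ptime-hardness.

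The crux is a linearity-respecting encoding of Boolean logic. A naive Church encoding is non-linear: \True\ discards its second argument and \False\ its first. First I would adopt a \emph{symmetric} representation in which each Boolean consumes both of its arguments exactly once, e.g.\ by having \True\ return its two arguments in order and \False\ return them swapped; branching is then recovered by applying the Boolean and projecting, feeding the unwanted component to a sink so that no variable is dropped. Negation becomes a trivial swap, and I would build linear combinators \AND\ and \Or\ from this primitive. The remaining difficulty---and the genuine obstacle---is \emph{fan-out}: a wire in a circuit may feed several gates, but a linear variable may occur only once. I would resolve this with an explicit \Copy\ combinator that linearly duplicates a (symmetric) Boolean, inserting a copy gate at every point where the circuit shares a value. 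Dually, any value a gate does not propagate must be explicitly consumed rather than silently discarded.

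With these pieces, I would translate the circuit compositionally. Sorting the gates of $C$ topologically, I would bind each gate's output to a fresh variable through a chain of $\lambda$-abstractions and applications, threading wires so that every bound variable occurs exactly once---using \Copy\ wherever the out-degree exceeds one and a sink wherever a value is unused. The resulting $e_C$ is linear by construction, and its size and the attendant bookkeeping (variable naming, gate ordering, wiring) are computable in \logspace, so the reduction is a legitimate \logspace\ (hence polynomial-time) reduction. Invoking the hypothesis, the least analysis of $e_C$ equals its evaluation, so the output gate's value appears, unapproximated, as the singleton flow set at the output label.

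Finally I would phrase the answer as an instance of the Flow Analysis Problem. Placing the circuit's output in a discriminating context---an application whose operator position receives a distinguished abstraction exactly when the output is \True---I reduce the question ``is the circuit's output true?'' to ``does $v \in \cache(\ell)$?'' for a specific term $v$ and label $\ell$. Since CVP is \ptime-complete and the entire construction is a \logspace\ reduction faithfully computed by the analysis, the flow analysis problem is \ptime-hard. I expect the linear encoding of fan-out and discarding to be the main technical obstacle, since maintaining exact single-occurrence usage across an arbitrary circuit DAG---while keeping the symmetric Booleans well-behaved under \Copy---is where the construction can most easily go wrong.
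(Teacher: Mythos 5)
Your overall strategy is the paper's: a \logspace\ reduction from the circuit value problem, a symmetric pair-based encoding of Booleans (\True\ as the pair whose first component is the identity-on-pairs \TT, \False\ as the swap \FF, with the second component carrying the complement), an explicit linear \Copy\ for fan-out, and a final discriminating context (the paper's \Widget) to turn ``is the output true?'' into a single well-posed instance of ``does $v\in\cache(\ell)$?''. All of that matches.

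The gap is in how you dispose of unwanted values. You propose ``feeding the unwanted component to a sink'' and ``a sink wherever a value is unused,'' but in a strictly linear $\lambda$-calculus there is no sink: a term that consumes its argument must still place that argument somewhere in its result, so the junk either reappears in your output (corrupting the Boolean you want to read off) or you are forced into a K-redex, which makes the term non-linear and voids the theorem's hypothesis that analysis coincides with evaluation \emph{on linear terms}. The paper's resolution is not a sink but \emph{symmetric garbage annihilation}: in the gate for \AND\ (and \Implies), the two leftover components $p_2,q_2$ are, by De Morgan duality, always one \TT\ and one \FF\ (though you cannot tell which), so $p_2\circ q_2=\FF$ and $p_2\circ q_2\circ\FF=\TT$, the identity; composing this onto a genuinely needed output $q_1$ yields $q_1$ unchanged, consuming every bound variable exactly once with no discarding and no spurious flows. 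You correctly flagged discarding as the main technical obstacle, but the proposal does not actually overcome it, and this is precisely the step the proof cannot do without.
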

The proof is by reduction from the canonical \ptime-complete problem
of circuit evaluation \cite{ladner-75}:
\begin{description}
\item[Circuit Value Problem:] Given a Boolean circuit $C$ of $n$
inputs and one output, and truth values $\vec{x} = x_1,\dots,x_n$, is
$\vec{x}$ accepted by $C$?
\end{description}

An instance of the circuit value problem can be compiled, using only
logarithmic space, into an instance of the flow analysis problem.  The
circuit and its inputs are compiled into a linear $\lambda$-term,
which simulates $C$ on $\vec{x}$ via {\em evaluation}---it normalizes
to true if $C$ accepts $\vec{x}$ and false otherwise.  But since the
analysis faithfully captures evaluation of linear terms, and our
encoding is linear, the circuit can be simulated by flow analysis.

The encodings work like this: \TT\ is the identity on pairs, and \FF\
is the swap.  Boolean values are either $\langle\TT,\FF\rangle$ or
$\langle\FF,\TT\rangle$, where the first component is the ``real''
value, and the second component is the complement.  
\begin{displaymath}
\begin{array}{rcl@{\qquad}rcl}
\TT & \equiv & \lambda p . \mbox{let }\langle x,y \rangle = p\mbox{ in } \langle x, y\rangle &
\True & \equiv &\langle \TT,\FF \rangle\\
\FF & \equiv & \lambda p . \mbox{let }\langle x,y \rangle = p\mbox{ in } \langle y, x\rangle &
\False & \equiv &\langle \FF,\TT \rangle
\end{array}
\end{displaymath}

The simplest connective is \Not, which is an inversion on pairs, like
\FF.  A {\em linear} copy connective is defined as:
\begin{eqnarray*}
\Copy & \equiv & \lambda b.\mbox{let }\langle u,v\rangle = b\mbox{ in }
\langle u\langle \TT,\FF\rangle, v\langle \FF,\TT\rangle\rangle.
\end{eqnarray*}
The coding is easily explained: suppose $b$ is \True, then $u$ is
identity and $v$ twists; so we get the pair
$\langle\True,\True\rangle$.  Suppose $b$ is \False, then $u$ twists
and $v$ is identity; we get $\langle\False,\False\rangle$.  We write
$\Copy_n$ to mean $n$-ary fan-out---a straightforward extension of the
above.

The \AND\ connective is defined as follows:
\begin{displaymath}
\begin{array}{rcl}
\AND & \equiv & \lambda b_1.\lambda b_2.\\
\ & \ & \quad\mbox{let }\langle u_1,v_1\rangle = b_1\mbox{ in}\\
\ & \ & \quad\mbox{let }\langle u_2,v_2\rangle = b_2\mbox{ in}\\
\ & \ & \quad\mbox{let }\langle p_1,p_2\rangle = u_1\langle u_2, \FF\rangle \mbox{ in}\\
\ & \ & \quad\mbox{let }\langle q_1,q_2\rangle = v_1\langle \TT, v_2\rangle \mbox{ in}\\
\ & \ & \qquad\langle p_1, q_1 \circ p_2 \circ q_2 \circ \FF \rangle.
\end{array}
\end{displaymath}
Conjunction works by computing pairs $\langle p_1,p_2\rangle$ and
$\langle q_1,q_2\rangle$.  The former is the usual conjunction on the
first components of the Booleans $b_1,b_2$: $u_1\langle u_2,
\FF\rangle$ can be read as ``if $u_1$ then $u_2$, otherwise false
(\FF).''  The latter is (exploiting De Morgan duality) the disjunction
of the complement components of the Booleans:
$v_1\langle\TT,v_2\rangle$ is read as ``if $v_1$ (i.e.~if not $u_1$)
then true (\TT), otherwise $v_2$ (i.e.~not $u_2$).''  The result of
the computation is equal to $\langle p_1,q_1\rangle$, but this leaves
$p_2,q_2$ unused, which would violate linearity.  However, there is
symmetry to this {\em garbage}, which allows for its disposal.  Notice
that, while we do not know whether $p_2$ is \TT\ or \FF\ and similarly
for $q_2$, we do know that {\em one of them is \TT\ while the other is
  \FF}.  Composing the two together, we are guaranteed that $p_2 \circ
q_2 = \FF$.  Composing this again with another twist (\FF) results in
the identity function $p_2 \circ q_2 \circ \FF = \TT$.  Finally,
composing this with $q_1$ is just equal to $q_1$, so $\langle p_1, q_1
\circ p_2 \circ q_2 \circ \FF \rangle = \langle p_1, q_1\rangle$,
which is the desired result, but the symmetric garbage has been {\em
  annihilated}, maintaining linearity.

Similarly, we define truth-table implication:
\begin{displaymath}
\begin{array}{rcl}
\Implies & \equiv & \lambda b_1.\lambda b_2.\\
\ & \ & \quad\mbox{let }\langle u_1,v_1\rangle = b_1\mbox{ in}\\
\ & \ & \quad\mbox{let }\langle u_2,v_2\rangle = b_2\mbox{ in}\\
\ & \ & \quad\mbox{let }\langle p_1,p_2\rangle = u_1\langle u_2, \TT\rangle \mbox{ in}\\
\ & \ & \quad\mbox{let }\langle q_1,q_2\rangle = v_1\langle \FF, v_2\rangle \mbox{ in}\\
\ & \ & \qquad\langle p_1, q_1 \circ p_2 \circ q_2 \circ \FF \rangle
\end{array}
\end{displaymath}
Let us work through the construction once more: Notice that if $b_1$
is \True, then $u_1$ is \TT, so $p_1$ is \TT\ iff $b_2$ is \True.  And
if $b_1$ is \True, then $v_1$ is \FF, so $q_1$ is \FF\ iff $b_2$ is
\False.  On the other hand, if $b_1$ is \False, $u_1$ is \FF, so $p_1$
is \TT, and $v_1$ is \TT, so $q_1$ is \FF.  Therefore $\langle
p_1,q_1\rangle$ is \True\ iff $b_1 \supset b_2$, and \False\
otherwise. Or, if you prefer, $u_1 \langle u_2,\TT\rangle$ can be read
as ``if $u_1$, then $u_2$ else $\TT$''---the if-then-else description
of the implication $u_1\supset u_2$ ---and $v_1 \langle \FF,
v_2\rangle$ as its De Morgan dual $\neg(v_2\supset v_1)$.  Thus
$\langle p_1,q_1\rangle$ is the answer we want---and we need only
dispense with the ``garbage'' $p_2$ and $q_2$.  De Morgan duality
ensures that one is $\TT$, and the other is $\FF$ (though we do not
know which), so they always compose to $\FF$.

However, simply returning $\langle p_1,q_1\rangle$ violates linearity since
$p_2,q_2$ go unused.  We know that $p_2 = \TT$ iff $q_2 = \FF$ and
$p_2 = \FF$ iff $q_2 = \TT$.  We do not know which is which, but
clearly $p_2 \circ q_2 = \FF\circ\TT = \TT\circ\FF = \FF$.  Composing $p_2\circ
q_2$ with $\FF$, we are guaranteed to get $\TT$.  Therefore $q_1
\circ p_2 \circ q_2 \circ \FF = q_1$, and we have used all bound
variables exactly once.

This hacking, with its self-annihilating garbage, is an improvement
over that given by \citet{mairson-jfp04} and allows Boolean computation
without K-redexes, making the lower bound stronger, but also
preserving all flows.  In addition, it is the best way to do circuit
computation in multiplicative linear logic, and is how you compute
similarly in non-affine typed $\lambda$-calculus \cite{mairson-geocal06}.

By writing continuation-passing style variants of the logic gates, we
can encode circuits that look like straight-line code.  For example,
define CPS logic gates as follows:
\begin{eqnarray*}
  \Andgate  & \equiv & \lambda b_1.\lambda b_2.\lambda k.k(\AND\;b_1\;b_2)\\
  \Orgate   & \equiv & \lambda b_1.\lambda b_2.\lambda k.k(\Or\;b_1\;b_2)\\
  \Implgate & \equiv & \lambda b_1.\lambda b_2.\lambda k.k(\Implies\;b_1\;b_2)\\
  \Notgate  & \equiv & \lambda b.\lambda k. k(\Not\;b)\\
  \Copygate & \equiv & \lambda b.\lambda k.k(\Copy\;b)
\end{eqnarray*}

Continuation-passing style code such as $\Andgate\;b_1\;b_2\;(\lambda
r. e)$ can be read colloquially as a kind of low-level, straight-line
assembly language: ``compute the $\AND$ of registers $b_1$ and $b_2$,
write the result into register $r$, and goto $e$.''

\begin{figure} 
  \centering 
  \includegraphics{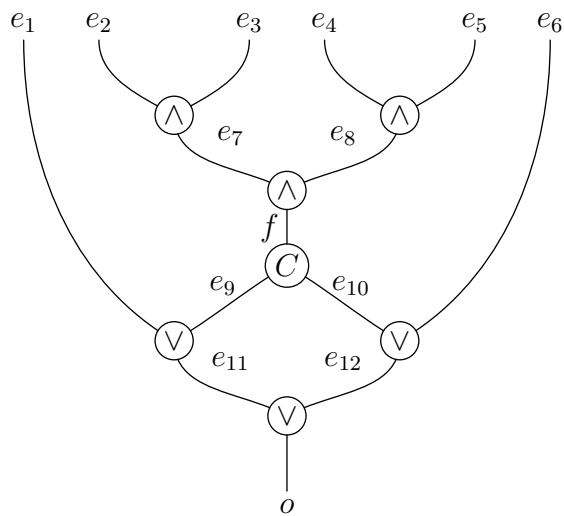}
  \caption{An example circuit.}
  \label{fig-circuit}
\end{figure}

An example circuit is given in \autoref{fig-circuit}, which can be
encoded as:
\begin{eqnarray*}
\mbox{\tt Circuit} & \equiv & 
\lambda e_1.\lambda e_2.\lambda e_3\lambda e_4.\lambda e_5.\lambda e_6.\\
& & \Andgate\; e_2\; e_3\; (\lambda e_7.\\
& & \Andgate\; e_4\; e_5\; (\lambda e_8.\\
& & \Copygate\; f\; (\lambda e_9.\lambda e_{10}.\\
& & \Orgate\; e_1\; e_9\; (\lambda e_{11}.\\
& & \Orgate\; e_{10}\; e_6\; (\lambda e_{12}.\\
& & \Orgate\; e_{11}\; e_{12}\; (\lambda o.o))))))
\end{eqnarray*}
Notice that each variable in this CPS encoding corresponds to a wire
in the circuit.

The above code says:
\begin{itemize} 
\item compute the \AND\ of $e_2$ and $e_3$, putting the result in
  register $e_7$,
\item compute the \AND\ of $e_4$ and $e_5$, putting the result in
  register $e_8$,
\item compute the \AND\ of $e_7$ and $e_8$, putting the result in
  register $f$,
\item make two copies of register $f$, putting the values in registers
  $e_9$ and $e_{10}$,
\item compute the \Or\ of $e_1$ and $e_9$, putting the result in
  register $e_{11}$,
\item compute the \Or\ of $e_{10}$ and $e_6$, putting the result in
  register $e_{12}$,
\item compute the \Or\ of $e_{11}$ and $e_{12}$, putting the result in
  the $o$ (``output'') register.
\end{itemize}

We know from corollary~\ref{cor:normal} that evaluation and analysis
of linear programs are synonymous, and our encoding of circuits will
faithfully simulate a given circuit on its inputs, evaluating to true
iff the circuit accepts its inputs.  But it does not immediately
follow that the circuit value problem can be reduced to the flow
analysis problem.  Let $||C,\vec{x}||$ be the encoding of the circuit
and its inputs.  It is tempting to think the instance of the flow
analysis problem could be stated:
\begin{center}
is \True\ in $\cache(\ell)$ in the analysis
of $||C,\vec{x}||^\ell$?
\end{center}
The problem with this is there may be many syntactic instances of
``\True.''  Since the flow analysis problem must ask about a particular
one, this reduction will not work.  The fix is to use a context which
expects a Boolean expression and induces a particular flow (that can
be asked about in the flow analysis problem) iff that expression
evaluates to a true value.

We use The Widget to this effect.  It is a term expecting a Boolean
value.  It evaluates as though it were the identity function on
Booleans, $\Widget\; b = b$, but it induces a specific flow we can ask
about. If a true value flows out of $b$, then $\True_W$ flows out of
$\Widget\; b$.  If a false value flows out of $b$, then $\False_W$
flows out of $\Widget\; b$, where $\True_W$ and $\False_W$ are
distinguished terms, and the only possible terms that can flow out.
We usually drop the subscripts and say ``does \True\ flow out of
$\Widget\; b$?''  without much ado.
\begin{eqnarray*}
\Widget & \equiv & \lambda b.\\
\ & \ & \quad \mbox{let }\langle u,v\rangle = b\mbox{ in}\\
\ & \ & \quad \mbox{let }\langle x,y\rangle = u\langle f,g\rangle\mbox{ in}\\
\ & \ & \quad \mbox{let }\langle x',y'\rangle = u'\langle f',g'\rangle\mbox{ in}\\
\ & \ & \qquad \langle\langle x a, y n\rangle, \langle x' a', y' b'\rangle\rangle
\end{eqnarray*}


Because the circuit value problem is complete for \ptime, we
conclude:

\begin{theorem}
The control flow problem for 0CFA is complete for \ptime.
\end{theorem}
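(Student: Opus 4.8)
The plan is to prove completeness in the two standard pieces: membership in \ptime\ and \ptime-hardness. Membership is already in hand—the containment lemma established earlier shows the 0CFA control flow problem lies in \ptime, since the least cache is a polynomially-sized, monotonically-computed relation that reaches a fixed point in polynomially many increments, each computable in polynomial time. So the whole burden falls on hardness, which I would obtain by reducing the circuit value problem to the 0CFA flow analysis problem, reusing the encoding machinery just developed.

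Given an instance $(C,\vec{x})$ of CVP, I would first build the linear $\lambda$-term $||C,\vec{x}||$ using the symmetric Boolean encodings (\True, \False, \Not, \Copy, \AND, \Or, \Implies) together with their CPS gate forms (\Andgate, \Orgate, \Notgate, \Copygate), threading one bound variable per wire as in the example circuit encoding. The crucial property to check is that this translation is computable in logarithmic space: each gate expands to a fixed-size term, fan-out is introduced by \Copy\ exactly at the branching wires, and the CPS skeleton follows the topological structure of $C$, so the translation is a syntax-directed pass needing only counters and pointers into the circuit description—hence \logspace. By construction the term is linear, since every bound variable, i.e.\ every wire, is used exactly once.

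Next I would invoke Corollary \ref{cor:normal}: because $||C,\vec{x}||$ is linear, its 0CFA coincides with evaluation, and by the faithfulness of the encoding the term evaluates to a true value iff $C$ accepts $\vec{x}$. The remaining difficulty—and the step I expect to be the real obstacle—is that asking ``does \True\ flow to the output label?'' is ill-posed, because the program contains many syntactic occurrences of the subterms comprising \True. The fix is to feed the circuit's Boolean output to \Widget, which behaves extensionally as the identity on Booleans but is engineered so that exactly one of two distinguished, otherwise-unused terms $\True_W,\False_W$ can flow out of it. One must verify that these are the only values that can reach the relevant label and that $\True_W$ arrives there precisely when the input Boolean is true; this is where the careful, garbage-annihilating design of the gates and of \Widget\ genuinely earns its keep, and it is the part of the argument that needs checking rather than mere bookkeeping.

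Finally I would assemble the reduction: the CVP instance is a ``yes'' instance iff $\True_W \in \cache(\ell)$ at the output label $\ell$ of \Widget\ applied to the encoded circuit. Since this is a legitimate instance of the flow analysis problem, produced in \logspace\ from $(C,\vec{x})$, and CVP is \ptime-complete, the 0CFA control flow problem is \ptime-hard; combining this with the membership lemma yields that it is complete for \ptime.
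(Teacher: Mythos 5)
Your proposal is correct and follows essentially the same route as the paper: membership via the earlier containment lemma, and hardness by a \logspace\ reduction from the circuit value problem through the linear Boolean encoding, invoking Corollary~\ref{cor:normal} to identify analysis with evaluation, and using \Widget\ to turn the ambiguous ``does \True\ flow out?'' question into a well-posed query about a distinguished term. You have also correctly identified the same subtlety the paper flags---the multiplicity of syntactic occurrences of \True---and resolve it exactly as the paper does.
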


\begin{corollary}
The control flow problem for simple closure analysis is complete for \ptime.
\end{corollary}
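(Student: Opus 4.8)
The plan is to prove completeness by assembling its two halves, membership and hardness, both of which are essentially already in hand. Membership in \ptime\ is exactly the preceding Lemma, which observes that simple closure analysis computes a monotone binary relation over a fixed, polynomially-sized structure and hence reaches a fixed point in polynomial time. So the only real work is \ptime-hardness, and for that I would appeal to the general hardness Theorem stating that any analysis corresponding to evaluation on linear terms is \ptime-hard.

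The hypothesis of that theorem is discharged by Corollary \ref{cor:normal}, which shows that for a linear program the simple closure analysis cache records exactly the value produced by evaluation. Concretely, I would reuse verbatim the logspace reduction from the circuit value problem already employed for 0CFA: given a circuit $C$ and inputs $\vec{x}$, compile them into the linear $\lambda$-term $||C,\vec{x}||$ assembled from the Boolean gates and the \Widget, placed at a designated label $\ell$, and ask whether \True\ flows out of the \Widget. Since $||C,\vec{x}||$ is linear, the reduction uses only logarithmic space, and simple closure analysis is exact on linear terms, the flow $\True \in \cache(\ell)$ holds under simple closure analysis iff $C$ accepts $\vec{x}$, exactly as for 0CFA.

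Equivalently, and perhaps most economically, I could invoke the unlabeled Corollary immediately preceding this one, which states that 0CFA and simple closure analysis produce identical caches on linear programs. Because the circuit encoding is linear, the simple closure analysis of $||C,\vec{x}||$ coincides with its 0CFA, so the \ptime-hardness established for the 0CFA control flow problem transfers immediately. Combined with the membership Lemma, this yields completeness, so no fresh reduction need be constructed at all.

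The one point deserving care---though it is supplied by the correspondence results rather than requiring new argument---is that the bidirectional, equality-based constraints of simple closure analysis introduce no spurious flows that would corrupt the \Widget\ flow question. On a linear term each bound variable is applied to a single argument, so the equality constraint $\cache(\ell_2) = \aenv(x)$ merges nothing that the containment constraint of 0CFA would have kept apart; this is precisely the content of Corollary \ref{cor:normal} together with the identity of 0CFA and simple closure analysis on linear programs. I therefore expect the \emph{only} genuine obstacle to be conceptual rather than technical: confirming that linearity of the encoding neutralizes the sole mechanism---bidirectionality---by which simple closure analysis could ever differ from 0CFA, so that the existing reduction applies unchanged.
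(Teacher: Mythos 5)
Your proposal is correct and follows essentially the same route as the paper: membership comes from the earlier containment lemma, and hardness comes from the circuit-value reduction into the linear $\lambda$-calculus together with the fact that simple closure analysis is exact on linear terms (indeed, Theorem~\ref{thm:main} and Corollary~\ref{cor:normal} are stated for the simple closure evaluator in the first place, with 0CFA obtained by replaying the proof). Your closing observation---that linearity neutralizes bidirectionality, the only mechanism by which the two analyses could differ---is precisely the insight the paper relies on.
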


\section{Other Monovariant Analyses}

In this section, we survey some of the existing monovariant analyses
that either approximate or restrict 0CFA to obtain faster analysis
times.  In each case, we sketch why these analyses are complete for
\ptime.

\citet{shivers-sigplan04} noted in his retrospective on control flow
analysis that ``in the ensuing years [since 1988], researchers have
expended a great deal of effort deriving clever ways to tame the cost
of the analysis.''  Such an effort prompts a fundamental question: to
what extent is this possible?

Algorithms to compute 0CFA were long believed to be at least cubic in
the size of the program, proving impractical for the analysis of large
programs, and \citet{heintze-mcallester-lics97} provided strong
evidence to suggest that in general, this could not be improved.  They
reduced the problem of computing 0CFA to that of deciding two-way
nondeterministic push-down automata acceptance (2NPDA); a problem
whose best known algorithm was cubic and had remained so since its
discovery \cite{aho-hopcroft-ullman-ic68}---or so it was believed; see
\autoref{sec-2npda} for a discussion.

In the face of this likely insurmountable bottleneck, researchers
derived ways of further approximating 0CFA, thereby giving up
information in the service of quickly computing a necessarily less
precise analysis in order to avoid the ``cubic bottleneck.''

Such further approximations enjoy linear or near linear algorithms and
have become widely used for the analysis of large programs where the
more precise 0CFA would be to expensive to compute.  But it is natural
to wonder if the algorithms for these simpler analyses could be
improved.  Owing to 0CFA's \ptime-lower bound, its algorithms are
unlikely to be effectively parallelized or made memory efficient.  But
what about these other analyses?

\subsection{Ashley and Dybvig's Sub-0CFA}
\label{sec:sub0}

\citet{ashley-dybvig-toplas98} developed a general framework for
specifying and computing flow analyses; instantiations of the
framework include 0CFA and the polynomial 1CFA of
\citet{jagannathan-weeks-popl95}, for example.  They also developed a
class of instantiations, dubbed {\em sub-0CFA}, that are faster to
compute, but less accurate than 0CFA.

This analysis works by explicitly bounding the number of times the
cache can be updated for any given program point.  After this
threshold has been crossed, the cache is updated with a distinguished
$\unknown$ value that represents all possible $\lambda$-abstractions
in the program.  Bounding the number of updates to the cache for any
given location effectively bounds the number of passes over the
program an analyzer must make, producing an analysis that is $O(n)$ in
the size of the program.  Empirically, Ashley and Dybvig observe that
setting the bound to 1 yields an inexpensive analysis with no
significant difference in enabling optimizations with respect to 0CFA.

The idea is the cache gets updated once ($n$ times in general) before
giving up and saying all $\lambda$-abstractions flow out of this
point.  But for a linear term, the cache is only updated at most once
for each program point.  Thus we conclude even when the sub-0CFA bound
is 1, the problem is \ptime-complete.

As Ashley and Dybvig note, for any given program, there exists an
analysis in the sub-0CFA class that is identical to 0CFA (namely by
setting $n$ to the number of passes 0CFA makes over the given
program).  We can further clarify this relationship by noting that for
all linear programs, all analyses in the sub-0CFA class are identical
to 0CFA (and thus simple closure analysis).

\subsection{Subtransitive 0CFA}

\citet{heintze-mcallester-lics97} have shown the ``cubic bottleneck''
of computing full 0CFA---that is, computing all the flows in a
program---cannot be avoided in general without combinatorial
breakthroughs: the problem is {\sc{2npda}}-hard, for which the ``the
cubic time decision procedure [\dots] has not been improved since its
discovery in 1968.''

Forty years later, that decision procedure was improved to be slightly
subcubic by \citet{chaudhuri-popl08}.  However, given the strong
evidence at the time that the situation was unlikely to improve in
general, \citet{heintze-mcallester-pldi97} identified several simpler
flow questions\footnote{Including the decision problem discussed in
  this dissertation, which is the simplest; answers to any of the
  other questions imply an answer to this problem} and designed
algorithms to answer them for simply-typed programs. Under certain
typing conditions, namely that the type is within a bounded size,
these algorithms compute in less than cubic time.

The algorithm constructs a graph structure and runs in time linear in a
program's graph.  The graph, in turn, is bounded by the size of the
program's type.  Thus, bounding the size of a program's type results
in a linear bound on the running times of these algorithms.  

If this type bound is removed, though, it is clear that even these
simplified flow problems (and their bidirectional-flow analogs), are
complete for \ptime: observe that every linear term is simply typable,
however in our lower bound construction, the type size is proportional
to the size of the circuit being simulated.  As they point out, when
type size is not bounded, the flow graph may be exponentially larger
than the program, in which case the standard cubic algorithm is
preferred.

Independently, \citet{mossin-njc98} developed a type-based analysis
that, under the assumption of a constant bound on the size of a
program's type, can answer restricted flow questions such as single
source/use in linear time with respect to the size of the explicitly
typed program.  But again, removing this imposed bound results in
\ptime-completeness.

As \citet{hankin-games} point out: both Heintze and McAllester's and
Mossin's algorithms operate on type structure (or structure isomorphic
to type structure), but with either implicit or explicit
$\eta$-expansion.  For simply-typed terms, this can result in an
exponential blow-up in type size.  It is not surprising then, that
given a much richer graph structure, the analysis can be computed
quickly.  

In this light, the results of \autoref{chap:linear-logic} on 0CFA of
$\eta$-expanded, simply-typed programs can be seen as an improvement
of the subtransitive flow analysis since it works equally well for
languages with first-class control and can be performed with only a
fixed number of pointers into the program structure, i.e.~it is
computable in \logspace\ (and in other words, \ptime\ $=$ \logspace\
up to $\eta$).





\section{Conclusions}

When an analysis is {\em exact}, it will be possible to establish a
correspondence with evaluation.  The richer the language for which
analysis is exact, the harder it will be to compute the analysis.  As
an example in the extreme, \citet{mossin-sas97} developed a flow
analysis that is exact for simply-typed terms.  The computational
resources that may be expended to compute this analysis are {\em ipso
facto} not bounded by any elementary recursive function
\cite{statman79}.  However, most flow analyses do not approach this kind
of expressivity.  By way of comparison, 0CFA only captures \ptime, and
yet researchers have still expending a great deal of effort deriving
approximations to 0CFA that are faster to compute.  But as we have
shown for a number of them, they all coincide on linear terms, and so
they too capture \ptime.

We should be clear about what is being said, and not said.  There is a
considerable difference in practice between linear algorithms
(nominally considered efficient) and cubic---or near
cubic---algorithms (still feasible, but taxing for large inputs), even
though both are polynomial-time.  \ptime-completeness does not
distinguish the two.  But if a sub-polynomial (e.g., \logspace)
algorithm was found for this sort of flow analysis, it would depend on
(or lead to) things we do not know (\logspace\ $=$ \ptime).  

Likewise, were a parallel implementation of this flow analysis to run
in logarithmic time (i.e., \nc), we would consequently be able to
parallelize every polynomial time algorithm.  \ptime-complete problems
are considered to be the least likely to be in \nc.  This is because
logarithmic-space reductions (such as our compiler from circuits to
$\lambda$-terms) preserve parallel complexity, and so by composing
this reduction with a (hypothetical) logarithmic-time 0CFA analyzer
(or equivalently, a logarithmic-time linear $\lambda$-calculus
evaluator) would yield a fast parallel algorithm for {\em all}
problems in \ptime, which are by definition, logspace-reducible to the
circuit value problem \cite[page 377]{Papadimitriou94}.

The practical consequences of the \ptime-hardness result is that we
can conclude any analysis which is exact for linear programs, which
includes 0CFA, and many further approximations, does not have a fast
parallel algorithm unless \ptime\ $=$ \nc.

\chapter{Linear Logic and Static Analysis}
\label{chap:linear-logic}

If you want to understand exactly how and where static analysis is
computationally difficult, you need to know about linearity.
%
%
In this chapter, we develop an alternative, graphical representation
of programs that makes explicit both non-linearity and control, and is
suitable for static analysis.

This alternative representation offers the following benefits:

\begin{itemize}
\item It provides clear intuitions on the essence of 0CFA and forms
the basis for a transparent proof of the correspondence between 0CFA
and evaluation for linear programs.

\item As a consequence of symmetries in the notation, it is equally
well-suited for representing programs with first-class control.

\item It based on the technology of linear logic.  Insights gleaned
from linear logic, viewed through the lens of a Curry-Howard
correspondence, can inform program analysis and {\em vice versa}.

\item As an application of the above, a novel and efficient algorithm
for analyzing typed programs (\autoref{sec:eta}) is derived from
recent results on the efficient normalization of linear logic proofs.

\end{itemize}

We give a reformulation of 0CFA in this setting and then transparently
{\em reprove} the main result of \autoref{sec:linearity}: analysis and
evaluation are synonymous for linear programs.





\section{Sharing Graphs for Static Analysis}

In general, the sharing graph of a term will consist of a
distinguished {\em root} wire from which the rest of the term's graph
``hangs.''
\begin{displaymath}
\includegraphics{figure.13}
\end{displaymath}
At the bottom of the graph, the dangling wires represent
free variables and connect to occurrences of the free variable within
in term.

Graphs consist of ternary abstraction ($\lambda$), apply (@),
sharing ($\triangledown$) nodes, and unary weakening ($\odot$)
nodes.  Each node has a distinguished {\em principal} port.  For unary
nodes, this is the only port.  The ternary nodes have two {\em
auxiliary} ports, distinguished as the {\em white} and {\em black}
ports.

\begin{itemize}
\item A variable occurrence is represented simply as a wire from the
root to the free occurrence of the variable.
\begin{displaymath}
\includegraphics{figure.14}
\end{displaymath}

\item Given the graph for $M$, where $x$ occurs free,
\begin{displaymath}
\includegraphics{figure.15}
\end{displaymath}
the abstraction $\lambda x.M$ is formed as,
\begin{displaymath}
\includegraphics{figure.16}
\end{displaymath}

Supposing $x$ does not occur in $M$, the weakening node ($\odot$) is
used to ``plug'' the $\lambda$ variable wire.
\begin{displaymath}
\includegraphics{figure.17}
\end{displaymath}

\item Given graphs for $M$ and $N$,
\begin{displaymath}
\includegraphics{figure.12},
\end{displaymath}
the application $MN$ is formed as,
\begin{displaymath}
  \includegraphics{figure.11}.
\end{displaymath}
An application node is introduced.  The operator $M$ is connected to
the function port and the operand $N$ is connected to the argument
port.  The continuation wire becomes the root wire for the
application.  Free variables shared between both $M$ and $N$ are
fanned out with sharing nodes.
\end{itemize}


\section{Graphical 0CFA}

We now describe an algorithm for performing control flow analysis that
is based on the graph coding of terms.  The graphical formulation
consists of generating a set of {\em virtual paths} for a program
graph.  Virtual paths describe an approximation of the real paths that
will arise during program execution.

\begin{figure} 
  \centering 
  \scalebox{1.5}{\input{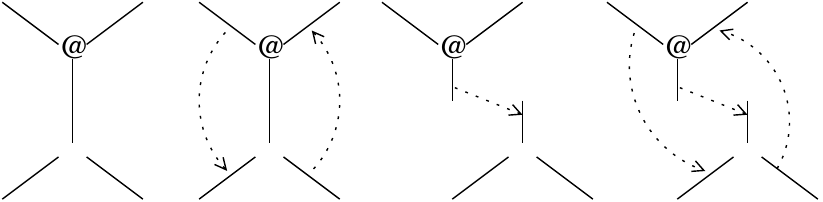_t}}
  \caption{CFA virtual wire propagation rules.} 
  \label{figure-cfa-rules} 
\end{figure}

\autoref{figure-cfa-rules} defines the virtual path propagation rules.
Note that a wire can be identified by its label or a variable
name.\footnote{We implicitly let $\ell$ range over both in the
  following definitions.}  The left hand rule states that a virtual
wire is added from the continuation wire to the body wire and from the
variable wire to the argument wire of each $\beta$-redex.  The right
hand rule states analogous wires are added to each {\em virtual
  $\beta$-redex}---an apply and lambda node connected by a virtual
path.  There is a {\em virtual path} between two wires $\ell$ and
$\ell'$, written $\ell \vp \ell'$ in a CFA-graph iff:
\begin{enumerate}
\item $\ell \equiv \ell'$.
\item There is a virtual wire from $\ell$ to $\ell'$.
\item $\ell$ connects to an auxiliary port and $\ell'$ connects to the root
port of a sharing node.
\item There is a virtual path from $\ell$ to $\ell''$ and from
$\ell''$ and $\ell'$.
\end{enumerate}

\paragraph{Reachability:} Some care must be taken to ensure leastness
when propagating virtual wires.  In particular, wires are added only
when there is a virtual path between a {\em reachable} apply and a
lambda node.  An apply node is reachable if it is on the spine of the
program, i.e., if $e=(\cdots((e_0e_1)^{\ell_1}e_2)^{\ell_2}\cdots
e_n)^{\ell_n}$ then the apply nodes with continuation wires labeled
$\ell_1,\dots,\ell_n$ are reachable, or it is on the spine of an
expression with a virtual path from a reachable apply node.

Reachability is usually explained as a known improvement to flow
analysis; precision is increased by avoiding parts of the program that
cannot be reached \cite{ayers-phd93, palsberg-schwarzbach-ic95,
  biswas-popl97, heintze-mcallester-icfp97, midtgaard-jensen-sas-08,
  midtgaard-jensen-icfp09}.

But reachability can also be understood as an analysis analog to weak
normalization.
Reachability says roughly: ``don't analyze under $\lambda$ until the
analysis determines it may be applied.''
On the other hand, weak normalization says: ``don't evaluate under
$\lambda$ until the evaluator determines it is applied.''
The analyzers of \autoref{chap:foundations} implicitly include
reachability since they are based on a evaluation function that
performs weak normalization.





The graph-based analysis can now be performed in the following way:
construct the CFA graph according to the rules in
\autoref{figure-cfa-rules}, then define $\cache(\ell)$ as $\{ (\lambda
x.e)^{\ell'}\; |\; \ell\vp \ell' \}$ and $\aenv(x)$ as $\{ (\lambda
x.e)^{\ell}\; |\; x\vp \ell \}$.  It is easy to see that the
algorithm constructs answers that satisfy the acceptability relation
specifying the analysis.  Moreover, this algorithm constructs least
solutions according to the partial order given in \autoref{sec:ai}.

\begin{lemma}
  $\cache',\aenv'\models e$ implies $\cache,\aenv \sqsubseteq
  \cache',\aenv'$ for $\cache,\aenv$ constructed for $e$ as described
  above.
\end{lemma}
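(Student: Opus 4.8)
The plan is to prove a slightly stronger statement by induction and then specialise it. Fix an arbitrary acceptable analysis $\cache',\aenv'$, i.e.\ one with $\cache',\aenv'\models e$, and write $F'(\ell)$ for $\cache'(\ell)$ when $\ell$ is an ordinary label wire and for $\aenv'(x)$ when $\ell$ is the variable wire $x$; this uniform ``flow set'' matches the convention that a wire is named by a label or by a variable. The invariant I would establish is the containment reading of virtual paths:
\[
\ell \vp \ell' \ \Rightarrow\ F'(\ell') \subseteq F'(\ell)
\qquad \mbox{for every acceptable } \cache',\aenv'.
\]
Once this is in hand the lemma is immediate: by construction $(\lambda x.e)^{\ell'}\in\cache(\ell)$ exactly when $\ell\vp\ell'$ with $\ell'$ the principal wire of $\lambda x.e$, and the abstraction clause of acceptability forces $\lambda x.e\in F'(\ell')$, so the invariant yields $\lambda x.e\in F'(\ell)$. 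The identical argument on variable wires gives $\aenv\sqsubseteq\aenv'$, and together these are exactly $\cache,\aenv\sqsubseteq\cache',\aenv'$.

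For the induction I must cope with the fact that virtual wires and virtual paths are mutually recursive: a virtual wire is emitted only from a (virtual) $\beta$-redex, whose existence is itself witnessed by a virtual path. I would therefore stratify the least-fixpoint construction by stage, the number of rounds of the propagation rules of Figure~\ref{figure-cfa-rules} needed to install a wire, and induct on the stage with an inner induction on the four clauses defining $\vp$. The reflexivity clause~(1) is trivial since $F'(\ell')\subseteq F'(\ell')$; the transitivity clause~(4) simply composes the two containments supplied by the inner hypothesis. The sharing clause~(3) asserts $F'(\ell')\subseteq F'(\ell)$ where $\ell'$ is the root and $\ell$ an auxiliary port of a $\triangledown$-node; because a sharing node only fans a bound variable's value out to its occurrences, this containment is precisely the variable clause $\aenv'(x)\subseteq\cache'(\ell^{\star})$ of acceptability for the relevant occurrence wire $\ell^{\star}$.

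The substance lives in the virtual-wire clause~(2). Such a wire is installed at a redex formed by an application $(t^{\ell_1}t^{\ell_2})^\ell$ and an abstraction $\lambda x.t^{\ell_0}$ whose principal wire $\ell_\lambda$ satisfies $\ell_1\vp\ell_\lambda$ at a strictly earlier stage. By the stage hypothesis $F'(\ell_\lambda)\subseteq F'(\ell_1)$, and the abstraction clause puts $\lambda x.t^{\ell_0}\in F'(\ell_\lambda)$, so $\lambda x.t^{\ell_0}\in\cache'(\ell_1)$. Now acceptability of the application fires its universally quantified branch for this very abstraction, delivering $\cache'(\ell_2)\subseteq\aenv'(x)$ and $\cache'(\ell_0)\subseteq\cache'(\ell)$ --- which are exactly the containments $F'(\ell_2)\subseteq F'(x)$ and $F'(\ell_0)\subseteq F'(\ell)$ carried by the two wires the rule adds (variable-to-argument and continuation-to-body). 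This closes the induction, with the acceptability application clause serving as the engine.

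The main obstacle is well-foundedness: the application clause can only be invoked after a lambda is known to reach the operator wire, and that knowledge must come from a strictly smaller derivation, so the stratification has to guarantee that the virtual path witnessing each redex is genuinely earlier than the wire it licenses. I also expect the reachability side-condition to need a remark, though it works in our favour: reachability only suppresses wires, so it can shrink $\cache,\aenv$ but never forces a flow that acceptability does not already demand, making it harmless for a leastness claim and in fact mirroring the ``analyse the body only for lambdas that flow to the operator'' shape of the acceptability relation. The sharing-node clause is the other delicate point, since the excerpt fixes the semantics of $\triangledown$-nodes only informally; I would pin down that the three ports of a sharing node carry a common flow set and that this coincides with identifying a variable's binding set $\aenv'(x)$ with the flow into each of its occurrences.
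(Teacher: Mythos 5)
The paper itself offers no proof of this lemma: it is asserted immediately after the remark that ``it is easy to see'' the constructed answers satisfy the acceptability relation and are least, so there is nothing to compare your argument against line-by-line. Judged on its own, your proposal is a sound and essentially canonical way to discharge the claim. The reduction of leastness to the invariant $\ell \vp \ell' \Rightarrow F'(\ell') \subseteq F'(\ell)$ is exactly right given the definition $\cache(\ell) = \{(\lambda x.e)^{\ell'} \mid \ell \vp \ell'\}$, and your case analysis direction-checks against Figure~\ref{figure-cfa-rules}: the continuation-to-body wire corresponds to the constraint $\cache'(\ell_0) \subseteq \cache'(\ell)$ and the variable-to-argument wire to $\cache'(\ell_2) \subseteq \aenv'(x)$ in the application clause, while the sharing-node clause is the variable clause $\aenv'(x) \subseteq \cache'(\ell)$. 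The lexicographic induction (stage of wire installation, then structure of the path derivation) is the right device for the mutual recursion between wires and paths. Two points you flag do need to be nailed down in a full write-up, and you have identified them correctly: first, internal wires of chained sharing nodes carry no $\cache'/\aenv'$ entry, so $F'$ must be extended to them (most simply by decreeing that all three ports of a $\triangledown$-node share one flow set, collapsing a chain into the single containment $\aenv'(x) \subseteq \cache'(\ell)$ for each occurrence); second, the step ``the abstraction clause puts $\lambda x.t^{\ell_0}$ in $F'(\ell_\lambda)$'' presupposes that the acceptability derivation actually visits that abstraction node, which requires observing that graph reachability is mirrored by the conditional recursion in the application clause of $\models$. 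Neither is an obstacle, and your observation that reachability only removes wires and hence cannot threaten a leastness claim is correct.
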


\begin{figure} 
  \centering \scalebox{1.5}{\input{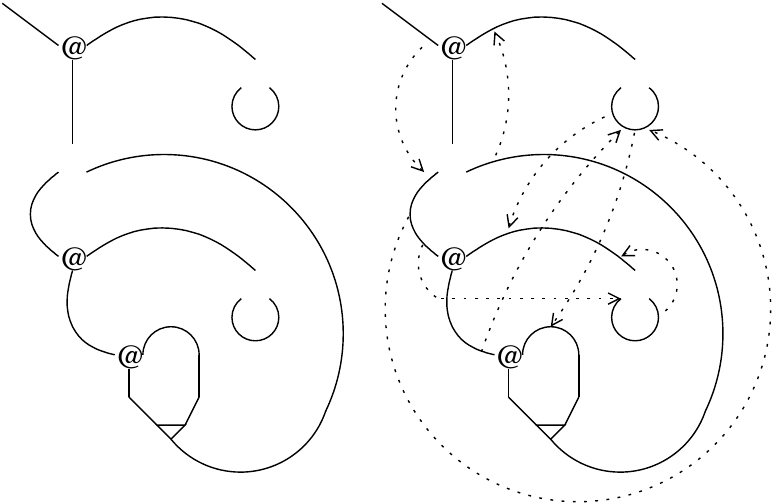_t}}
  \caption{Graph coding and CFA graph of $(\lambda f. ff(\lambda y.y)) (\lambda x.x)$.}
  \label{figure-cfa-example} 
\end{figure}

We now consider an example of use of the algorithm.  Consider the
labeled program: 
\begin{displaymath}
((\lambda f.((f^1 f^2)^3(\lambda y.y^4)^5)^6)^7 (\lambda
x.x^8)^9)^{10}
\end{displaymath}
\autoref{figure-cfa-example} shows the graph coding of the program and
the corresponding CFA graph.  The CFA graph is constructed by adding
virtual wires $10 \vp 6$ and $f\vp 9$, induced by the actual
$\beta$-redex on wire 7.  Adding the virtual path $f\vp 9$ to the
graph creates a virtual $\beta$-redex via the route $1\vp f$ (through
the sharing node), and $f\vp 9$ (through the virtual wire).  This
induces $3 \vp 8$ and $8\vp 2$.  There is now a virtual $\beta$-redex
via $3 \vp 8 \vp 2 \vp f \vp 9$, so wires $6 \vp 8$ and $8 \vp 5$ are
added.  This addition creates another virtual redex via $3 \vp 8 \vp 2
\vp 5$, which induces virtual wires $6 \vp 4$ and $4 \vp 5$.  No
further wires can be added, so the CFA graph is complete.  The
resulting abstract cache gives:
\begin{displaymath}
\begin{array}{l@{\:=\:}ll@{\:=\:}l}
\cache(1) & \{ \lambda x \} &
\cache(6) & \{ \lambda x, \lambda y \} \\
\cache(2) & \{ \lambda x \} &
\cache(7) & \{ \lambda f \} \\
\cache(3) & \{ \lambda x, \lambda y \} &
\cache(8) & \{ \lambda x, \lambda y \} \\
\cache(4) & \{ \lambda y \} &
\cache(9) & \{ \lambda x \} \\
\cache(5) & \{ \lambda y \} &
\cache(10)& \{ \lambda x, \lambda y \}
\end{array}
\begin{array}{l@{\:=\:}l}
\aenv(f) & \{ \lambda x \} \\
\aenv(x) & \{ \lambda x, \lambda y \}\\
\aenv(y) & \{ \lambda y \}
\end{array}
\end{displaymath}

\section{Multiplicative Linear Logic}

The Curry-Howard isomorphism states a correspondence between logical
systems and computational calculi \cite{howard}.  The fundamental idea
is that data types are theorems and typed programs are proofs of
theorems.
\begin{quotation}%
It begins with the observation that an implication $A \rightarrow B$
corresponds to a type of functions from $A$ to $B$, because inferring
$B$ from $A\rightarrow B$ and $A$ can be seen as {\em applying} the
first assumption to the second one---just like a function from $A$ to
$B$ applied to an element of $A$ yields an element of $B$. \cite[p.~v]{sorensen-urzyczyn}
\end{quotation}

For the functional programmer, the most immediate correspondence is
between proofs in propositional intuitionistic logic and simply typed
$\lambda$-terms.  But the correspondence extends considerably further.
\begin{quotation}%
Virtually all proof-related concepts can be interpreted in terms of
computations, and virtually all syntactic features of various
lambda-calculi and similar systems can be formulated in the language
of proof theory.
\end{quotation}

In this section we want to develop the ``proofs-as-programs''
correspondence for linear programs, an important class of programs to
consider for lower bounds on program analysis.  Because analysis and
evaluation are synonymous for linear programs, insights from proof
evaluation can guide new algorithms for program analysis.

The correspondence between simply typed (nonlinear) terms and
intuitionistic logic can be seen by looking at the familiar typing
rules:
\begin{mathpar}
\inferrule*[left=Var]{ }{\Gamma,x : A \vdash x : A}
\and
\inferrule*[left=Abs]{\Gamma,x:A \vdash M : B}{\Gamma \vdash \lambda x.M : A \rightarrow B}
\and
\inferrule*[left=App]{\Gamma\vdash M:A \rightarrow B\\ \Gamma\vdash N:A}{\Gamma \vdash MN:B}
\end{mathpar}

If you ignore the ``proof terms'' (i.e. the programs), you get intuitionsitic sequent calculus:
\begin{mathpar}
\inferrule*[left=Ax]{ }{\Gamma,A \vdash A}
\and
\inferrule*[left=$\rightarrow$I]{\Gamma,A \vdash B}{\Gamma \vdash A \rightarrow B}
\and
\inferrule*[left=$\rightarrow$E]{\Gamma\vdash A \rightarrow B\\ \Gamma\vdash A}{\Gamma \vdash B}
\end{mathpar}

Likewise, {\em linear programs} have their own logical avatar, namely
{\em multiplicative linear logic}.

\subsection{Proofs}

Each atomic formula is given in two forms: positive ($A$) and negative
($A^\perp$) and the {\em linear negation} of $A$ is $A^\perp$ and {\em
vice versa}.  Negation is extended to compound formulae via De
Morgan laws:
\begin{mathpar}
(A\tensor B)^\perp = A^\perp \parr B^\perp

(A\parr B)^\perp = A^\perp \tensor B^\perp
\end{mathpar}

A two sided sequent
\begin{mathpar}
A_1,\dots,A_n \vdash B_1,\dots,B_m
\end{mathpar}
is replaced by
\begin{mathpar}
\vdash A_1^\perp,\dots,A_n^\perp,B_1,\dots,B_m
\end{mathpar}


The interested reader is referred to \citet{girard-tcs87} for more
details on linear logic.

For each derivation in MLL, there is a proofnet, which abstracts away
much of the needless sequentialization of sequent derivations, ``like
the order of application of independent logical rules: for example,
there are many inessintailly different ways to obtain $\vdash A_1
\parr A_2,\dots A_{n-1} \parr A_n$ from $\vdash A_1,\dots A_n$, while
there is only one proof net representing all these derivations''
\cite{di-cosmo-etal-mscs03}.
There is strong connection with calculus of explicit substitutions
\citet{di-cosmo-etal-mscs03}.

The sequent rules of multiplicative linear logic (MLL) are given in
\autoref{fig-mll-sequents}.

\begin{figure}[h]
\begin{mathpar}
\inferrule*[left=Ax,rightskip=10pt]{ }{A,A^\perp}
\and
\inferrule*[left=Cut]{\Gamma,A \\ A^\perp,\Delta}{\Gamma,\Delta}
\and
\inferrule*[Left=$\parr$]{\Gamma,A,B}{\Gamma,A\parr B}
\and
\inferrule*[Left=$\tensor$]{\Gamma,A \\ \Delta,B}{\Gamma,\Delta,A\tensor B}
\end{mathpar}
\caption{MLL sequent rules.}
\label{fig-mll-sequents}
\end{figure}

\subsection{Programs}

These rules have an easy functional programming interpretation as the
types of a linear programming language (eg.~linear ML), following the
intuitions of the Curry-Howard correspondence
\cite{girard-proofs-and-types,sorensen-urzyczyn}.\footnote{For a more
detailed discussion of the C.-H.~correspondence between linear ML and
MLL, see \citet{mairson-jfp04}.}

(These are written in the more conventional (to functional programmers)
two-sided sequents, but just remember that $A^\perp$ on the left is
like $A$ on the right).

\begin{mathpar}
\inferrule*{ }{x:A\vdash x:A}
\and
\inferrule*{\Gamma \vdash M:A \\ \Delta \vdash N:B}{\Gamma,\Delta \vdash (M,N):A\tensor B}
\and
\inferrule*{\Gamma,x:A\vdash M:B}{\Gamma\vdash \lambda x.M : A\lolli B}
\and 
\inferrule*{\Gamma\vdash M:A\lolli B \\ \Delta\vdash N:A}{\Gamma,\Delta\vdash MN : B}
\and 
\inferrule*{\Gamma\vdash M:A\tensor B\\ \Delta, x:A,y:B \vdash N : C}
           {\Gamma,\Delta \vdash \mbox{let }\langle x,y\rangle = M\mbox{ in }N : C}
\end{mathpar}

The {\sc Axiom} rule says that a variable can be viewed simultaneously
as a continuation ($A^\perp$) or as an expression ($A$)---one man's
ceiling is another man's floor.  Thus we say ``input of type $A$'' and
``output of type $A^\perp$'' interchangeably, along with similar
dualisms.  We also regard $(A^\perp)^\perp$ synonymous with $A$: for
example, {\tt Int} is an integer, and ${\tt Int}^\perp$ is a request
(need) for an integer, and if you need to need an integer---$({\tt
  Int}^\perp)^\perp$---then you have an integer.

The {\sc Cut} rule says that if you have two computations, one with an
output of type $A$, another with an input of type $A$, you can plug
them together.  

The $\tensor$-rule is about pairing: it says that if you have separate
computations producing outputs of types $A$ and $B$ respectively, you
can combine the computations to produce a paired output of type
$A\tensor B$.  Alternatively, given two computations with $A$ an
output in one, and $B$ an input (equivalently, continuation $B^\perp$
an output) in the other, they get paired as a {\em call site}
``waiting'' for a function which produces an {\em output} of type $B$
with an {\em input} of type $A$.  Thus $\tensor$ is both {\tt cons}
and function call (@).

The $\parr$-rule is the linear unpairing of this $\tensor$-formation.
When a computation uses inputs of types $A$ and $B$, these can be
combined as a single input pair, e.g., {\tt let (x,y)=p in...}.
Alternatively, when a computation has an input of type $A$ (output of
continuation of type $A^\perp$) and an output of type $B$, these can
be combined to construct a function which inputs a call site pair, and
unpairs them appropriately.  Thus $\parr$ is both unpairing and $\lambda$.

\section{\texorpdfstring{$\eta$}{Eta}-Expansion and \logspace}
\label{sec:eta}

\subsection{Atomic versus Non-Atomic Axioms}

The above {\sc Axiom} rule does not make clear whether the formula $A$
is an atomic type variable or a more complex type formula.  When a {\em
  linear} program only has atomic formulas in the ``axiom'' position,
then we can evaluate (normalize) it in logarithmic space.  When the
program is not linear, we can similarly compute a 0CFA analysis in
\logspace.  Moreover, these problems are complete for \logspace.

MLL proofs with non-atomic axioms can be easily converted to ones with
atomic axioms using the following transformation, analogous to
$\eta$-expansion:
\begin{mathpar}
\inferrule
{ }
{\alpha \otimes \beta, \alpha^\perp \parr \beta^\perp}
\and
\Rightarrow 
\and
\inferrule
{\inferrule
  {\inferrule{ }{\alpha,\alpha^\perp}\\ 
   \inferrule{ }{\beta,\beta^\perp}}
  {\alpha \otimes \beta, \alpha^\perp, \beta^\perp}}
{\alpha \otimes \beta, \alpha^\perp \parr \beta^\perp}
\end{mathpar}

\begin{figure} 
  \centering 
  \scalebox{1.5}{\input{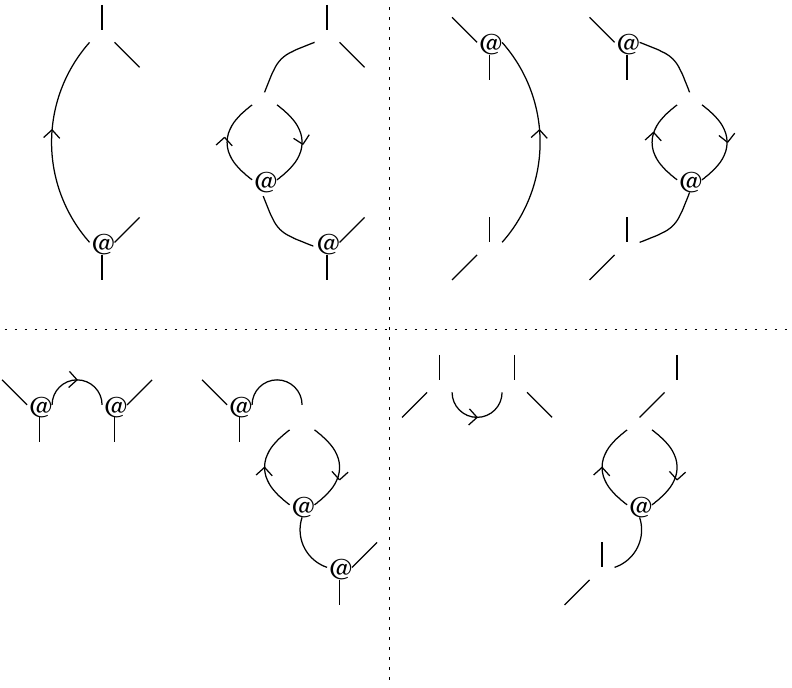_t}}
  \caption{Expansion algorithm.}
  \label{figure-expansion}
\end{figure}
This transformation can increase the size of the proof.  For example,
in the circuit examples of the previous section (which are evidence
for \ptime-completeness), $\eta$-expansion causes an exponential
increase in the number of proof rules used.\footnote{It is linear in
  the formulas used, whose length increases exponentially (not so if
  the formulas are represented by directed acyclic graphs).}  A {\sc
  logspace} evaluation is then polynomial-time and -space in the
original circuit description.

The program transformation corresponding to the above proof expansion
is a version of $\eta$-expansion: see \autoref{figure-expansion}.  The
left hand expansion rule is simply $\eta$, dualized in the unusual
right hand rule. 
The right rule is written with the @ above the $\lambda$ only to
emphasis its duality with the left rule.  Although not shown in the
graphs, but implied by the term rewriting rules, an axiom may pass
through any number of sharing nodes.

\subsection{Proof Normalization with Non-Atomic Axioms: \ptime}

A normalized {\em linear} program has no redexes. From the type of the
program, one can reconstruct---in a totally syntax-directed way---what
the structure of the term is \cite{mairson-jfp04}.  It is only the
position of the {\em axioms} that is not revealed.  For example, both
{\tt TT} and {\tt FF} from the above circuit example have type {\tt 'a
  * 'a -> 'a * 'a}.\footnote{The linear logic equivalent is
  $(\alpha^\perp\parr\alpha^\perp)\parr (\alpha\tensor\alpha)$.  The
  $\lambda$ is represented by the outer $\parr$, the unpairing by the
  inner $\parr$, and the {\tt cons}ing by the $\tensor$.}
From this type, we can see that the term is a $\lambda$-abstraction,
the parameter is unpaired---and then, are the two components of type
{\tt a} repaired as before, or ``twisted''?  To twist or not to twist
is what distinguishes {\tt TT} from {\tt FF}.

An MLL {\em proofnet} is a graphical analogue of an MLL proof, where
various sequentialization in the proof is ignored.  The proofnet
consists of axiom, cut, $\tensor$, and $\parr$ nodes with various
dangling edges corresponding to conclusions.  Rules for proofnet
formation (\autoref{fig-mll-proofnets}) follow the rules for sequent
formation (\autoref{fig-mll-sequents}) almost identically.

\begin{figure}[h]
\begin{displaymath}
\scalebox{1.5}{\input{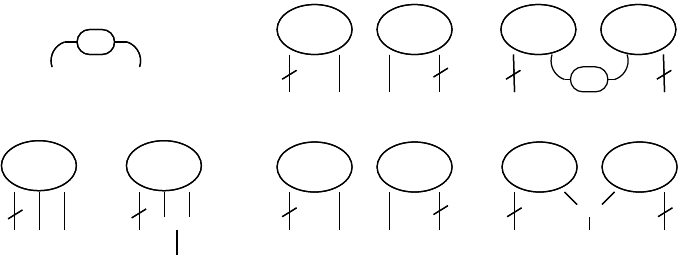_t}}
\end{displaymath}
\caption{MLL proofnets.}
\label{fig-mll-proofnets}
\end{figure}

A binary axiom node has two dangling edges, typed $A$ and $A^\perp$.
Given two disjoint proofnets with dangling edges (conclusions) typed
$\Gamma, A$ and $A^\perp,\Delta$, the edges typed $A,A^\perp$ can be
connected to a binary cut node, and the resulting connected proofnet
has dangling edges typed $\Gamma,\Delta$.  Given a connected proofnet
with dangling wires typed $\Gamma,A,B$, the edges typed $A,B$ can be
connected to the two auxiliary port of a $\parr$ node and the dangling
edge connected to the principal port will have type $A\parr B$.
Finally, given two disjoint proofnets with dangling edges typed
$\Gamma,A$ and $\Delta,B$, the edges typed $A,B$ can be connected to
the two auxiliary ports of a ternary $\tensor$ node; the principal
port then has a dangling wire of type $A\tensor B$.  The intuition is
that $\tensor$ is pairing and $\parr$ is linear unpairing.

The geometry of interaction \cite{girard-goi89,
  gonthier-abadi-levy-popl92}---the semantics of linear logic---and
the notion of paths provide a way to calculate normal forms, and may
be viewed as the logician's way of talking about static program
analysis.\footnote{See \citet{mairson-fsttcs02} for an introduction to
  context semantics and normalization by static analysis in the
  geometry of interaction.}  To understand how this analysis works, we
need to have a graphical picture of what a linear functional program
looks like.

Without loss of generality, such a program has a type $\phi$.  Nodes
in its graphical picture are either $\lambda$ or linear unpairing
($\parr$ in MLL), or application/call site or linear pairing
($\tensor$ in MLL).  We draw the graphical picture so that axioms are
on top, and cuts (redexes, either $\beta$-redexes or pair-unpair
redexes) are on the bottom as shown in \autoref{fig-mll-normal}.

\begin{figure}[h]
\begin{center}
\scalebox{1.5}{\input{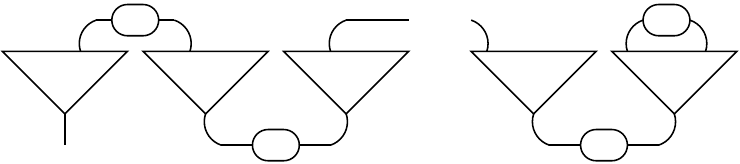_t}}
\end{center}
\caption{MLL proofnet with atomic axioms.}
\label{fig-mll-normal}
\end{figure}

Because the axioms all have atomic type, the graph has the following
nice property:
\begin{lemma}
  Begin at an axiom $\alpha$ and ``descend'' to a cut-link, saving in
  an (initially empty) stack whether nodes are encountered on their
  left or right auxiliary port.  Once a cut is reached, ``ascend'' the
  accompanying structure, popping the stack and continuing left or
  right as specified by the stack token.  Then (1) the stack empties
  exactly when the next axiom $\alpha'$ is reached, and (2) if the
  $k$-th node from the start traversed is a $\tensor$, the $k$-th node
  from the end traversed is a $\parr$, and vice versa.
\end{lemma}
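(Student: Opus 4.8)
The plan is to read the traversal as following a path through the proofnet while tracking the \emph{type} of the wire being crossed, and to reduce both claims to the De~Morgan duality $(A \tensor B)^\perp = A^\perp \parr B^\perp$ together with the last-in-first-out discipline of the stack.

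First I would make the two phases precise in terms of ports. Descending means entering a binary node through one of its auxiliary ports and leaving through its principal port; since the principal port of a $\tensor$ (resp.\ $\parr$) node carries the type $A \tensor B$ (resp.\ $A \parr B$) assembled from its auxiliary ports, descent \emph{composes} the formula. Dually, ascent means entering through the principal port and leaving through an auxiliary port, which \emph{decomposes} the formula. Because the axioms are atomic, the only nodes met between an axiom and a cut are $\tensor$ and $\parr$ nodes, so the descending leg from $\alpha$ visits a sequence $N_1,\dots,N_m$ whose principal ports carry the successive connectives of the cut formula $\psi$, with $N_m$ realising its main connective. Each step pushes onto the stack the auxiliary port (left or right) through which the node was entered; hence the stack, read from bottom to top, records precisely the path from the atomic leaf $\alpha$ to the root of the formula tree of $\psi$.

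Next I would cross the cut, replacing $\psi$ by its linear negation $\psi^\perp$. By De~Morgan the formula tree of $\psi^\perp$ is the mirror of that of $\psi$ with every $\tensor$ turned into a $\parr$ and vice versa, the left/right branching preserved. The ascent now pops the stack: the token pushed last, at $N_m$, is consumed first, at the first ascending node $N_1'$. Since $N_m$ realises the main connective of $\psi$ and $N_1'$ realises the main connective of $\psi^\perp$, duality makes $N_1'$ the dual node of $N_m$, and the popped token steers the ascent into exactly the dual subformula. An induction on $m$ (equivalently, on the height of the tree of $\psi$) then shows that the $j$-th ascending node $N_j'$ is the De~Morgan dual of $N_{m-j+1}$, and that after $j$ pops the wire carries the corresponding dual subformula of $\psi^\perp$.

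The two conclusions fall out of this correspondence. After exactly $m$ pops the stack is empty and the wire type has been decomposed all the way down to an atom; because axioms are atomic and the proofnet is correctly typed, the only node that can present an atomic wire at this point is an axiom, namely the next axiom $\alpha'$---this is claim~(1), and it forces the ascending leg to have the same length $m$ as the descending leg. For claim~(2), the $k$-th node from the start is $N_k$ and, the whole traversal consisting of $2m$ nodes, the $k$-th node from the end is $N_{m-k+1}'$; by the induction $N_{m-k+1}'$ is the dual of $N_{m-(m-k+1)+1}=N_k$, so a $\tensor$ at position $k$ from the start is matched by a $\parr$ at position $k$ from the end, and symmetrically for $k>m$. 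The one delicate point, and the step I expect to require the most care, is justifying that the stack stays synchronised with the formula structure across the cut: I would pin this down by maintaining the invariant that after each step the current stack encodes exactly the address, within the (dual) formula tree, of the subformula labelling the current wire, so that emptiness of the stack is equivalent to the wire being atomic.
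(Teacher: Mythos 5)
Your argument is correct, and it is worth noting that the paper itself offers no proof of this lemma at all: it simply asserts the statement and remarks afterward that the traced path ``is geometry of interaction,'' i.e., it treats the claim as folklore inherited from GoI and the structure of proofnets. Your proposal supplies the missing elementary argument, and it identifies exactly the two ingredients that make it work: (i) because axioms are atomic, the descending leg lies entirely inside the formula tree of the cut formula $\psi$, so the stack is precisely the address of the starting leaf in that tree (your ``synchronisation'' invariant); and (ii) De~Morgan duality says the tree of $\psi^\perp$ is the same tree with $\tensor$ and $\parr$ exchanged, so the LIFO pops replay that address root-to-leaf on the dual side. Claim (1) then follows because the ascent bottoms out at an atomic leaf, whose upper end must be an axiom link, and claim (2) follows from the index arithmetic $N'_j$ dual to $N_{m-j+1}$. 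Both halves of your argument are sound, including the check that the reversal performed by the stack is exactly the reversal needed to turn the leaf-to-root order of the descent into the root-to-leaf order of the ascent.

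Two minor points to tighten. First, in claim (1) you justify ``the only node that can present an atomic wire at this point is an axiom'' by appeal to correct typing; the cleaner justification is structural: the ascent stays inside the formula tree of $\psi^\perp$ hanging from the cut, and in a net with atomic axioms every leaf edge of such a tree is the conclusion of an axiom link (it cannot be a dangling conclusion of the net, since those sit below the cut, not above it). Second, your proof tacitly assumes the descent from $\alpha$ actually reaches a cut rather than a dangling conclusion of the net; that is a hypothesis of the lemma as stated, but it is worth flagging that for axioms whose descent ends at a conclusion the traversal simply terminates there and the lemma says nothing.
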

The path traced in the Lemma, using the stack, is
geometry of interaction (GoI), also known as static
analysis.  The correspondence between the $k$-th node from the start and
end of the traversal is precisely that between a {\em call site} ($\tensor$) and
a {\em called function} ($\parr$), or between a {\tt cons} ($\tensor$)
and a linear unpairing ($\parr$).

\subsection{Proof Normalization with Atomic Axioms: \logspace}

\noindent {\sf A sketch of the ``four finger'' normalization
  algorithm:} The stack height may be polynomial, but we do not need
the stack! Put fingers $\alpha,\beta$ on the axiom where the path begins, and
iterate over all possible choices of another two fingers $\alpha',\beta'$ at
another axiom.  Now move $\beta$ and $\beta'$ towards the cut link, where if
$\beta$ encounters a node on the left (right), then $\beta'$ must move left
(right) also.  If $\alpha',\beta'$ were correctly placed initially, then when
$\beta$ arrives at the cut link, it must be met by $\beta'$.  If $\beta'$ isn't
there, or got stuck somehow, then $\alpha',\beta'$ were incorrectly placed, and
we iterate to another placement and try again.

\begin{lemma}
  Any path from axiom $\alpha$ to axiom $\alpha'$ traced by the stack algorithm
  of the previous lemma is also traversed by the ``four finger''
  normalization algorithm.
\end{lemma}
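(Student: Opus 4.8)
The plan is to exploit the mirror symmetry recorded in part (2) of the previous lemma, together with the last-in--first-out discipline of the stack, to show that the stack is \emph{redundant}: the very same path can be recovered by a second \emph{deterministic descent} that starts from the far endpoint $\alpha'$ and mirrors the first. Concretely, I would fix notation for the stack path, write the descent from $\alpha$ to the cut as the sequence of binary nodes $N_1,\dots,N_m$ in the order they are met, and let $c_k\in\{L,R\}$ record the auxiliary port through which $N_k$ is entered. These $c_k$ are exactly the tokens pushed, so the stack reads $c_1,\dots,c_m$ with $c_m$ on top. Writing $M_1,\dots,M_m$ for the nodes met on the ascent from the cut to $\alpha'$, the LIFO discipline forces the token consumed at $M_j$ to be $c_{m-j+1}$, and by part (1) the stack empties exactly as $\alpha'$ is reached. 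The nodes $N_m$ and $M_1$ are the two premises of the single cut link at which the path turns around.

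The key step is to compute what a fresh descent starting from $\alpha'$ records. Such a descent merely retraces the ascent edges in the opposite direction, visiting $M_m,M_{m-1},\dots,M_1$; since the ascent consumed token $c_{m-j+1}$ at $M_j$, reversing its order reads these tokens back as $c_1,c_2,\dots,c_m$ --- identical, position for position, to the string emitted by the descent from $\alpha$. This is the heart of the argument: the two independent descents, one from $\alpha$ and one from $\alpha'$, emit the \emph{same} sequence of left/right choices. Consequently, once the four-finger algorithm has guessed $\alpha'$ correctly and moves $\beta$ and $\beta'$ downward under the constraint that their choices agree, the constraint holds at every step, so neither finger ever gets stuck; after $m$ steps $\beta$ arrives at the $N_m$-side of the cut and $\beta'$ at the $M_1$-side, i.e.\ they meet across the same cut link. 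Thus the four-finger algorithm traverses precisely the path produced by the stack algorithm, which is the assertion of the lemma.

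I expect the main obstacle to be the bookkeeping that legitimizes the claim that ``descending from $\alpha'$ retraces the ascent in reverse.'' One must check that descent is genuinely deterministic --- at each node one enters an auxiliary port and leaves by the unique principal port, and at a cut one crosses to the dual premise --- so that the reversed ascent is in fact the \emph{only} descent available from $\alpha'$, and no spurious branching can occur. Here part (2) of the previous lemma supplies the remaining ingredient: it guarantees that $N_k$ and its mirror partner form a dual $\tensor$/$\parr$ pair, and since both connectives are binary with genuine left and right auxiliary ports, the move dictated by $c_k$ is always legal for $\beta'$ just as for $\beta$. Finally, I would remark that the procedure maintains only a constant number of pointers into the proofnet --- the two endpoint markers and the two moving fingers --- each storable in logarithmic space, so that no explicit (possibly polynomial-height) stack is ever needed; this is what realizes normalization in \logspace\ and sets up the completeness claim of the subsection.
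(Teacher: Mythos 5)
Your argument is correct and is precisely the intended justification: the paper states this lemma without an explicit proof, leaving it to follow from the algorithm sketch and part (1) of the preceding lemma, and your LIFO-reversal observation (that the deterministic descent from $\alpha'$ re-emits the token string $c_1,\dots,c_m$ of the descent from $\alpha$) is exactly the reason the two fingers can always move in lockstep and meet at the cut. No gaps; the appeal to part (2) is a harmless extra, since determinism of descent through principal ports already suffices.
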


Normalization by static analysis is synonymous with traversing these
paths.  Because these fingers can be stored in logarithmic space, we
conclude \cite{terui-ll2002,mairson-unpub06,mairson-geocal06}:

\begin{theorem}
  Normalization of linear, simply-typed, and fully $\eta$-expanded
  functional programs is contained in \logspace.
\end{theorem}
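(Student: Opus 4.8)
The plan is to exhibit a normalization procedure that maintains only a constant number of pointers into the input proofnet, each requiring $O(\log n)$ bits, thereby witnessing membership in \logspace. The starting observation is that for a linear, simply-typed, fully $\eta$-expanded program the associated MLL proofnet has only \emph{atomic} axioms, so the two preceding lemmas apply verbatim. By the geometry of interaction, the normal form of such a net is completely determined by its axiom linking: cut elimination merely reconnects axioms along the persistent (execution) paths, and no duplication or erasure occurs because the net is multiplicative. Hence to normalize it suffices to compute, for each axiom $\alpha$, the axiom $\alpha'$ to which its path leads, and then to reconstruct the term in a syntax-directed fashion from the (fixed) type together with this linking.

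First I would make precise the path-tracing characterization: by the first lemma, descending from $\alpha$ to a cut-link while recording left/right turns on a stack, then ascending and popping, reaches exactly the partner axiom $\alpha'$, with the $\tensor$/$\parr$ alternation guaranteeing that every push is eventually matched by a pop. This already computes the normal form, but the stack can grow to depth linear in the net, costing polynomial space---precisely the obstacle to a \logspace\ bound. The essential step, supplied by the second lemma, is that this stack is unnecessary: the ``four finger'' algorithm guesses the destination axiom $\alpha'$ and then verifies it by advancing two fingers $\beta,\beta'$ in lock-step toward the cut, forcing $\beta'$ to copy each left/right choice made by $\beta$. A candidate $\alpha'$ is correct exactly when $\beta'$ meets $\beta$ at the cut-link, and an incorrect guess either mismatches or stalls; iterating over the (linearly many) candidate axioms is therefore a deterministic \logspace\ search.

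The space accounting is then routine: the algorithm stores the four fingers $\alpha,\beta,\alpha',\beta'$ as positions in the proofnet, a counter ranging over candidate axioms, and a handful of bits for the current turn direction---a constant number of objects, each an index into a structure of size polynomial in $n$, hence $O(\log n)$ bits apiece. To produce the full normal form rather than a single link, the outer loop ranges over all source axioms $\alpha$ and, for each, writes the discovered partner to the (write-only) output tape; since \logspace\ transducers compose and the term skeleton is recoverable from the type alone \citep{mairson-jfp04}, assembling the normalized program costs no additional working space.

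The main obstacle I anticipate is not the space bookkeeping but establishing rigorously that the four-finger synchronization faithfully reproduces the stack algorithm's path in \emph{all} cases---in particular that a wrong initial guess can never accidentally arrive at a valid-looking partner, and that shared (fan-out) nodes encountered along an axiom's route, as noted after the expansion rules, are threaded consistently by both fingers. Once the second lemma is invoked to close this gap, the reduction to a constant-finger traversal together with the composition closure of \logspace\ yields the containment, matching the cited results of \citet{terui-ll2002,mairson-geocal06}.
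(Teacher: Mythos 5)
Your proposal follows the paper's own argument essentially verbatim: reduce to an MLL proofnet with atomic axioms (guaranteed by full $\eta$-expansion), replace the stack-based path-tracing lemma with the ``four finger'' search that guesses and verifies the partner axiom in lock-step, store only a constant number of $O(\log n)$-bit pointers, and reconstruct the term skeleton syntax-directedly from the type so that only the axiom linking needs to be computed. The concern you raise about sharing nodes is moot here since the program is linear, and the verification gap you flag is exactly what the paper's second lemma is invoked to close, so the proof is correct and matches the paper's route.
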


That 0CFA is then contained in \logspace\ is a casual byproduct of
this theorem, due to the following observation: if application site
$\chi$ calls function $\phi$, then the $\tensor$ and $\parr$
(synonymously, @ and $\lambda$) denoting call site and function are
in distinct trees connected by a {\sc Cut} link.  As a consequence the
0CFA computation is a subcase of the four-finger algorithm: traverse
the two paths from the nodes to the cut link, checking that the paths
are isomorphic, as described above.  The full 0CFA calculation then
iterates over all such pairs of nodes.
\begin{corollary}
  0CFA of linear, simply-typed, and fully $\eta$-expanded
  functional programs is contained in \logspace.
\end{corollary}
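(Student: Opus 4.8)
The plan is to read the corollary off the preceding theorem by exhibiting the 0CFA flow relation as a \emph{restriction} of the geometry-of-interaction path structure that the four-finger algorithm already traverses in logarithmic space. The key is to reduce the flow-analysis decision problem---does the function $v$ flow into point $\ell$?---to a single path-connectivity query in the proofnet coding of the program, and then to answer that query using only a constant number of pointers.

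First I would fix the proofnet representation from the previous section: under full $\eta$-expansion the program has only atomic axioms, its $\lambda$-nodes are $\parr$-links and its application nodes are $\tensor$-links, and every $\beta$-redex appears as a $\tensor$/$\parr$ pair joined by a {\sc Cut}. In this picture, to say that a function $\phi$ flows into a call site $\chi$ is precisely to say that the $\parr$ denoting $\phi$ and the $\tensor$ denoting $\chi$ sit in two distinct trees whose roots meet at a cut link, and that the two descending paths---from $\chi$ down to the cut and from the cut up to $\phi$---are \emph{isomorphic} in the sense of the atomic-axiom lemma: the $k$-th node encountered on one side is a $\tensor$ exactly when the $k$-th node on the other is a $\parr$, with matching left/right auxiliary-port choices. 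This is exactly the call-site/called-function correspondence that GoI computes, so $\phi \vp \chi$ holds iff the four-finger traversal links the two nodes.

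Next I would invoke the four-finger algorithm of the preceding lemma. To decide the query it suffices to place two fingers at the $\tensor$ for $\chi$ and two at the $\parr$ for $\phi$ and walk them toward the cut, insisting that they make matching left/right choices at each node; connectivity holds iff the fingers meet correctly at the cut. Each finger is a single pointer into a program of size $n$, costing $O(\log n)$ bits, and only four are needed, so the check runs in \logspace. The decision problem asks about one distinguished $(v,\ell)$ pair, so a single such traversal answers it; computing the entire least cache instead requires iterating over all pairs of nodes, but the outer loop needs only two more logarithmic-space counters, leaving the whole computation in \logspace.

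The main obstacle will be step two: justifying \emph{exactly} that the least acceptable 0CFA cache, defined by the fixed-point relation $\cache,\aenv\models e$, coincides with GoI path-connectivity in the proofnet. For this I would lean on Corollary~\ref{cor:normal}, which already gives that for linear programs analysis equals evaluation, so the function recorded at a call site is the one actually applied there during normalization---and normalization is exactly what the GoI path traces. The delicate points are (i) checking that $\eta$-expansion to atomic axioms does not create or destroy flows (the flow-preserving property noted earlier for the Boolean gadgets), and (ii) aligning the virtual-path and reachability bookkeeping of the graphical 0CFA with the cut-directed GoI paths, so that ``reachable'' redexes correspond to the cuts the four-finger walk actually encounters. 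Once this correspondence is pinned down, the logarithmic-space bound is immediate from the preceding theorem and lemma.
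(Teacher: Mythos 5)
Your proposal is correct and follows essentially the same route as the paper: the paper likewise observes that a flow $\phi$ into call site $\chi$ corresponds to the $\tensor$ and $\parr$ lying in distinct trees joined by a {\sc Cut} with dual isomorphic paths, checks this as a subcase of the four-finger algorithm, and iterates over all pairs of nodes for the full analysis. You merely make explicit (via Corollary~\ref{cor:normal} and the path-isomorphism lemma) the correspondence that the paper dispatches as a ``casual byproduct'' of the normalization theorem.
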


\subsection{0CFA in \logspace}

Now let us remove the linearity constraint, while continuing to insist
on full $\eta$-expansion as described above, and simple typing.  The
normalization problem is no longer contained in \logspace, but rather
non-elementary recursive, \cite{statman79,mairson92,asperti-mairson}.
However, 0CFA remains contained in \logspace, because it is now an
{\em approximation}.  This result follows from the following
observation:
\begin{lemma}
\label{lemma-iso-paths}
Suppose $(t^\ell\;e)$ occurs in a simply typed, fully $\eta$-expanded
program and $\lambda x.e \in \cache(\ell)$.  Then the corresponding
$\tensor$ and $\parr$ occur in adjacent trees connected at their roots
by a {\sc cut}-link and on dual, isomorphic paths modulo placement of
sharing nodes.
\end{lemma}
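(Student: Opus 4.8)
The plan is to read the hypothesis through the proofnet representation of the previous subsection and then reduce the claim to the atomic-axiom path lemma. Because the program is simply typed and fully $\eta$-expanded, every axiom link is atomic, so its proofnet is exactly the multiplicative skeleton of Figure~\ref{fig-mll-normal} decorated with sharing ($\triangledown$) nodes wherever a variable is used more than once. First I would fix the two nodes named in the statement: the application $(t^\ell\ e)$ contributes a call-site $\tensor$ node, with $\ell$ the wire feeding its function port, and the abstraction $\lambda x.e$ of the hypothesis contributes a $\parr$ node at some wire $\ell'$. By the Curry--Howard reading of the previous section (Figure~\ref{fig-mll-proofnets}) these two nodes carry dual principal types $(\sigma\lolli\sigma')^\perp$ and $\sigma\lolli\sigma'$. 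Thus the $\tensor$ and $\parr$ already carry dual types, and what remains is to show that the flow witnessed by 0CFA forces their trees to meet at the root in a {\sc Cut}-link along dual isomorphic paths.

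Second, I would re-express the flow fact combinatorially using the graphical 0CFA. We have $\lambda x.e \in \cache(\ell)$ iff there is a virtual path $\ell \vp \ell'$ in the CFA-graph, built from the four clauses defining $\vp$. I would then argue, by induction on the derivation of $\ell \vp \ell'$, that each such virtual path draws on the proofnet a matched pair of half-paths: one descending from the function port $\ell$ of the $\tensor$ and one descending from the root $\ell'$ of the $\parr$, both ending at a common {\sc Cut}-link. The base clause ($\ell\equiv\ell'$) and the virtual-wire clause correspond to the two sides of an actual or virtual $\beta$-redex in the sense of Figure~\ref{figure-cfa-rules}, which is precisely a $\tensor$/$\parr$ pair meeting at a cut; the transitivity clause composes two such matched half-path pairs; and the remaining clause (auxiliary-to-root of a $\triangledown$) is the single clause that lets a path cross a sharing node.

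Third, I would establish the duality. Here I would lift the atomic-axiom ``nice property'' lemma from the strictly linear, axiom-to-axiom setting to the present one: along the two half-paths just constructed, the $k$-th multiplicative node encountered descending from the $\tensor$ is a $\parr$ exactly when the $k$-th descending from the $\parr$ is a $\tensor$, since the two wires carry dual types and $\eta$-expansion guarantees those types are traversed atom by atom. The only nodes that can break a literal node-for-node isomorphism are the $\triangledown$ nodes introduced by non-linear use of variables; but these are exactly what the sharing clause of $\vp$ lets the path slide through without recording which copy is taken. Because 0CFA is monovariant, keeping no index that distinguishes copies, the virtual path ignores precisely the bracketing information a genuine geometry-of-interaction path would use to constrain passage through $\triangledown$ nodes. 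Erasing the sharing nodes from both half-paths therefore yields dual, isomorphic multiplicative paths, which is the conclusion ``modulo placement of sharing nodes.''

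The main obstacle is this third step's treatment of sharing. In the linear case the correspondence is exact---it is the content of the four-finger normalization lemma and the accompanying corollary that 0CFA of linear, simply-typed, $\eta$-expanded programs lies in \logspace---so the real work is to show that duplication does not disturb the $\tensor$/$\parr$ alternation and that 0CFA's loss of copy information lines up exactly with quotienting the path by $\triangledown$ placement. I expect to carry this out by an induction on the number of sharing nodes the virtual path traverses, checking that inserting a $\triangledown$ on one half-path (with the matching fan-out on the other) preserves the mirrored $\tensor\leftrightarrow\parr$ alternation after erasure, and that it is exactly the monovariance of 0CFA that licenses ignoring the otherwise mandatory agreement of bracketing at these nodes.
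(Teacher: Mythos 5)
The paper offers no proof of this lemma at all: it is stated as an ``observation'' from which the \logspace\ inclusion follows, accompanied only by the gloss explaining what ``modulo placement'' means. So there is nothing to compare your proposal against line by line; what can be said is that your reconstruction follows exactly the route the surrounding text intends---reduce the claim to the atomic-axiom path lemma of the linear case (the ``nice property'' with the stack, and its consequence that a call site $\tensor$ and called $\parr$ sit in distinct trees joined by a {\sc Cut}), and then argue that the only new phenomenon in the non-linear case is the $\triangledown$ nodes, which 0CFA's monovariance licenses you to treat as transparent. Your identification of $\eta$-expansion as the ingredient that forces the two half-paths to track the (dual) type structure atom by atom is also the right reading of why the hypothesis is needed.

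The one place your argument as written would not go through verbatim is the induction in your second step. The invariant you state---``one half-path descending from the function port of the $\tensor$, one descending from the root of the $\parr$, meeting at a common {\sc Cut}''---only describes the top-level instance of the claim. Intermediate virtual paths $\ell \vp \ell''$ arising in the transitivity clause connect wires of arbitrary polarity (continuation wires to body wires, variable wires to argument wires), and composing two ``descend-to-cut'' pairs does not obviously yield a third such pair; the spliced path may ascend through one cut and descend into another tree before reaching the cut that finally joins the $\tensor$ and $\parr$ of the statement. You need a more general induction hypothesis, roughly: every virtual path $\ell\vp\ell'$ is shadowed by a GoI path segment between the wires $\ell$ and $\ell'$ whose multiplicative content is balanced and dual after erasing $\triangledown$ nodes, with the left/right auxiliary-port discipline of the stack lemma respected segment by segment. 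With that strengthening, your base cases (actual and virtual $\beta$-redexes, and the auxiliary-to-root clause for sharing) and your third step go through, and the lemma's statement is the specialization to the case where the endpoints are an operator wire and a $\lambda$ root. This is a fixable formulation issue rather than a wrong approach, but as stated the inductive step does not close.
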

Here ``modulo placement'' means: follow the paths to the {\sc}
cut---then we encounter $\tensor$ (resp., $\parr$) on one path when we
encounter $\parr$ (resp., $\tensor$) on the other, on the same (left,
right) auxiliary ports.  We thus {\em ignore} traversal of sharing
nodes on each path in judging whether the paths are isomorphic.
(Without sharing nodes, the $\tensor$ and $\parr$ would
annihilate---i.e., a $\beta$-redex---during normalization.)

\begin{theorem}
0CFA of a simply-typed, fully $\eta$-expanded program is contained in
\logspace.
\end{theorem}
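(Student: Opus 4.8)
The plan is to reduce the 0CFA flow question to a \logspace-decidable path-isomorphism check on the proofnet of the $\eta$-expanded program, reusing the ``four finger'' machinery developed for the linear case. Recall that the decision problem asks, given an application site $(t^\ell\,e)$ and an abstraction $\lambda x.e'$, whether $\lambda x.e' \in \cache(\ell)$. By Lemma~\ref{lemma-iso-paths}, in a simply typed, fully $\eta$-expanded program this flow holds precisely when the $\tensor$ node coding the call site and the $\parr$ node coding the abstraction sit in adjacent trees joined at their roots by a {\sc Cut} link, lying on dual, isomorphic paths modulo the placement of sharing nodes. So the entire content of the theorem is that this geometric condition is checkable in logarithmic space.

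First, I would fix a single candidate pair---one @ node (for $\ell$) and one $\lambda$ node (for $\lambda x.e'$)---and show how to test the path condition for it. This is exactly the four-finger algorithm of the previous subsection, adapted by one modification: when a finger meets a sharing node it passes through without recording a token, so that sharing traversals are discounted in comparing the two paths. Two pairs of fingers descend from the @ node and the $\lambda$ node toward the {\sc Cut}, and at each step we check that a $\tensor$ met on one side is matched by a $\parr$ on the other at the same (left/right) auxiliary port. Each finger is just a pointer into the fixed program graph, and a constant number of them suffices; hence this single check runs in \logspace.

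Next, the full 0CFA computation simply iterates this check over all pairs of nodes. Enumerating candidate @ and $\lambda$ nodes requires only a constant number of additional pointers into the program, so the outer loop adds only logarithmic space, and since the inner procedure's work tape can be reused across iterations, the whole analysis---and therefore any decision problem it answers---stays within logarithmic space.

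The hard part, which Lemma~\ref{lemma-iso-paths} is designed to discharge, is arguing that ignoring sharing nodes yields \emph{exactly} the 0CFA flow relation: the ``isomorphic modulo sharing'' condition must be both sound (every witnessed path corresponds to a genuine flow $\lambda x.e' \in \cache(\ell)$) and complete (every 0CFA flow is so witnessed). The subtlety is that in the non-linear setting a single abstraction can reach a call site along routes threaded through sharing nodes, so the two descending paths need no longer have equal length; the finger synchronization must therefore be defined to step over sharing nodes on each side independently while still matching $\tensor$ against $\parr$ at the remaining nodes. Establishing that this relaxed comparison coincides with the acceptability relation of \autoref{fig-0cfa-acceptability}---rather than over- or under-approximating it---is the crux, and it is precisely what the lemma asserts; the remaining bookkeeping about pointer counts and reuse of the work tape is routine.
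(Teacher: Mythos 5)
Your proposal takes essentially the same route as the paper: the paper likewise derives this theorem from Lemma~\ref{lemma-iso-paths} and decides each candidate flow by running the four-finger traversal toward the {\sc Cut} link while ignoring sharing nodes on each path, iterating the check over all @/$\lambda$ pairs with a constant number of pointers. You have also correctly located the real mathematical content in the lemma's claim that the ``isomorphic modulo sharing'' condition coincides with the 0CFA flow relation, which is exactly where the paper places it.
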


Observe that 0CFA defines an {\em approximate} form of normalization
which is suggested by simply {\em ignoring} where sharing occurs.
Thus we may define the {\em set} of $\lambda$-terms to which that a term
might evaluate.  Call this {\em 0CFA-normalization}. 
\begin{theorem}
  For fully $\eta$-expanded, simply-typed terms, 0CFA-normalization
  can be computed in {\em nondeterministic} \logspace.
\end{theorem}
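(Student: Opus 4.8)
The plan is to turn the four-finger path tracer that underlies the deterministic \logspace\ bound on 0CFA into a nondeterministic transducer whose set of outputs is exactly the 0CFA-normalization of the input. First I would recall that for a fully $\eta$-expanded, simply-typed program the \emph{skeleton} of any normal form is fixed by its type in a completely syntax-directed manner; the only data left to compute is the axiom linking, that is, for each leaf of the output structure, the identity of the axiom it reaches once the proofnet is normalized. Tracing such a link is precisely a geometry-of-interaction path, and by the iso-paths lemma (Lemma~\ref{lemma-iso-paths}) the call site $\tensor$ and the function $\parr$ witnessing $\lambda x.e \in \cache(\ell)$ sit on dual, isomorphic paths \emph{modulo placement of sharing nodes}.

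In the linear case there are no sharing nodes, each path is unique, and the link is determined deterministically; this is the \logspace\ normalization already established. The sole new phenomenon in the non-linear---but still $\eta$-expanded and simply-typed---case is that while ascending a path the tracer may reach a sharing node ($\triangledown$) whose principal port admits continuation along more than one auxiliary branch. 0CFA-normalization is defined exactly by forgetting which branch is ``really'' taken, i.e.\ by admitting \emph{every} such continuation as legitimate. I would therefore run the four-finger walk but, at each sharing node encountered on the ascent, \emph{guess} nondeterministically which branch to follow; symmetrically, rather than iterate over all candidate endpoints as in the deterministic decision procedure, I would simply guess the matching second pair of fingers and verify, as the walk proceeds, that they meet at the cut link. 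Each complete sequence of guesses traces one path from a starting axiom to a terminating axiom and thus fixes one link; indexing the output leaves by an $O(\log n)$ counter and resolving each in turn yields one candidate normal form, streamed to the write-only output tape. Ranging over all nondeterministic choice sequences, the collection of terms so produced is exactly the set of $\lambda$-terms to which the program may evaluate under 0CFA.

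For the resource bound, the algorithm stores only a constant number of fingers---pointers into the fixed program graph, each an $O(\log n)$ address---together with the $O(\log n)$ index of the output leaf currently being resolved; the sharing-node guesses are consumed on the fly and need never be recorded, and the skeleton, being reconstructible from the type, is emitted to the (uncounted) output tape without extra work space. Hence the transducer runs in nondeterministic logarithmic space.

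The step I expect to be the main obstacle is the correctness calibration: one must check that letting the walk branch \emph{freely} at sharing nodes generates neither too few nor too many links, i.e.\ that the set of endpoints reachable by these nondeterministic paths coincides exactly with the 0CFA flow sets guaranteed by Lemma~\ref{lemma-iso-paths}. This amounts to verifying that ``ignoring placement of sharing nodes'' in the path-isomorphism criterion is faithfully realized by the branching discipline of the walk, and---crucially for the space bound---that the residual matching obligation on the dual path (encountering $\tensor$ against $\parr$ on the same left/right auxiliary ports) can still be discharged with the four-finger guess-and-verify trick rather than an explicit, potentially polynomial-height stack.
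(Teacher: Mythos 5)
Your proposal is correct and takes essentially the approach the paper intends: the paper states this theorem without an explicit proof, but the surrounding development---the type-directed reconstruction of the normal form's skeleton, the ``four finger'' guess-and-verify walk in place of a stack, and Lemma~\ref{lemma-iso-paths}'s ``isomorphic modulo placement of sharing nodes''---is exactly what you assemble, with nondeterministic branching at sharing nodes realizing the ``ignoring where sharing occurs'' that defines 0CFA-normalization. The correctness calibration you flag as the remaining obstacle is real but is likewise left implicit by the paper, which immediately moves on to conjecture the matching hardness result.
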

\begin{conjecture}
  For fully $\eta$-expanded, simply-typed terms, 0CFA-normalization is
  complete for {\em nondeterministic} \logspace.
\end{conjecture}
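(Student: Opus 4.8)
The plan is to pair the preceding theorem, which already places the problem in \emph{nondeterministic} \logspace, with a matching hardness result. To establish completeness it therefore suffices to exhibit a \logspace\ (indeed, essentially first-order) reduction from a problem that is hard for \emph{nondeterministic} \logspace\ to the 0CFA flow-decision problem restricted to fully $\eta$-expanded, simply-typed terms. The natural target is directed $s$-$t$ reachability---given a directed graph $G$ and vertices $s,t$, is there a directed path from $s$ to $t$?---which is the canonical complete problem for this class \citep{Papadimitriou94}. The decision problem associated with 0CFA-normalization is exactly the flow analysis problem already posed in \autoref{chap:foundations}: does a designated term appear in $\cache(\ell)$ for a designated point $\ell$.

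First I would fix a single base type $o$ and a distinguished closed value---a \emph{token}---whose presence in a flow set is to certify reachability; for concreteness take the $\eta$-expanded identity $I\equiv(\lambda w.w^{\ell_w})^{\ell_I}$ at type $o\to o$. For each vertex $v$ of $G$ I allocate a program point $\ell_v$, and I seed the source so that $I\in\cache(\ell_s)$. Each edge $u\to v$ is realized by an \emph{edge gadget}: a fully $\eta$-expanded identity coercion whose 0CFA acceptability clauses (\autoref{fig-0cfa-acceptability}) force the directional containment $\cache(\ell_u)\subseteq\cache(\ell_v)$, via $\cache(\ell_u)\subseteq\aenv(x)\subseteq\cache(\ell_v)$ for the gadget's bound variable $x$. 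Multiple out-edges of a vertex are obtained by duplicating its wire through sharing ($\triangledown$) nodes---precisely the non-linearity this section has reintroduced---while multiple in-edges are absorbed for free by the monovariant join of 0CFA, which merges everything flowing into $\ell_v$. The whole term, with its unique labelling, is a local function of the adjacency relation of $G$, is simply typed at the fixed type $o$, is fully $\eta$-expanded, and is plainly emittable in logarithmic space. The flow question asked is simply whether $I\in\cache(\ell_t)$.

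Correctness splits into two directions. For \emph{reachable} $\Rightarrow$ \emph{flow}, a path $s=v_0\to v_1\to\cdots\to v_m=t$ yields a chain $\cache(\ell_s)\subseteq\cache(\ell_{v_1})\subseteq\cdots\subseteq\cache(\ell_t)$ by the application clause, so the seeded token reaches $\ell_t$; cycles in $G$ are harmless, since the least solution is closed under the coinductively specified constraints. The delicate direction is \emph{flow} $\Rightarrow$ \emph{reachable}: I must show the reduction is \emph{faithful}, i.e.\ that the least 0CFA introduces \emph{no} spurious flow of the token into $\ell_t$ beyond those justified by an actual path. Here I would lean on Lemma~\ref{lemma-iso-paths}: in the $\eta$-expanded, simply-typed proofnet a flow $\lambda x.e\in\cache(\ell)$ holds exactly when the corresponding $\tensor$ and $\parr$ lie on dual, isomorphic geometry-of-interaction paths \emph{modulo} sharing nodes. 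If each edge gadget is designed so that its paths-modulo-sharing project precisely onto the single edge $u\to v$ and never link unrelated vertices, then the virtual-path relation $\vp$ of the CFA graph coincides with the reachability relation of $G$, closing the argument.

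The main obstacle I anticipate is exactly this faithfulness argument in the presence of sharing. Because fan-in and fan-out of an arbitrary digraph require $\triangledown$-nodes, and because 0CFA is an \emph{over}-approximation that deliberately ignores where sharing occurs, the real danger is that two gadgets sharing a sub-wire create a path that ``short-circuits'' across vertices between which $G$ has no edge, enlarging the effective graph beyond $G$. The gadgets must therefore be engineered so that every sharing node they introduce is entered and left only along matched auxiliary ports, so that the modulo-sharing isomorphism of Lemma~\ref{lemma-iso-paths} can never splice the out-path of $u$ to the in-path of a non-successor $w$. Once this \emph{locality of sharing} is established, $\vp$ is forced to equal the reflexive-transitive closure of the edge relation, both directions close, and combining this hardness with the containment of the preceding theorem yields completeness for \emph{nondeterministic} \logspace.
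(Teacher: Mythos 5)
The first thing to note is that the paper does not prove this statement: it is left as a conjecture, and the only guidance offered is the single remark that a proof ``likely depends on a coding of arbitrary directed graphs and the consideration of commensurate path problems.'' Your plan---containment from the preceding theorem plus hardness via a logspace reduction from directed $s$-$t$ reachability---is exactly the route that remark gestures at, so you are not off in a different direction; you have written out the intended strategy in more detail than the paper does.

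That said, what you have is a plan, not a proof, and the gap sits precisely where you place your ``main obstacle.'' The entire content of the conjecture is the faithfulness claim: that for an arbitrary digraph one can build a single closed, simply-typed, fully $\eta$-expanded term in which the virtual-path relation $\vp$ restricted to the vertex wires is the reachability relation of $G$ and nothing more. You state this as a design requirement on the gadgets (``the gadgets must therefore be engineered so that\dots'') but exhibit no gadget and verify nothing. Appealing to Lemma~\ref{lemma-iso-paths} does not discharge the obligation, because that lemma characterizes flows by path isomorphism \emph{modulo placement of sharing nodes}, and the modulo-sharing quotient is exactly the over-approximation you need to control: arbitrary fan-out and fan-in force $\triangledown$-nodes into your edge gadgets, and you must rule out that the quotiented paths splice the out-path of $u$ onto the in-path of a non-successor. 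There are also unaddressed bookkeeping obligations---every vertex wire must carry one fixed simple type so the token can traverse any path, the gadgets must compose into a single reachable spine without creating virtual $\beta$-redexes between gadgets of non-adjacent vertices, and the token must never itself reach operator position. Until a concrete gadget is exhibited and its least cache computed, the statement remains what the paper says it is: a conjecture.
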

The proof of the above conjecture likely depends on a coding of
arbitrary directed graphs and the consideration of commensurate path
problems.
\begin{conjecture}
  An algorithm for 0CFA normalization can be realized by {\em optimal
  reduction}, where sharing nodes {\em always} duplicate, and never
  annihilate.
\end{conjecture}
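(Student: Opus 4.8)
The plan is to exhibit a tight correspondence between \emph{0CFA-normalization}---the approximate normalization obtained by simply ignoring where sharing occurs---and a modified form of optimal reduction in which every fan (sharing) interaction \emph{duplicates} rather than annihilates. The guiding intuition, already implicit in Lemma~\ref{lemma-iso-paths}, is that the sole source of 0CFA's imprecision is its refusal to track the bracket/index discipline that, in genuine optimal reduction, decides when two dual fans are correctly matched. Annihilation of a dual fan pair records the fact that a duplicated subterm has been correctly re-associated with its consumer; forcing duplication instead discards exactly this bookkeeping, and discarding it is precisely the over-approximation that 0CFA makes. So the theorem to be proved amounts to: an optimal reducer with its oracle short-circuited to ``always duplicate'' computes the 0CFA-normalization set, and nothing more.

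First I would fix the sharing-graph representation developed above together with the standard optimal-reduction interaction rules, in which fan nodes carry indices and a fan-meets-fan redex either annihilates (indices agree) or commutes/duplicates (indices differ). From this I would define \emph{always-duplicate reduction} by replacing the index oracle with the constant ``duplicate'' decision, so that fans are never eliminated. Because the graphical 0CFA of the preceding sections already traces virtual paths \emph{modulo} placement of sharing nodes, each such virtual path corresponds to a sequence of fan traversals taken without regard to matching---which is exactly the behavior an always-duplicate reduction realizes.

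The correspondence would then be proved in two directions. For soundness, every normal form read back from an always-duplicate reduction induces a dual pair of isomorphic paths of the kind described in Lemma~\ref{lemma-iso-paths}, so the head $\lambda$ of that normal form lies in the cache entry $\cache(\ell)$ that 0CFA assigns to the corresponding call site. For completeness, given a flow $\lambda x.e \in \cache(\ell)$, the isomorphic path guaranteed by Lemma~\ref{lemma-iso-paths} prescribes, fan by fan, a duplication schedule whose resulting reduction reads back to a term with head $\lambda x.e$; ranging over all flows recovers the entire 0CFA set. Finiteness comes essentially for free: since the number of program points and of $\lambda$-subterms is fixed and the 0CFA cache is monotone, the set of reachable always-duplicate normal forms is finite and stabilizes at the 0CFA fixed point.

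The hard part will be making ``always duplicate'' operationally coherent once annihilation is forbidden. Without the annihilation rule the graph never contracts its fans, so both the usual read-back and the usual termination argument of optimal reduction fail, and one must instead characterize the \emph{observable}---the finite set of head normal forms flowing to each program point---directly, and show that this observable stabilizes even though the underlying graph does not normalize in the ordinary sense. Reconciling this non-terminating, sharing-unresolved rewriting with the finite, fixed-point semantics of 0CFA, and proving that the two define precisely the same sets, is the crux of the argument; I expect it to require an invariant relating each duplicating reduction step to a monotone addition to the abstract cache, so that graph rewriting and cache saturation reach their respective limits in lockstep.
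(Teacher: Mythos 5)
There is nothing in the paper to compare your argument against: the statement is presented as an open \emph{conjecture}, with no proof or even a proof sketch, so the only question is whether your proposal itself constitutes a proof. It does not. What you have written is a well-motivated plan that correctly identifies the governing intuition---annihilation of a dual fan pair is the bookkeeping that records the correct re-association of a duplicated subterm with its consumer, and discarding that bookkeeping is precisely the approximation 0CFA makes, in the spirit of Lemma~\ref{lemma-iso-paths}---and, to your credit, you correctly locate the crux. But the crux you locate \emph{is} the theorem, and you defer it rather than resolve it.

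Concretely, three things are asserted where they would need to be established. First, once annihilation is forbidden, every fan--fan interaction commutes, so the graph grows without bound and the rewriting does not terminate in the ordinary sense; your soundness direction quantifies over ``normal forms read back from an always-duplicate reduction,'' which are undefined until you give a read-back for a non-normalizing graph. Second, your finiteness claim (``the set of reachable always-duplicate normal forms is finite and stabilizes at the 0CFA fixed point'') does not follow from the finiteness of the abstract cache: the cache is finite by construction, but the space of graph configurations is not, and you need a separate argument that the observable quotient is finite and independent of reduction order---some confluence-up-to-observation property for the modified interaction system, which is exactly the kind of property that forcing duplication is liable to break. Third, both directions of your correspondence lean on Lemma~\ref{lemma-iso-paths}, which asserts the existence of dual isomorphic paths for a \emph{single} flow fact in a simply-typed, fully $\eta$-expanded program; promoting it to a bidirectional, step-by-step simulation between cache saturation and a modified graph reduction is substantial new work, not a corollary. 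The invariant you gesture at in your closing sentence---each duplication step corresponds to a monotone cache addition, and the two processes reach their limits in lockstep---is the missing proof; until it is formulated and verified, the statement remains what the paper says it is: a conjecture.
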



\subsection{\logspace-hardness of Normalization and 0CFA:  linear,
  simply-typed, fully \texorpdfstring{$\eta$}{eta}-expanded programs}

That the normalization and 0CFA problem for this class of programs is
as hard as any \logspace\ problem follows from the {\sc
  logspace}-hardness of the {\em permutation problem}: given a
permutation $\pi$ on $1,\ldots, n$ and integer $1\le i\le n$, are $1$
and $i$ on the same cycle in $\pi$?  That is, is there a $k$ where
$1\le k\le n$ and $\pi^k(1)=i$?

Briefly, the \logspace-hardness of the permutation problem is as
follows.\footnote{This presentation closely follows
  \citet{mairson-geocal06}.}  Given an arbitrary \logspace\ Turing
machine $M$ and an input $x$ to it, visualize a graph where the nodes
are machine IDs, with directed edges connecting successive
configurations.  Assume that $M$ always accepts or rejects in unique
configurations.  Then the graph has two connected components: the
``accept'' component, and the ``reject'' component.  Each component is
a directed tree with edges pointing towards the root (final
configuration).  Take an Euler tour around each component (like
tracing the fingers on your hand) to derive two {\em cycles}, and thus
a {\em permutation} on machine IDs.  Each cycle is polynomial size,
because the configurations only take logarithmic space.  The
equivalent permutation problem is then: does the initial configuration
and the accept configuration sit on the same cycle?

The following linear ML code describes the ``target'' code of a
transformation of an instance of the permutation problem.  For a
permutation on $n$ letters, we take here an example where $n=3$.
Begin with a vector of length $n$ set to {\tt False}, and a
permutation on $n$ letters:
\begin{quote}
{\small
\begin{alltt}
- val V= (False,False,False);
val V = ((fn,fn),(fn,fn),(fn,fn))
  : (('a * 'a -> 'a * 'a) * ('a * 'a -> 'a * 'a))
  * (('a * 'a -> 'a * 'a) * ('a * 'a -> 'a * 'a))
  * (('a * 'a -> 'a * 'a) * ('a * 'a -> 'a * 'a))
\end{alltt}}
\end{quote}
Denote as $\nu$ the type of vector {\tt V}.  
\begin{quote}
{\small
\begin{alltt}
- fun Perm (P,Q,R)= (Q,R,P);
val Perm = fn : \ensuremath{\nu} -> \ensuremath{\nu}
\end{alltt}}
\end{quote}
The function {\tt Insert} {\em linearly} inserts {\tt True} in the
first vector component, using all input exactly once:
\begin{quote}
{\small
\begin{alltt}
- fun Insert ((p,p'),Q,R)= ((TT,Compose(p,p')),Q,R);
val Insert = fn : \ensuremath{\nu} -> \ensuremath{\nu}
\end{alltt}}
\end{quote}
The function {\tt Select} {\em linearly} selects the third vector
component:
\begin{quote}
{\small
\begin{alltt}
- fun Select (P,Q,(r,r'))= 
    (Compose (r,Compose (Compose P, Compose Q)),r');
val Select = fn
  : \ensuremath{\nu} -> (('a * 'a -> 'a * 'a) * ('a * 'a -> 'a * 'a))
\end{alltt}}
\end{quote}
Because {\tt Perm} and {\tt Insert} have the same flat type, they can
be composed iteratively in ML without changing the type.  (This
clearly is {\em not} true in our coding of circuits, where the size of
the type increases with the circuit.  A careful coding limits the type
size to be polynomial in the circuit size, regardless of circuit
depth.)
\begin{lemma}
Let $\pi$ be coded as permutation {\tt Perm}.  Define {\tt Foo} to be
\begin{center}
{\tt Compose(Insert,Perm)} 
\end{center}
composed with itself $n$ times.  Then 1 and
$i$ are on the same cycle of $\pi$ iff {\tt Select (Foo V)} normalizes
to {\tt True}.
\end{lemma}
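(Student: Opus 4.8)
The plan is to argue semantically: since the target program is a closed linear $\lambda$-term (linear ML), normalization terminates and, by Corollary~\ref{cor:normal}, coincides with both evaluation and analysis, so it suffices to compute the value of \texttt{Select (Foo V)} under ordinary evaluation. I would track the computation at the level of the length-$n$ Boolean vector, recording only the \emph{set} $T_k \subseteq \{1,\dots,n\}$ of positions holding $\True$ after $k$ applications of $G \equiv \texttt{Compose(Insert,Perm)}$ to the initial all-$\False$ vector \texttt{V}. The entire argument then reduces to identifying $T_n$ and checking that \texttt{Select} reads off position $i$ faithfully.

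First I would establish two elementary facts about the building blocks. For \texttt{Insert}, using the symmetric encoding $\True \equiv \langle\TT,\FF\rangle$ and $\False\equiv\langle\FF,\TT\rangle$, the rewrite $\langle p,p'\rangle \mapsto \langle\TT, p\circ p'\rangle$ sets the real component to $\TT$ while folding the two old components into $p\circ p'$; since exactly one of $p,p'$ is $\TT$ and the other $\FF$, we have $p\circ p' = \FF$, so the result is $\langle\TT,\FF\rangle = \True$ regardless of the input. Hence \texttt{Insert} idempotently forces position $1$ to $\True$ and consumes its argument linearly. For \texttt{Perm}, by inspection it permutes the vector's positions according to $\pi$, so as an operation on the marked set it sends $T$ to $\pi(T)$. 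Consequently $T_{k+1} = \{1\}\cup\pi(T_k)$ (the order within $G$ is immaterial to the conclusion, since either way $T_n$ is the full cycle of $1$). A straightforward induction then gives $T_k = \{\pi^0(1),\pi^1(1),\dots,\pi^{k-1}(1)\}$: the base cases are $T_0=\emptyset$ and $T_1=\{1\}$, and the step uses $\pi(\{\pi^0(1),\dots,\pi^{k-1}(1)\})=\{\pi^1(1),\dots,\pi^{k}(1)\}$ together with the re-insertion of $\pi^0(1)=1$. Taking $k=n$ and noting that the cycle of $1$ has length at most $n$, the powers $\pi^0(1),\dots,\pi^{n-1}(1)$ exhaust precisely that cycle, so $T_n$ equals the cycle containing $1$.

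It remains to check that \texttt{Select} returns the Boolean sitting at position $i$. Here the same De~Morgan/garbage-annihilation trick used in the \AND\ and \Implies\ gates applies: each discarded component Boolean $\langle a,a'\rangle$ contributes $a\circ a' = \FF$, and these are composed in pairs so that $\FF\circ\FF=\TT$ acts as the identity on the retained value; thus in the $n=3$, $i=3$ instance one computes $r \circ ((p_1\circ p_2)\circ(q_1\circ q_2)) = r\circ(\FF\circ\FF) = r\circ\TT = r$, leaving the selected Boolean $\langle r,r'\rangle$ intact while using every variable exactly once. Therefore \texttt{Select (Foo V)} evaluates to the Boolean held at position $i$ after $n$ iterations, which is $\True$ iff $i \in T_n$, i.e.\ iff $i$ lies on the cycle of $1$ --- equivalently, iff $1$ and $i$ are on the same cycle of $\pi$. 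Normalization agreeing with evaluation, this is exactly the claimed equivalence.

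The main obstacle I anticipate is not the combinatorics of $T_n$ --- that is a clean induction --- but the bookkeeping needed to justify the two linearity-preserving ``garbage disposals.'' One must verify that \texttt{Insert} and \texttt{Select} use each bound variable exactly once and that the composed junk always collapses to the identity $\TT$ for arbitrary $n$ (the displayed $n=3$ case must be shown to generalize), and that \texttt{Perm} and \texttt{Insert} share the single flat type $\nu$ so that $G$ can be iterated $n$ times without the type growing. Once these syntactic checks are in place, the semantic correspondence with analysis is immediate from Corollary~\ref{cor:normal}.
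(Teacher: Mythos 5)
Your proof is correct and follows the route the paper intends: the paper states this lemma without an explicit proof, leaving the verification to the construction itself, and your write-up supplies exactly the missing details --- the invariant $T_k = \{\pi^0(1),\dots,\pi^{k-1}(1)\}$ (correctly noting the order of composition in {\tt Foo} is immaterial), the self-annihilating garbage in {\tt Insert} and {\tt Select}, and the appeal to the linearity/analysis correspondence. The one detail to watch when generalizing {\tt Select} beyond $n=3$ is parity --- composing $n-1$ discarded copies of \FF\ yields \FF\ rather than \TT\ when $n-1$ is odd, so an extra twist must be inserted exactly as in the \AND\ gate --- but you have already flagged this bookkeeping as the remaining obligation.
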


Because 0CFA of a linear program is identical with normalization, we conclude:

\begin{theorem}
0CFA of a simply-typed, fully $\eta$-expanded program is complete for
\logspace.
\end{theorem}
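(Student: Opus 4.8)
The plan is to establish completeness in the usual two parts, membership and hardness, with the bulk of the work falling on the hardness direction. Membership is already in hand: the preceding theorem states that 0CFA of a simply-typed, fully $\eta$-expanded program is contained in \logspace, so it remains only to prove \logspace-hardness and then combine the two.

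For hardness I would reduce from the permutation problem, which the excerpt has already argued is \logspace-hard via the Euler-tour encoding of an arbitrary \logspace\ Turing machine as a permutation on machine IDs. Given an instance consisting of a permutation $\pi$ on $1,\dots,n$ and an index $i$, I would emit the linear ML program assembled from the codings {\tt V}, {\tt Perm}, {\tt Insert}, and {\tt Select}: form {\tt Foo} as {\tt Compose(Insert,Perm)} iterated $n$ times and consider the term {\tt Select (Foo V)}. By the preceding Lemma, this term normalizes to {\tt True} exactly when $1$ and $i$ lie on the same cycle of $\pi$. Since the program is linear, Corollary~\ref{cor:normal} guarantees that its 0CFA coincides with its evaluation, so the Boolean reached by normalization is faithfully reflected in the abstract cache. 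To phrase the outcome as the flow-analysis decision problem, I would place {\tt Select (Foo V)} in a discriminating context---the Widget of \autoref{sec:circuits}---so that a distinguished term flows to a fixed label $\ell$ precisely when the program normalizes to {\tt True}; the reduction then maps the permutation instance to the question whether that distinguished value lies in $\cache(\ell)$.

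I would then verify that this is a genuine \logspace\ reduction and that its target actually lies in the class covered by the membership theorem. The emitted term has size polynomial in $n$, since each of the $n$ iterations contributes a bounded block of code and the crucial point is that {\tt Perm} and {\tt Insert} share the same flat type $\nu$, so iterating their composition does not grow the type. The program is assembled in a purely syntax-directed fashion using only a constant number of counters and pointers into the input, hence in logarithmic space. I must also check that the constructed program is simply typed and can be fully $\eta$-expanded, which is exactly where the fixed type $\nu$ is essential: it keeps the types bounded and the $\eta$-expansion well defined, so the program genuinely falls within the hypotheses of the membership theorem. Combining membership with this hardness reduction yields completeness.

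The main obstacle I expect is the careful accounting that keeps the reduction within logarithmic space while simultaneously preserving all three structural requirements---linearity, simple typing, and full $\eta$-expansion. In particular, {\tt Compose(Insert,Perm)} must be typed so that iterating it $n$ times does not inflate the type, in sharp contrast with the circuit coding of \autoref{sec:circuits} where type size grows with the circuit, and {\tt Select} must linearly consume exactly the components it does not read. Establishing that this coding faithfully realizes cycle-membership while remaining linear is the crux, and it is precisely what the preceding Lemma encapsulates; once that Lemma is granted, the reduction and its resource analysis are routine.
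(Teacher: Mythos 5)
Your proposal matches the paper's own argument: membership comes from the preceding containment theorem (via the geometry-of-interaction ``four finger'' traversal), and hardness comes from reducing the \logspace-complete permutation problem through the linear ML coding ({\tt V}, {\tt Perm}, {\tt Insert}, {\tt Select}, with {\tt Foo} the $n$-fold composite of {\tt Compose(Insert,Perm)}), invoking the coincidence of 0CFA and normalization on linear terms. The details you supply---the Widget context to phrase the outcome as a flow question, and the observation that the flat type $\nu$ keeps the reduction in logarithmic space---are exactly the points the paper relies on (the latter is noted explicitly in its contrast with the circuit coding), so this is the same proof.
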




The usefulness of $\eta$-expansion has been noted in the context of
partial evaluation \cite{jones-et-al-93,danvy-et-al-toplas96}.  In
that setting, $\eta$-redexes serve to syntactically embed binding-time
coercions.  In our case, the type-based $\eta$-expansion does the
trick of placing the analysis in \logspace\ by embedding the type
structure into the syntax of the program.\footnote{Or, in slogan form:
  \logspace\ $=$ \ptime\ upto $\eta$.}

\section{Graphical Flow Analysis and Control}
\citet{shivers-sigplan04} argues that ``CPS provide[s] a uniform
representation of control structure,'' allowing ``this machinery to be
employed to reason about context, as well,'' and that ``without CPS,
separate contextual analyses and transforms must be also
implemented---redundantly,'' in his view.  Although our formulation of
flow analysis is a ``direct-style'' formulation, a graph
representation enjoys the same benefits of a CPS representation,
namely that control structures are made explicit---in a graph a
continuation is simply a wire.  Control constructs such as
\verb|call/cc| can be expressed directly \cite{lawall-mairson-esop00}
and our graphical formulation of control flow analysis carries over
without modification.

\citet{lawall-mairson-esop00} derive graph representations of programs
with control operators such as \verb|call/cc| by first translating
programs into continuation passing style (CPS).  They observed that
when edges in the CPS graphs carrying answer values (of type $\bot$)
are eliminated, the original (direct-style) graph is regained, modulo
placement of boxes and croissants that control sharing.  Composing the
two transformations results in a direct-style graph coding for
languages with \verb|call/cc| (hereafter, $\lambda_\mathcal{K}$).  The
approach applies equally well to languages such as Filinski's
symmetric $\lambda$-calculus \citeyearpar{filinski-ctcs89}, Parigot's
$\lambda_\mu$ calculus \citeyearpar{parigot-lpar92}, and most any
language expressible in CPS.

Languages such as $\lambda_\xi$, which contains the ``delimited
control'' operators {\em shift} and {\em reset}
\cite{danvy-filinksi-lfp90}, are not immediately amenable to this
approach since the direct-style transformation requires all calls to
functions or continuations be in tail position.  Adapting this
approach to such languages constitutes an open area of research.

\newcommand\graph[1]{\ensuremath{#1}}
\begin{figure} 
  \centering \scalebox{1.5}{\input{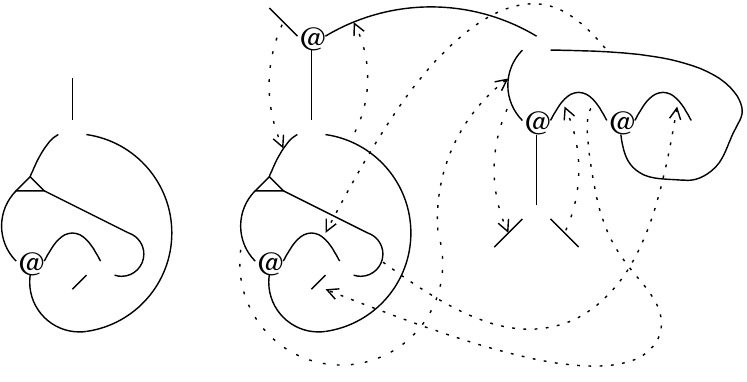_t}}
  \caption{Graph coding of {\tt call/cc} and example CFA graph.}
  \label{figure-callcc}
\end{figure}

The left side of \autoref{figure-callcc} shows the graph coding of
\verb|call/cc|.  Examining this graph, we can read of an
interpretation of \verb|call/cc|, namely: \verb|call/cc| is a function
that when applied, copies the current continuation ($\triangle$) and
applies the given function $f$ to a function ($\lambda v\dots$) that
when applied abandons the continuation at that point ($\odot$) and
gives its argument $v$ to a copy of the continuation where
\verb|call/cc| was applied.  If $f$ never applies the function it is
given, then control returns ``normally'' and the value $f$ returns is
given to the other copy of the continuation where \verb|call/cc| was
applied.




The right side of \autoref{figure-callcc} gives the CFA graph for the
program:
\begin{displaymath}
(\mbox{{\tt call/cc }} (\lambda k.(\lambda x.\overline{1})(k\overline{2})))^\ell
\end{displaymath}

From the CFA graph we see that $\cache(\ell) = \{ \overline{1},
\overline{2} \}$, reflecting the fact that the program will return
$\overline{1}$ under a call-by-name reduction strategy and
$\overline{2}$ under call-by-value.  Thus, the analysis is indifferent
to the reduction strategy.  Note that whereas before, approximation
was introduced through nonlinearity of bound variables, approximation
can now be introduced via nonlinear use of continuations, as seen in
the example.  In the same way that 0CFA considers all occurrences of a
bound variable ``the same'', 0CFA considers all continuations obtained
with each instance of \verb|call/cc| ``the same''.

Note that we can ask new kinds of interesting questions in this
analysis.  For example, in \autoref{figure-callcc}, we can compute
which continuations are potentially {\em discarded}, by computing
which continuations flow into the weakening node of the \verb|call/cc|
term.  (The answer is the continuation $((\lambda
x.\overline{1})[\;])$.)  Likewise, it is possible to ask which
continuations are potentially {\em copied}, by computing which
continuations flow into the principal port of the sharing node in the
\verb|call/cc| term (in this case, the top-level empty continuation
$[\;]$).  Because continuations are used linearly in
\verb|call/cc|-free programs, the questions were uninteresting
before---the answer is always {\em none}.

Our proofs for the \ptime-completeness of 0CFA for the untyped
$\lambda$-calculus
carry over without modification languages such as
$\lambda_\mathcal{K}$, $\lambda_\mu$ and the symmetric
$\lambda$-calculus.  In other words, first-class control operators
such as \verb|call/cc| increase the expressivity of the language,
but add nothing to the computational complexity of control flow
analysis.  In the case of simply-typed, fully $\eta$-expanded
programs, the same can be said.  A suitable notion of ``simply-typed''
programs is needed, such as that provided by \citet{griffin-popl90}
for $\lambda_\mathcal{K}$.  The type-based expansion algorithm of
\autoref{figure-expansion} applies without modification and lemma
\autoref{lemma-iso-paths} holds,
allowing 0CFA for this class of programs to be done in \logspace.
Linear logic provides a foundation for (classical) $\lambda$-calculi
with control; related logical insights allow control flow analysis in
this setting.

The graph coding of terms in our development is based on the
technology of {\em sharing graphs} in the untyped case, and {\em proof
  nets} in the typed case \cite{lafont-ll95}.  The technology of
proofnets have previously been extended to intersection types
\cite{regnier-phd,neergaard-phd}, which have a close connection to
flow analysis \cite{amtoft-turbak-esop00, palsberg-pavlopoulou-jfp01,
  wells-etal-jfp02, banerjee-jensen-mscs03}.
 
The graph codings, CFA graphs, and virtual wire propagation rules
share a strong resemblance to the ``pre-flow'' graphs, flow graphs,
and graph ``closing rules'', respectively, of
\citet{Mossin:97:FlowAnalysis}.  Casting the analysis in this light
leads to insights from linear logic and optimal reduction.  For
example, as \citet[page 78]{Mossin:97:FlowAnalysis} notes, the CFA
virtual paths computed by 0CFA are an approximation of the actual
run-time paths and correspond exactly to the ``well-balanced paths'' of
\citet{asperti-laneve-tcs95} as an approximation to ``legal paths''
\cite{levy-phd} and results on proof normalization in linear logic
\cite{DBLP:conf/ictcs/MairsonT03} informed the novel flow analysis algorithms
presented here.


\chapter{\kcfa\ and \exptime}
\chaptermark{KCFA AND EXPTIME}
\label{chapter-kcfa}

In this chapter, we give an exact characterization of the
computational complexity of the $k$CFA hierarchy.  For any $k > 0$, we
prove that the control flow decision problem is complete for
deterministic exponential time.  This theorem validates empirical
observations that such control flow analysis is intractable.  It also
provides more general insight into the complexity of abstract
interpretation.

\section{Shivers' \kcfa}

As noted in \autoref{sec-overview}, practical flow analyses must
negotiate a compromise between complexity and precision, and their
{\em expressiveness} can be characterized by the computational
resources required to compute their results.

Examples of simple yet useful flow analyses include Shivers' 0CFA
\citeyearpar{shivers-pldi88} and Henglein's simple closure analysis
\citeyearpar{henglein92d}, which are {\em monovariant}---functions
that are closed over the same $\lambda$-expression are identified.
Their expressiveness is characterized by the class \ptime\
(\autoref{chapter-0cfa}).

As described in \autoref{chapter-0cfa}, a monovariant analysis is one
that approximates at points of nonlinearity.  When a variable appears
multiple times, flow information is merged together for all sites.

So for example, in analyzing the program from \autoref{sec:0cfa},
\begin{displaymath}
(\lambda f. (ff)(\lambda y.y))(\lambda x.x),
\end{displaymath}
a monovariant analysis such as 0CFA or simple closure analysis will
merge the flow information for the two occurrences of $f$.
Consequently both $\lambda x.x$ and $\lambda y.y$ are deemed to flow
out of the whole expression.

More precise analyses can be obtained by incorporating
context-sensitivity to distinguish multiple closures over the same
$\lambda$-term, resulting in ``finer grained approximations, expending
more work to gain more information'' \cite{shivers-pldi88,shivers-phd}.
This context-sensitivity will allow the two occurrences of $f$ to be
analyzed independently.  Consequently, such an analysis will determine
that only $\lambda y.y$ flows out of the expression.

To put it another way, a context-sensitive analysis is capable of
evaluating this program.

As a first approximation to understanding, the added precision of
$k$CFA can be thought of as the ability to do partial reductions
before analysis.  If were to first reduce all of the apparent redexes
in the program, and {\em then} do 0CFA on the residual, our example
program would look like
\begin{displaymath}
(\lambda x_1.x_1)(\lambda x_2.x_2)(\lambda y.y).
\end{displaymath}
Being a linear program, 0CFA is sufficient to prove only $\lambda y.y$
flows out of this residual.  The polyvariance of $k$CFA is powerful
enough to prove the same, however it is important to note that it is
{\em not} done by a bounded reduction of the program.  Instead, the
$k$CFA hierarchy uses the last $k$ calling contexts to distinguish
closures.

The increased precision comes with an empirically observed increase in
cost. As Shivers noted in his retrospective on the $k$CFA work
\citeyearpar{shivers-sigplan04}:
\begin{quotation}%
It did not take long to discover that the basic analysis, for any $k >
0$, was intractably slow for large programs. In the ensuing years,
researchers have expended a great deal of effort deriving clever ways
to tame the cost of the analysis.
\end{quotation}

A fairly straightforward calculation---see, for example,
\citet{nielson-nielson-hankin}---shows that 0CFA can be computed in
polynomial time, and for any $k>0$, $k$CFA can be computed in
exponential time.

These naive upper bounds suggest that the $k$CFA hierarchy is
essentially {\em flat}; researchers subsequently ``expended a great
deal of effort'' trying to improve them.\footnote{Even so, there is a
big difference between algorithms that run in $2^n$ and $2^{n^2}$
steps, though both are nominally in \exptime.}  For example, it seemed
plausible (at least, to us) that the $k$CFA problem could be in \np\
by {\em guessing} flows appropriately during analysis.

As this dissertation shows, the naive algorithm is essentially the
best one, and the {\em lower} bounds are what needed improving.  We
prove that for all $k>0$, computing the $k$CFA analysis requires (and
is thus complete for) deterministic exponential time.  There is, in
the worst case---and plausibly, in practice---no way to tame the cost
of the analysis.  Exponential time is required.

\paragraph{Why should this result matter to functional programmers?}

\begin{itemize}

\item This result concerns a fundamental and ubiquitous static
  analysis {\em of} functional programs.

  The theorem gives an analytic, scientific characterization of the
  expressive power of $k$CFA.  As a consequence, the {\em empirically
    observed} intractability of the cost of this analysis can be
  understood as being {\em inherent in the approximation problem being
    solved}, rather than reflecting unfortunate gaps in our
  programming abilities.

  Good science depends on having relevant theoretical understandings
  of what we observe empirically in practice.

  This connection between theory and experience contrasts with the
  similar result for ML-type inference \cite{mairson-popl90}: while
  the problem of recognizing ML-typable terms is complete for
  exponential time, programmers have happily gone on programming.  It
  is likely that their need of higher-order procedures, essential for
  the lower bound, is not
  considerable.\footnote{\citet{kuan-macqueen-ml-07} have recently
    provided a refined perspective on the complexity of ML-type
    inference that explains why it works so quickly in practice.}

  But static flow analysis really has been costly, and this theorem
  explains why.

\item The theorem is proved {\em by} functional programming.

  We take the view that the analysis itself is a functional
  programming language, albeit with implicit bounds on the available
  computational resources.  Our result harnesses the approximation
  inherent in $k$CFA as a computational tool to hack exponential time
  Turing machines within this unconventional language.  The hack used
  here is completely unlike the one used for the ML analysis, which
  depended on complete developments of {\tt let}-redexes.  The theorem
  we prove in this paper uses approximation in a way that has little
  to do with normalization.
\end{itemize}

We proceed by first bounding the complexity of $k$CFA from above,
showing that $k$CFA can be solved in exponential time
(\autoref{sec-kcfa-in-exptime}).
This is easy to calculate and is known \cite{nielson-nielson-hankin}.
Next, we bound the complexity from below by using $k$CFA as a
SAT-solver.
This shows $k$CFA is at least \np-hard\ (\autoref{sec-kcfa-np-hard}).
The intuitions developed in the \np-hardness proof can be improved to
construct a kind of exponential iterator. 
A small, elucidative example is developed in \autoref{sec-toy}.
These ideas are then scaled up and applied in
\autoref{sec-kcfa-exptime-hard} to close the gap between the \exptime\
upper bound and \np\ lower bound by giving a construction to simulate
Turing machines for an exponential number of steps using $k$CFA, thus
showing $k$CFA to be complete for \exptime.

\section{\kcfa\ is in \exptime}
\label{sec-kcfa-in-exptime}

Recall the definition of $k$CFA from \autoref{sec:ai}.
The cache, $\cache,\aenv$, is a finite mapping and has $n^{k+1}$
entries.
Each entry contains  a set of closures.
The environment component of each closure maps $p$ free variables to
any one of $n^k$ contours.
There are $n$ possible $\lambda$-terms and $n^{kp}$ environments, so
each entry contains at most $n^{1+kp}$ closures.
Analysis is monotonic, and there are at most $n^{1+(k+1)p}$ updates to
the cache.
Since $p\leq n$, we conclude:

\begin{lemma}
The control flow problem for $k$CFA is contained in \exptime.
\end{lemma}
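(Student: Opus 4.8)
The plan is to prove the upper bound by a direct resource count on the object the analysis manipulates, namely the finite abstract cache $\cache,\aenv$ of \autoref{sec:ai}. Throughout, let $n$ denote the size of the input program; it simultaneously bounds the number of labels in $\Lab$, the number of variables in $\Var$, the number of distinct $\lambda$-terms, and the number $p$ of free variables occurring in any subterm, so in particular $p \le n$. First I would bound the number of cache \emph{locations}: a contour $\delta\in\Delta$ is a label string truncated to length at most $k$, so there are at most $n^k$ contours, and since $\cache:\Lab\times\Delta\to\widehat{\Val}$ and $\aenv:\Var\times\Delta\to\widehat{\Val}$ with $|\Lab|,|\Var|\le n$, the cache has at most $n\cdot n^k = n^{k+1}$ entries.

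Next I would bound how many abstract closures can inhabit a single entry. An element of $\widehat{\Val}$ is a set of closures $\langle t,\ce\rangle$, where $\ce$ is a contour environment mapping the (at most $p$) free variables of $t$ to contours. There are at most $n$ choices of term and at most $(n^k)^p = n^{kp}$ choices of environment, hence at most $n^{1+kp}$ closures per entry. The cache therefore ranges over a finite lattice whose total capacity is bounded by the product $n^{k+1}\cdot n^{1+kp}$; since $p\le n$ and $k$ is fixed, this is $n^{O(n)} = 2^{O(n\log n)}$ closures in all.

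Then I would invoke \emph{monotonicity}. Read as a functional on caches ordered by $\sqsubseteq$, each step of the abstract evaluator $\avk{\cdot}{\ce}{\delta}$ only ever unions new closures into entries and never removes any (values go in but never come out), so iterating from the empty cache produces an ascending chain in the finite lattice above. Such a chain must stabilize at its least fixed point after a number of additions bounded by the total capacity, i.e.\ after at most $n^{O(n)}$ updates. Each pass inspects the explicitly-represented cache and recomputes the acceptability clauses of \autoref{fig-kcfa-declarative} in time polynomial in the cache size, and the cache size is itself $2^{O(n\log n)}$; the total running time is thus polynomial in $2^{O(n\log n)}$, which is $2^{n^{O(1)}}$ and so lies in \exptime. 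This immediately decides the flow analysis problem, which only reads one bit off the computed cache.

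The single point requiring care---the closest thing to an obstacle---is the looping hazard flagged in the text: a call $\avk{e}{\ce}{\delta}$ can recursively re-invoke $\avk{e}{\ce}{\delta}$ under the \emph{same} environment and contour, so a naive descent need not terminate. I would dispatch this exactly as the discussion of coinduction suggests, by maintaining a work-list (or memo table) of the finitely many triples $(e,\ce,\delta)$ currently in progress and treating a repeated triple as an appeal to the coinductive hypothesis rather than re-descending. Since there are only $2^{O(n\log n)}$ such triples and the cache is monotone, this bookkeeping costs only a polynomial (in cache size) overhead while guaranteeing termination, which closes the \exptime\ bound. I expect the remaining arithmetic to be entirely routine, with no further subtlety beyond tracking the exponents carefully.
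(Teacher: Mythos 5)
Your proposal is correct and follows essentially the same route as the paper's own argument: count the $n^{k+1}$ cache entries, bound each entry by $n^{1+kp}$ closures, and appeal to monotonicity to bound the number of updates by roughly $n^{1+(k+1)p}$ with $p\le n$, giving an exponential bound. Your additional care about the memoization of in-progress triples $(e,\ce,\delta)$ is a detail the paper relegates to its ``fine print'' remarks rather than the proof itself, but it does not change the approach.
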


It is worth noting that this result shows, from a complexity
perspective, the flatness of the $k$CFA hierarchy: {\em for any
  constant} $k$, $k$CFA is decidable in exponential time.
It is not the case, for example, that 1CFA requires exponential time
(for all $j$, $\mbox{\dtime}(2^{n^j}) \subseteq \mbox{\exptime}$),
while 2CFA requires {\em doubly} exponential time (for all $j$,
$\mbox{\dtime}(2^{2^{n^j}}) \subseteq \mbox{2\exptime}$), 3CFA
requires {\em triply} exponential time, etc.
There are strict separation results for these classes, $\mbox\exptime
\subset \mbox{2\exptime} \subset \mbox{3\exptime}$, etc., so we know
from the above lemma there is no need to go searching for lower bounds
greater than \exptime.

\section{\kcfa\ is \np-hard}
\label{sec-kcfa-np-hard}

Because $k$CFA makes approximations, many closures can flow to a
single program point and contour.  In 1CFA, for example, $\lambda
w.wx_1x_2\cdots x_n$ has $n$ free variables, with an exponential
number of possible associated environments mapping these variables to
program points (contours of length 1).  Approximation allows us to
bind each $x_i$, independently, to either of the closed
$\lambda$-terms for \True\ or \False\ that we saw in the
\ptime-completeness proof for 0CFA.  In turn, application to an
$n$-ary Boolean function necessitates computation of all $2^n$ such
bindings in order to compute the flow out from the application site.
The term \True\ can only flow out if the Boolean function is
satisfiable by some truth valuation.
\begin{figure}[h]
\begin{eqnarray*}
&& (\lambda f_1. (f_1\ \True) (f_1\ \False))\\
&& (\lambda x_1.\\
&& \TB (\lambda f_2. (f_2\ \True) (f_2\ \False))\\
&& \TB (\lambda x_2.\\
&& \TB\TB (\lambda f_3. (f_3\ \True) (f_3\ \False))\\
&& \TB\TB (\lambda x_3.\\
&& \TB\TB\TB\TB \cdots\\
&& \TB\TB\TB\TB (\lambda f_n. (f_n\ \True) (f_n\ \False))\\
&& \TB\TB\TB\TB (\lambda x_n.\\
&& \TB\TB\TB\TB\TB C[(\lambda v.\phi\ v)(\lambda w.wx_1x_2\cdots x_n)])\cdots ))))
\end{eqnarray*}
\caption{\np-hard construction for $k$CFA.}
\end{figure}
For an appropriately chosen program point (label) $\ell$, the cache
location $\cache(v,\ell)$ will contain the set of all possible
closures which are approximated to flow to $v$.  This set is that of
all closures
\begin{displaymath}
\langle(\lambda w.wx_1x_2\cdots x_n),\ce\rangle
\end{displaymath} 
where $\ce$ ranges over all assignments of {\tt True} and {\tt False}
to the free variables (or more precisely assignments of locations in
the table containing {\tt True} and {\tt False} to the free
variables).  The Boolean function $\phi$ is completely linear, as in
the \ptime-completeness proof; the context $C$ uses the Boolean
output(s) as in the conclusion to that proof: mixing in some ML, the
context is:
\begin{quote}
{\small
\begin{verbatim}
- let val (u,u')= [---] in
  let val ((x,y),(x',y'))= (u (f,g), u' (f',g')) in 
      ((x a, y b),(x' a', y' b')) end end;
\end{verbatim}
}
\end{quote}
Again, {\tt a} can only flow as an argument to {\tt f} if {\tt True} flows to {\tt (u,u')},
leaving {\tt (f,g)} unchanged, which can only happen if {\em some} closure
$\langle(\lambda w.wx_1x_2\cdots x_n),\ce\rangle$ provides a satisfying truth valuation for $\phi$.
We have as a consequence:

\begin{theorem}
The control flow problem for 1CFA is \np-hard.
\end{theorem}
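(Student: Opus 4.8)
The plan is to prove \np-hardness by reduction from Boolean satisfiability (SAT): given a formula $\phi$ on variables $x_1,\dots,x_n$, I would produce, in logarithmic space, the program $P_\phi$ displayed in the figure above, in which $\phi$ is realized as a \emph{linear} Boolean function using exactly the symmetric encodings ($\TT$, $\FF$, $\Copy$, and the gates) developed for the \ptime-completeness proof of \autoref{sec:circuits}. The program size is polynomial in $|\phi|$ and the encoding is entirely local, so the reduction respects the required resource bound. The decision instance asks whether the distinguished ``true'' flow emerges at the designated label $\ell$ of the context $C$ (the Widget), i.e.\ whether $\True$ reaches $(u,u')$ so that $a$ flows as an argument to $f$; the claim to be established is that this holds precisely when $\phi$ is satisfiable.

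The argument rests on two lemmas. First, a \emph{generation} lemma: tracing the 1CFA abstract evaluator through the nested gadgets $(\lambda f_i.(f_i\,\True)(f_i\,\False))(\lambda x_i.\dots)$, each level binds $f_i$ to $\lambda x_i.\dots$ and then analyzes its body twice, once with $x_i$ bound to $\True$ in the length-one contour of the application $(f_i\,\True)$ and once with $x_i$ bound to $\False$ in the distinct contour of $(f_i\,\False)$. Since every level is analyzed unconditionally under both branches and the two contours per level are distinct labels, the innermost closure $\langle \lambda w.wx_1\cdots x_n, \ce\rangle$ is created with $\ce$ ranging over all $2^n$ assignments of these contours to $x_1,\dots,x_n$; all $2^n$ such closures accumulate in $\cache$ at $\ell$, each encoding one complete truth valuation in its environment. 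Second, an \emph{evaluation} lemma: because $\phi$ is linear, analysis and evaluation coincide on it (Corollary~\ref{cor:normal}), so applying $\phi$ to the closure carrying valuation $\ce$ produces $\True$ exactly when $\ce$ satisfies $\phi$; taking the union over the $2^n$ closures, $\True$ flows out of the application site iff some valuation satisfies $\phi$, and the Widget converts this into the asked-for flow.

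The main obstacle is the evaluation lemma, and specifically its soundness direction --- that no \emph{spurious} $\True$ appears when $\phi$ is unsatisfiable. The danger is the very loss of correlation that confines 0CFA to \ptime: if the $n$ argument values were merged before reaching $\phi$, then a $\Copy$ node applied to a merged Boolean $\{\True,\False\}$ would emit two \emph{independent} merged Booleans, decorrelating repeated uses of a variable and making essentially every formula appear satisfiable. The crux is therefore to show that 1CFA's length-one contours, carried in the distinct closure environments $\ce$, keep each of the $2^n$ valuations internally \emph{consistent} throughout the linear evaluation of $\phi$ --- this is exactly where polyvariance ($k\ge 1$) does work that $k=0$ cannot. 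I would discharge this by careful bookkeeping of contours through the evaluation of $\phi$ on a fixed closure, leaning on the linearity-implies-exactness correspondence so that, per closure, the computation is a faithful single-valued evaluation rather than an approximation; the existential over closures then yields satisfiability, and the absence of cross-closure contamination yields the converse. A small worked instance (developed in the following section) would make the contour discipline concrete before the idea is scaled up.
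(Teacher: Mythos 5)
Your construction and overall strategy are the same as the paper's: a reduction from satisfiability via the nested $(\lambda f_i.(f_i\,\True)(f_i\,\False))(\lambda x_i.\dots)$ gadgets, the accumulation of all $2^n$ closures $\langle\lambda w.wx_1\cdots x_n,\ce\rangle$ at the argument position of the linear Boolean function $\phi$, and the Widget to convert ``$\True$ flows out'' into a flow fact about a distinguished term. Your generation lemma is exactly the paper's argument and is fine: the two binding contours per level are distinct labels, every level is analyzed under both, and the environments compose so that all $2^n$ valuations appear as distinct closures at the relevant cache location.

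The gap is in your evaluation lemma, precisely at the point you yourself flag as the crux. You claim the length-one contours carried in the closure environments keep each of the $2^n$ valuations internally consistent \emph{throughout} the linear evaluation of $\phi$, so that per closure the analysis is a faithful single-valued evaluation. That is not what the abstract interpreter does. The environment $\ce$ of a closure governs only the lookups $\aenv(x_i,\ce(x_i))$ at the occurrences of $x_i$ inside the tuple body $wx_1\cdots x_n$. The moment those values are passed onward---when $(w\,x_1)^{a_1}$ binds the consumer's parameter at contour $\lceil\delta a_1\rceil_1 = a_1$---the binding contour is determined by the call site $a_1$ alone and is therefore the \emph{same} for all $2^n$ closure iterations. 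Every iteration writes into the same cache location, so at the fixed point the circuit's inputs are the merged sets $\{\True,\False\}$, and any $\Copy$ node downstream decorrelates them; this is exactly the mechanism by which $\Implies\,x\,x$ yields both $\True$ and $\False$ in the toy calculation, and it would make an unsatisfiable formula such as $x\wedge\neg x$ emit a spurious $\True$. Corollary~\ref{cor:normal} does not rescue the step: it concerns a closed linear program analyzed from an empty cache, not a linear subterm embedded in a nonlinear context whose inputs arrive as already-merged flow sets. So your sketch establishes the ``if'' direction (satisfiable implies $\True$ flows, which holds by monotonicity) but not the ``only if'' direction; closing it requires either a direct argument about what can reach the Widget that does not presuppose per-closure isolation, or a construction in which the satisfiability test is performed without re-binding the $x_i$ at contours shared across all closures. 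Be aware that the paper's own proof does not spell this step out either---it asserts the biconditional immediately after describing the construction---so you are not missing a lemma the paper supplies, but the justification you propose would not go through as written.
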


Having established this lower bound for 1CFA, we now argue the result
generalizes to all values of $k > 0$.
Observe that by going from $k$CFA to $(k+1)$CFA, further
context-sensitivity is introduced.
But, this added precision can be undone by inserting an identity
function application at the point relevant to answering the flow
question.
This added calling context consumes the added bit of precision in the
analysis and renders the analysis of rest of the program equivalently
to the courser analysis.
Thus, it is easy to insert an identity function into the above
construction such that 2CFA on this program produces the same results
as 1CFA on the original.
So for any $k > 1$, we can construct an \np-hard computation by
following the above construction and inserting $k-1$ application sites
to eat up the precision added beyond 1CFA.
The result is equivalent to 1CFA on the original term, so we conclude:

\begin{theorem}
The control flow problem for $k$CFA is \np-hard, for any $k > 0$.
\end{theorem}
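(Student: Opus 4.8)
The plan is to argue by padding, reducing the case $k>1$ to the NP-hardness of 1CFA already established in the preceding theorem (which handles the base case $k=1$). Recall that that theorem compiles a Boolean formula $\phi$ into the program of the figure above, where the closure $\langle\lambda w.wx_1x_2\cdots x_n,\ce\rangle$ accumulates, at a single cache location, every environment $\ce$ assigning \True\ or \False\ to the variables $x_i$; the distinguished flow witnessing satisfiability then appears iff some assignment satisfies $\phi$. The feature being exploited is that with contours of length $1$ the bindings of $x_1,\dots,x_n$—established many call sites before the query—become indistinguishable after truncation, so all $2^n$ assignments are merged at one program point. The goal for general $k$ is to reproduce exactly this collision pattern under $k$CFA.

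First I would define a syntactic transformation $T_k$ that takes the 1CFA instance $e$ and threads the operative subcomputation through $k-1$ additional applications of the identity function, inserting trivial calls of the form $((\lambda z.z)\,\cdots)$ at the point where the satisfiability flow is queried. Each such inserted application contributes one fixed label to every contour that reaches the query site. This transformation is a local, bounded insertion independent of $\phi$, and is therefore computable in logarithmic space; composed with the logspace reduction from SAT underlying the 1CFA construction, it yields a logspace reduction from SAT to the $k$CFA flow problem.

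The heart of the argument is to show that $k$CFA of $T_k(e)$ agrees with 1CFA of $e$ at the queried location. I would track contours through the abstract evaluation and argue that along every analysis path reaching the query, the rightmost $k-1$ positions are occupied by exactly the constant labels of the inserted identity applications; truncation to the rightmost $k$ labels (the operation $\lceil\cdot\rceil_k$) therefore retains only a single meaningful, non-constant label, recovering precisely the 1CFA collision behavior. Consequently the same exponential family of closures merges at the same location, and \True\ flows to the distinguished site iff $\phi$ is satisfiable, giving the desired hardness for each fixed $k>0$.

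The main obstacle I anticipate lies in this last step: one must verify that the inserted identity calls genuinely and uniformly occupy the $k-1$ extra contour slots—that \emph{no} accepted analysis path reaching the query ever carries a discriminating label in those positions—so that truncation always leaves exactly one meaningful position. This demands a precise invariant on the shape of the contours generated during analysis of $T_k(e)$, rather than an appeal to the intuition that ``the extra identity eats up the added precision.'' Formulating that invariant and checking that it is preserved by the co-inductively defined acceptability relation of \autoref{fig-kcfa-declarative}—in particular across the application clause, where contours are extended and truncated—is where the real work will be.
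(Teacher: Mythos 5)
Your proposal matches the paper's argument exactly: the paper also reduces the general case to the established NP-hardness of 1CFA by inserting $k-1$ identity-function applications at the point relevant to the flow question, so that the added calling contexts consume the extra precision and the $k$CFA analysis of the padded term coincides with 1CFA of the original. If anything, you are more careful than the paper, which asserts the padding step informally without stating the contour-shape invariant you correctly identify as the substantive obligation.
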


At this point, there is a tension in the results.  On the one hand,
$k$CFA is contained in \exptime; on the other, $k$CFA requires at
least \np-time to compute.
So a gap remains; either the algorithm for computing $k$CFA can be
improved and put into \np, or the lower bound can be strengthened by
exploiting more computational power from the analysis.

We observe that while the computation of the {\em entire} cache
requires exponential time, perhaps the existence of a {\em specific}
flow in it may well be computable in \np.
A non-deterministic algorithm might compute using the ``collection
semantics'' $\ev{t^\ell}{\ce}{\delta}$, but rather than compute entire
sets, {\em choose} the element of the set that bears witness to the
flow.
If so we could conclude $k$CFA is \np-complete.

However, this is not the case.
We show that the lower bound can be improved and $k$CFA is complete
for \exptime.
The improvement relies on simulating an exponential iterator using
analysis.
The following section demonstrates the core of the idea.

\section{Nonlinearity and Cartesian Products:\texorpdfstring{\\ \qquad}{} a toy calculation, with insights}
\label{sec-toy}

A good proof has, at its heart, a small and simple idea that makes it
work.  For our proof, the key idea is how the approximation of
analysis can be {\em leveraged} to provide computing power {\em above
and beyond} that provided by evaluation.  The difference between the
two can be illustrated by the following term:
\begin{displaymath}
\begin{array}{l}
(\lambda f. (f\;\mbox{\tt True}) (f\;\mbox{\tt False}))\\
(\lambda x.\Implies\, x\, x)
\end{array}
\end{displaymath}
Consider evaluation: Here $\Implies\, x\, x$ (a tautology) is evaluated
twice, once with $x$ bound to \True, once with $x$ bound to \False.
But in both cases, the result is \True.  Since $x$ is bound to \True\
or \False\, both occurrences of $x$ are bound to \True\ or to
\False---but it is never the case, for example, that the first
occurrence is bound to \True, while the second is bound to \False.
The values of each occurrence of $x$ is dependent on the other.

On the other hand, consider what flows out of $\Implies\,x\, x$ according
1CFA: both \True\ and \False.  Why? The approximation incurs analysis
of $\Implies\,x\, x$ for $x$ bound to \True\ and \False, but it considers
{\em each occurrence of $x$ as ranging over \True\ and \False,
independently}.  In other words, for the set of values bound to $x$,
we consider their {\em cross product} when $x$ appears nonlinearly.
The approximation permits one occurrence of $x$ be bound to \True\
while the other occurrence is bound to \False; and somewhat alarmingly,
$\Implies\,\True\,\False$ causes \False\ to flow out.  Unlike in normal
evaluation, where within a given scope we know that multiple
occurrences of the same variable refer to the same value, in the
approximation of analysis, multiple occurrences of the same variable
range over {\em all} values that they are possible bound to {\em
independent of each other}.  

Now consider what happens when the program is expanded as follows:
\begin{displaymath}
\begin{array}{l}
(\lambda f. (f\,\mbox{\tt True}) (f\,\mbox{\tt False}))\\
(\lambda x. (\lambda p. p (\lambda u. p (\lambda v. \Implies\, u v))) (\lambda w.wx))
\end{array}
\end{displaymath}
Here, rather than pass $x$ directly to \Implies, we construct a unary
tuple $\lambda w.wx$.  The tuple is used nonlinearly, so $p$ will
range over {\em closures} of $\lambda w.wx$ with $x$ bound to \True\
and \False, again, independently.

A closure can be approximated by an exponential number of values.  For
example, $\lambda w.wz_1z_2\dots z_n$ has $n$ free variables, so there
are an exponential number of possible environments mapping these
variables to program points (contours of length 1).  If we could apply
a Boolean function to this tuple, we would effectively be evaluating
all rows of a truth table; following this intuition leads to
\np-hardness of the 1CFA control flow problem.

Generalizing from unary to $n$-ary tuples in the above example, an
exponential number of closures can flow out of the tuple.  For a
function taking two $n$-tuples, we can compute the function on the
cross product of the exponential number of closures.

This insight is the key computational ingredient in simulating exponential
time, as we describe in the following section.

\section{\kcfa\ is \exptime-hard}
\label{sec-kcfa-exptime-hard}

\subsection{Approximation and \exptime}

Recall the formal definition of a Turing machine: a 7-tuple
\[
\langle Q,\Sigma,\Gamma,\delta,q_0,q_a,q_r\rangle
\]
where $Q$, $\Sigma$, and $\Gamma$ are finite sets,  $Q$ is the set of
machine states (and $\{q_0,q_a,q_r\}\subseteq Q$),  $\Sigma$ is the
input alphabet, and $\Gamma$ the tape alphabet, where
$\Sigma\subseteq\Gamma$.  The states $q_0$, $q_a$, and $q_r$ are the
machine's initial, accept, and reject states, respectively.
The complexity class \exptime\ denotes the languages
that can be decided by a Turing machine in time  exponential in the
input length.  

Suppose we have a deterministic Turing machine $M$ that accepts or
rejects its input $x$ in time $2^{p(n)}$, where $p$ is a polynomial
and $n=|x|$.  We want to simulate the computation of $M$ on $x$ by
$k$CFA analysis of a $\lambda$-term $E$ dependent on $M,x,p$, where a
particular closure will flow to a specific program point iff $M$
accepts $x$.  It turns out that $k=1$ suffices to carry out this
simulation.  The construction, computed in logarithmic space, is
similar for all constant $k>1$ modulo a certain amount of padding
as described in \autoref{sec-kcfa-np-hard}.

\subsection{Coding Machine IDs}

The first task is to code machine IDs.  Observe that each
value stored in the abstract cache $\cache$ is a {\em closure}---a
$\lambda$-abstraction, together with an environment for its free variables.
The number of such abstractions is bounded by the program size, as is the {\em domain}
of the environment---while the number of such {\em environments} is exponential in the program size.
(Just consider a program of size $n$ with, say, $n/2$ free variables mapped to only 2 program points
denoting bindings.)

Since a closure only has polynomial size, and a Turing machine ID has
exponential size, we represent the latter by splitting its information
into an exponential number of closures.  Each closure represents a
tuple $\langle T,S,H,C,b\rangle$, which can be read as
\begin{quotation}\it
``At time $T$, Turing machine $M$ was in state $S$, the tape position
was at cell $H$, and cell $C$ held contents $b$.''
\end{quotation}
$T$, $S$, $H$, and $C$ are blocks of bits ($\Zero \equiv \True$,
$\One\equiv\False$) of size polynomial in the input to the Turing machine.  As such, each block
can represent an exponential number of values.
A single machine ID is represented
by an exponential number of tuples (varying $C$ and $b$).  Each such tuple can in turn be
coded as a $\lambda$-term $\lambda w.wz_1z_2\cdots z_N$, where $N=O(p(n))$.

We still need to be able to generate an exponential number of closures
for such an $N$-ary tuple.  The construction is only a modest,
iterative generalization of the construction in our toy calculation
above:
\begin{figure}[h]
\begin{displaymath}
\begin{array}{rcl}
\ & \ & (\lambda f_1.(f_1\;\Zero)(f_1\;\One))\\
\ & \ & (\lambda z_1.\\
\ & \ &\quad(\lambda f_2.(f_2\;\Zero)(f_2\;\One))\\
\ & \ &\quad(\lambda z_2.\\
\ & \ &\qquad\cdots\\
\ & \ &\quad\qquad(\lambda f_N. (f_N\;\Zero) (f_N\;\One))\\
\ & \ &\quad\qquad(\lambda z_N.((\lambda x.x)(\lambda w.wz_1z_2\cdots z_N))^{\ell})\cdots))
\end{array}
\end{displaymath}
\caption{Generalization of toy calculation for $k$CFA.}
\label{fig-kcfa-construction}
\end{figure}

In the inner subterm,
\begin{displaymath}
((\lambda x.x)(\lambda w.wz_1z_2\cdots z_N))^{\ell},
\end{displaymath}
the function $\lambda x.x$ acts as a very important form of {\em
  padding}.  Recall that this is $k$CFA with $k=1$---the expression
$(\lambda w.wz_1z_2\cdots z_N)$ is evaluated an exponential number of
times---to see why, normalize the term---but in each instance, the
contour is always $\ell$.  (For $k>1$, we would just need more padding
to evade the {\em polyvariance} of the flow analyzer.)  As a
consequence, each of the (exponential number of) closures gets put in
the {\em same} location of the abstract cache $\cache$, while they are
placed in unique, {\em different} locations of the exact cache
$\ecache$.  In other words, the approximation mechanism of $k$CFA
treats them as if they are all the same.  (That is why they are put in
the same cache location.)

\subsection{Transition Function}

Now we define a binary transition function $\delta$, which does a {\em
  piecemeal} transition of the machine ID.  The transition function is
represented by three rules, identified uniquely by the time stamps $T$
on the input tuples.

The first {\em transition rule} is used when the
tuples agree on the time stamp $T$, and the head and cell address of the
first tuple coincide:
\begin{displaymath}
\begin{array}{l}
\delta\langle T,S,H,H,b\rangle \langle T,S',H',C',b'\rangle\; = \\
\qquad\qquad\qquad\qquad 
\langle T+1, \delta_Q(S,b), \delta_{LR}(S,H,b), H, \delta_\Sigma(S,b) \rangle
\end{array}
\end{displaymath}
This rule {\em computes} the transition to the next ID.  The first
tuple has the head address and cell address coinciding, so it has all
the information needed to compute the next state, head movement, and
what to write in that tape cell.  The second tuple just marks that
this is an instance of the {\em computation} rule, simply indicated by
having the time stamps in the tuples to be identical.  The Boolean functions
$\delta_Q,\delta_{LR},\delta_\Sigma$ compute the next state, head position, and 
what to write on the tape.

The second {\em communication rule} is used when the tuples have time
stamps $T+1$ and $T$: in other words, the first tuple has information
about state and head position which needs to be communicated to every
tuple with time stamp $T$ holding tape cell information for an arbitrary such cell, as it
gets updated to time stamp $T+1$:
\begin{displaymath}
\begin{array}{l}
\delta\langle T+1,S,H,C,b\rangle \langle T,S',H',C',b'\rangle = \langle T+1, S, H, C',b'\rangle\\
\qquad\qquad\qquad\qquad\qquad\qquad\qquad\qquad\qquad\qquad 
(H'\not=C')
\end{array}
\end{displaymath}
(Note that when $H'=C'$, we have already written the salient tuple using the transition rule.)
This rule {\em communicates} state and head position (for the first
tuple computed with time stamp $T+1$, where the head and cell address
coincided) to all the other tuples coding the rest of the Turing
machine tape.  

Finally, we define a {\em catch-all rule}, mapping any other
pairs of tuples (say, with time stamps $T$ and $T+42$) to some
distinguished null value (say, the initial ID).  We need this rule just to make sure that
$\delta$ is a totally defined function.
\begin{displaymath}
\begin{array}{l}
\delta\langle T,S,H,C,b\rangle \langle T',S',H',C',b'\rangle\; =\ \Null\qquad\qquad\qquad\qquad\\
\qquad\qquad\qquad\qquad\qquad\qquad\qquad\qquad 
(T\not=T'\mbox{\ and\ } T\not=T'+1)
\end{array}
\end{displaymath}

Clearly, these three rules can be coded by a single Boolean circuit,
and we have all the required Boolean logic at our disposal from
\autoref{sec:circuits}.

Because $\delta$ is a binary function, we need to compute a {\em cross
  product} on the coding of IDs to provide its input.  The transition
function is therefore defined as in \autoref{fig-transition}.
\begin{figure}[h]
\begin{displaymath}
\begin{array}{rcl}
\Phi &\equiv& \lambda p.\\
\ &\ &\quad\mbox{let }\langle u_1,u_2,u_3,u_4,u_5\rangle = \Copy_5\; p\mbox{ in}\\
\ &\ &\quad\mbox{let }\langle v_1,v_2,v_3.v_4,v_5\rangle = \Copy_5\; p\mbox{ in}\\
\ &\ &\qquad(\lambda w . w(\phi_T u_1 v_1) (\phi_S u_2 v_2) \dots (\phi_b u_5 v_5))\\
\ &\ &\qquad(\lambda w_T. \lambda w_S. \lambda w_H. \lambda w_C. \lambda w_b.\\
\ &\ &\quad\qquad w_T(\lambda z_1.\lambda z_2\dots\lambda z_T.\\
\ &\ &\qquad\qquad w_S(\lambda z_{T+1}.\lambda z_{T+2}\dots\lambda z_{T+S}.\\
\ &\ &\quad\qquad\qquad \dots\\
\ &\ &\qquad\qquad\qquad w_b(\lambda z_{C+1}.\lambda z_{C+2}\dots\lambda z_{C+b=m}.\\
\ &\ &\quad\qquad\qquad\qquad \lambda w.wz_1z_2\dots z_m)\dots)))
\end{array}
\end{displaymath}
\caption{Turing machine transition function construction.}
\label{fig-transition}
\end{figure}
The {\tt Copy} functions just copy enough of the input for the separate calculations to
be implemented in a linear way.
Observe that this $\lambda$-term is entirely linear {\em except} for
the two occurrences of its parameter $p$.  In that sense, it serves a
function analogous to $\lambda x.\Implies\, x\, x$ in the toy
calculation.  Just as $x$ ranges there over the closures for $\True$
and for $\False$, $p$ ranges over all possible IDs flowing to the
argument position.  Since there are two occurrences of $p$, we have
two entirely separate iterations in the $k$CFA analysis.  These
separate iterations, like nested ``for'' loops, create the equivalent
of a cross product of IDs in the ``inner loop'' of the flow analysis.

\subsection{Context and Widget}

The context for the Turing machine simulation needs to set up the
initial ID and associated machinery, extract the Boolean value telling
whether the machine accepted its input, and feed it into the flow
widget that causes different flows depending on whether the value flowing in is
$\True$ or $\False$.
\begin{figure}
\begin{displaymath}
\begin{array}{rcl}
C &\equiv & (\lambda f_1.(f_1\;\Zero)(f_1\;\One))\\
\ & \ & (\lambda z_1.\\
\ & \ &\quad(\lambda f_2.(f_2\;\Zero)(f_2\;\One))\\
\ & \ &\quad(\lambda z_2.\\
\ & \ &\qquad\cdots\\
\ & \ &\quad\qquad(\lambda f_N. (f_N\;\Zero) (f_N\;\One))\\
\ & \ &\quad\qquad(\lambda z_N.((\lambda x.x) (\Widget (\Extract[\;]))^\ell)^{\ell'})\cdots))
\end{array}
\end{displaymath}
\caption{\exptime-hard construction for $k$CFA.}
\end{figure}
In this code, the $\lambda x.x$ (with label $\ell'$ on its application) serve as padding, so that the term within is always applied in the same contour.
\Extract\ extracts a final ID, with its time stamp, and checks if it
codes an accepting state, returning \True\ or \False\ accordingly.
\Widget\ is our standard control flow test.  The context is
instantiated with the coding of the transition function, iterated over
an initial machine ID,
\begin{displaymath}
\begin{array}{l}
2^n\;\Phi\; \lambda w.w\Zero\dots\Zero\cdots Q_0\cdots H_0\cdots z_1 z_2\dots z_N \Zero,
\end{array}
\end{displaymath}
where $\Phi$ is a coding of transition function for $M$.  The
$\lambda$-term $2^n$ is a fixed point operator for $k$CFA, which can
be assumed to be either $\mathbf{Y}$, or an exponential function
composer.  There just has to be enough iteration of the transition
function to produce a fixed point for the flow analysis.

To make the coding easy, we just assume without loss of generality
that $M$ starts by writing $x$ on the tape, and then begins the
generic exponential-time computation.  Then we can just have all
zeroes on the initial tape configuration.

\begin{lemma}
  For any Turing machine $M$ and input $x$ of length $n$, where $M$
  accepts or rejects $x$ in $2^{p(n)}$ steps, there exists a
  logspace-constructable, closed, labeled $\lambda$-term $e$ with distinguished label
  $\ell$ such that in the $k$CFA analysis of $e$ ($k>0$), $\True$ flows into
  $\ell$ iff $M$ accepts $x$.
\end{lemma}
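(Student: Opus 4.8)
The plan is to assemble the gadgets developed in the preceding sections into a single closed term $e$ and then prove, by induction on the Turing machine's time stamp, that the $k$CFA analysis of $e$ computes exactly the sequence of configurations of $M$ on $x$. First I would fix the encoding: a full machine ID at time $T$ is represented not by one closure but by the \emph{set} of closures $\langle \lambda w. w z_1\cdots z_N, \rho\rangle$ that the analysis accumulates at the distinguished program point $\ell$, where each environment $\rho$ binds the bits $z_1,\dots,z_N$ to \Zero/\One\ so as to spell out a tuple $\langle T,S,H,C,b\rangle$ (time, state, head position, cell address, contents). The essential mechanism, already isolated in the toy calculation, is that the nonlinear scaffolding $(\lambda f_i.(f_i\;\Zero)(f_i\;\One))$ of \autoref{fig-kcfa-construction} forces each $z_i$ to range over both bits independently, while the padding redex $(\lambda x.x)(\cdots)^\ell$ collapses all of the resulting exponentially many closures into the \emph{single} abstract cache entry at $\ell$ (for $k=1$; for larger $k$ I would insert $k-1$ further identity applications as in \autoref{sec-kcfa-np-hard}). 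Thus one cache location holds, simultaneously, all the tuples making up an exponential-size configuration.

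Next I would verify that the term $\Phi$ of \autoref{fig-transition} implements one step of $M$. Because $\delta$ is binary and the two occurrences of its argument $p$ are the only nonlinear subterms of $\Phi$, the analysis evaluates $\delta$ on the \emph{cross product} of the set of tuples flowing into $p$---precisely the behaviour exhibited by $\lambda x.\Implies\,x\,x$ in the toy calculation. I would then establish the key invariant by induction on $T$: writing $R_T$ for the set of tuples with time stamp $T$ present in the cache at the fixed point, $R_T$ is \emph{exactly} the set of tuples correctly encoding the configuration of $M$ after $T$ steps. The base case is the initial ID wired in by the context $C$. For the inductive step I would check the three rules of $\delta$ in turn: the transition rule fires on the pair whose first component has head address equal to cell address, producing the updated head cell at time $T+1$; the communication rule pairs that freshly computed $(T+1)$-tuple with every $T$-tuple for a non-head cell, copying forward the new state and head position while preserving untouched cell contents; and the catch-all rule sends every mismatched pair to \Null, so no tuple with an inconsistent time stamp contaminates $R_{T+1}$. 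Determinism of $M$ enters here, guaranteeing that $R_{T+1}$ encodes a \emph{single} next configuration, so that merging in the cache loses no information.

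Then I would handle iteration and read-off. The iterator $2^n$ (taken to be $\mathbf{Y}$, or an explicit exponential composer) applies $\Phi$ often enough that the monotone analysis saturates; since the space of possible tuples is finite, the fixed point is reached and contains $R_T$ for every $T\le 2^{p(n)}$. \Extract\ selects a tuple bearing the halting time stamp, inspects its state field, and emits \True\ precisely when that state is $q_a$; \Widget\ then converts this Boolean into the distinguished flow observable at $\ell$. Both directions follow from the invariant: if $M$ accepts, the accepting tuple lies in the saturated cache, so \True\ reaches the widget and flows to $\ell$; conversely, soundness of $k$CFA together with the invariant ensures the only way \True\ can reach $\ell$ is through a genuine accepting tuple, so $M$ must accept. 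I would close by noting that the construction is logspace-computable: each gadget---the Boolean circuits for $\delta_Q,\delta_{LR},\delta_\Sigma$ inherited from \autoref{sec:circuits}, the closure generator, the iterator, and the widget---has size polynomial in $n$ and $p(n)$ and can be laid out using counters of logarithmic width.

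The main obstacle I anticipate is the correctness invariant, and specifically ruling out \emph{spurious} tuples. The very cross-product approximation that gives the construction its exponential power is what threatens it: pairing incompatible tuples could, in principle, manufacture a configuration $M$ never enters and so trigger a false accept. The heart of the proof is therefore the verification that the time-stamp discipline and the \Null-valued catch-all rule are airtight---that every pair $\delta$ ever sees at the fixed point is a legitimate computation pair, a legitimate communication pair, or harmless garbage---so that $R_T$ never grows beyond the true configuration. Getting this bookkeeping exactly right, across the interaction of cache merging, the cross product, and the fixed-point iteration, is where the real work lies.
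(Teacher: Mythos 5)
Your proposal follows essentially the same route as the paper: the same tuple encoding of machine IDs split across exponentially many closures, the same padding trick to collapse them into one abstract cache entry, the same cross-product use of the nonlinear parameter $p$ in $\Phi$, and the same iterator/\Extract/\Widget read-off. The paper presents the construction without spelling out the correctness invariant, so your explicit induction on the time stamp and your attention to ruling out spurious tuples via the time-stamp discipline and the catch-all rule supply exactly the bookkeeping the paper leaves implicit.
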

\begin{theorem}
The control flow problem for $k$CFA is complete
for \exptime\ for any $k > 0$.
\end{theorem}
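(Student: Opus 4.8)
The plan is to establish \exptime-completeness by combining the two standard ingredients: membership and hardness. Membership is already in hand---the earlier lemma shows the control flow problem for $k$CFA is contained in \exptime\ by the cache-counting argument (the cache has $n^{k+1}$ entries, each holding at most $n^{1+kp}$ closures, and since the analysis is monotone a fixed point is reached after at most exponentially many updates). So the real work is the matching \exptime-hardness lower bound, which I would obtain directly from the preceding lemma, which supplies a logspace reduction from an arbitrary exponential-time Turing machine to the flow problem.

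For the hardness direction I would argue as follows. Recall \exptime\ $= \bigcup_{j>0}\,\mbox{\dtime}(2^{n^j})$. Fix any language $L \in \exptime$; then $L$ is decided by some deterministic machine $M$ running in time $2^{p(n)}$ with $p(n)=n^j$. The lemma produces, in logarithmic space, a closed labeled term $e$ with a distinguished label $\ell$ such that $\True$ flows into $\ell$ under $k$CFA iff $M$ accepts its input $x$. This is precisely a logspace many-one reduction from $L$ to the $k$CFA flow analysis problem. Since $L$ was an arbitrary \exptime\ language and logspace reductions suffice for \exptime-hardness, the flow problem is \exptime-hard; together with membership it is \exptime-complete. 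The case $k>1$ reuses the $k=1$ construction together with the padding device of \autoref{sec-kcfa-np-hard}: inserting $k-1$ extra identity-application contours absorbs the additional polyvariance, so the analysis of the padded term behaves exactly as $1$CFA on the core construction.

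The substance of the proof is therefore the construction underlying the lemma, and this is where I would spend the effort. The idea is to represent an exponential-size machine configuration not by a single closure (which has only polynomial size) but by an exponential family of closures, each a tuple $\langle T,S,H,C,b\rangle$ asserting that at time $T$ the machine is in state $S$ with head at $H$ and cell $C$ holding $b$; each tuple is coded as $\lambda w.w z_1 \cdots z_N$ with $N=O(p(n))$. The generator of \autoref{fig-kcfa-construction} forces, via nonlinear use of the bound variables together with the identity-function padding $(\lambda x.x)$, all $2^N$ bit-assignments into the \emph{same} abstract cache location---exactly the cross-product effect explained in \autoref{sec-toy}, where multiple occurrences of a variable range independently over all bound values. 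The binary transition function $\Phi$ of \autoref{fig-transition} is built from the linear Boolean circuitry of \autoref{sec:circuits}, duplicating its single argument with \Copy\ so that its two nonlinear occurrences realize a cross product over all pairs of ID-tuples; the time-stamp-driven transition, communication, and catch-all rules then compute the successor configuration piecemeal. Iterating $\Phi$ enough times (a fixed-point combinator, or the exponential composer $2^n$) drives the analysis to a fixed point containing every reachable ID, and the final \Extract/\Widget\ context converts ``$M$ reached an accepting ID'' into the single flow $\True \in \cache(\ell)$.

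The main obstacle I anticipate is proving the simulation is \emph{exact} rather than merely sound: I must show the approximation of $k$CFA computes precisely the set of genuine machine IDs and introduces no spurious accepting flow. The delicate points are (i) that the cross product induced by the two occurrences of the transition argument pairs each fresh tuple only with tuples of the correct time stamp, so that the catch-all rule sends all illegitimate pairings to \Null\ and they cannot contaminate the reachable-ID set; (ii) that the $(\lambda x.x)$ padding genuinely collapses the polyvariance for the chosen $k$, guaranteeing the exponentially many tuples share a contour and hence a single cache cell, forcing the analysis to enumerate them as a set; and (iii) that because $M$ is deterministic, the fixed point is reached after the true halting time, and the accepting tuple flows out iff $M$ actually accepts. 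Establishing this correspondence between the greatest-fixed-point acceptability relation of \autoref{fig-kcfa-declarative} and the concrete computation history of $M$---inclusion in both directions---is the technical heart of the argument.
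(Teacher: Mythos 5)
Your proposal is correct and follows essentially the same route as the paper: membership via the cache-counting lemma, and hardness via the logspace-constructible Turing-machine simulation lemma, whose construction (exponentially many tuple-closures per ID, the cross-product effect from the two nonlinear occurrences of $p$ in $\Phi$, identity-function padding to collapse contours, the $2^n$ iterator, and the \Extract/\Widget\ context) you reproduce faithfully, including the padding device for lifting $k=1$ to general $k>0$. The exactness concerns you flag at the end are legitimate and are precisely the points the paper's own exposition treats informally.
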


\section{Exact \kcfa\ is \ptime-complete}

At the heart of the \exptime-completeness result is the idea that the
{\em approximation} inherent in abstract interpretation is being
harnessed for computational power, quite apart from the power of {\em
exact} evaluation.  To get a good lower bound, this is necessary: it
turns out there is a dearth of computation power when $k$CFA
corresponds with evaluation, i.e.~when the analysis is exact.

As noted earlier, approximation arises from the truncation of contours
during analysis.  Consequently, if truncation never occurs, the
instrumented interpreter and the abstract interpreter produce
identical results for the given program.  But what can we say about
the complexity of these programs?  In other words, what kind of
computations can $k$CFA analyze exactly when $k$ is a constant,
independent of the program analyzed?  What is the intersection between
the abstract and concrete interpreter?

An answer to this question provides another point in the
characterization of the expressiveness of an analysis.  For 0CFA, the
answer is \ptime\ since the evaluation of linear terms is captured.
For $k$CFA, the answer remains the same.

For any fixed $k$, $k$CFA can only analyze polynomial time programs
exactly, since, in order for an analysis to be exact, there can only
one entry in each cache location, and there are only $n^{k+1}$
locations. But from this it is clear that only through the use of
approximation that a exponential time computation can be simulated,
but this computation has little to do with the actual running of the
program.  A program that runs for exponential time cannot be analyzed
exactly by $k$CFA for any constant $k$.

Contrast this with ML-typability, for example, where the evaluation of
programs that run for exponential time can be simulated via type
inference.

Note that if the contour is never truncated, every program point is
now approximated by at most one closure (rather than an exponential
number of closures).  The size of the cache is then bounded by a
polynomial in $n$; since the cache is computed monotonically, the
analysis and the natural related decision problem is constrained by
the size and use of the cache.

\begin{proposition}
Deciding the control flow problem for exact $k$CFA is complete for
\ptime.
\end{proposition}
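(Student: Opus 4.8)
The plan is to establish both a \ptime\ upper bound and a matching \ptime\ lower bound for the exact $k$CFA flow problem, where ``exact'' means that the abstract interpreter $\avf_k$ coincides with the instrumented interpreter $\mathcal{I}$ because no contour is ever truncated during analysis. The upper bound reuses the monotone-polynomial-cache argument already seen for 0CFA, and the lower bound reuses the circuit reduction of \autoref{sec:circuits}, the key new observation being that linear programs are analyzed exactly by $k$CFA for \emph{every} $k$, not merely for $k=0$.

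For containment, I would argue exactly as in the \ptime\ inclusion for 0CFA. When the analysis is exact, each location $\cache(\ell,\delta)$ and each $\aenv(x,\delta)$ holds at most one closure: since no truncation occurs, the uniqueness-of-cache-address induction of \autoref{section-instrumented} is restored, so every location is written at most once. For fixed $k$ there are only $n^{k+1}$ such locations, so the entire cache has polynomial size in $n$. The abstract evaluator computes this cache monotonically---entries are only added, never removed---so a fixed point is reached after polynomially many updates, each checkable in polynomial time against the acceptability clauses of \autoref{sec:ai}. Hence the analysis, and the induced decision problem, lies in \ptime.

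For hardness, I would reduce from the circuit value problem using the very same logspace compilation of circuits into linear $\lambda$-terms given in \autoref{sec:circuits}. The crucial point is that the encoding is linear, and on a linear term each abstraction $\lambda x.e$ is applied at most once during evaluation (as noted in \autoref{sec:linearity}), so each variable is bound at most once and each cache location receives at most one value. Consequently no merging can occur---even if a contour would be truncated, there is only a single value to record---so $k$CFA is automatically exact on linear programs for every $k \geq 0$, and by the $k$CFA analog of \autoref{cor:normal} the analysis coincides with evaluation. Since the compiled term normalizes to \True\ iff the circuit accepts, and the \Widget\ context converts this into a specific asked-about flow, an instance of CVP maps to an instance of the exact $k$CFA flow problem, yielding \ptime-hardness and, with containment, \ptime-completeness.

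The main obstacle is verifying that the correspondence between analysis and evaluation for linear terms---\autoref{thm:main} and \autoref{cor:normal}, stated there for 0CFA and simple closure analysis---genuinely carries over to exact $k$CFA with $k>0$. The subtlety is the interaction of linearity with contour truncation: I must confirm that the ``applied at most once'' property forces each $\cache$- and $\aenv$-location to carry a unique value regardless of how contours are truncated, so that the greatest-fixed-point acceptability relation of \autoref{sec:ai} collapses to the single value produced by the instrumented interpreter. Once this is in hand, the induction of \autoref{thm:main} replays essentially verbatim with contours threaded through the environment and cache, since the containment constraints behave identically when every location is singleton-valued.
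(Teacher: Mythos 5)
Your proposal is correct and follows essentially the same route as the paper: the upper bound comes from exactness forcing at most one closure per cache location, so the $n^{k+1}$-entry cache has polynomial size and is computed monotonically, and the lower bound comes from the linear circuit encoding of \autoref{sec:circuits}, which $k$CFA analyzes exactly for every $k$ so that analysis coincides with evaluation. The paper in fact gives only the informal version of this argument (it asserts rather than re-proves the $k$CFA analogue of \autoref{cor:normal} for linear terms), so the ``obstacle'' you flag at the end is a point where your write-up is, if anything, more careful than the original.
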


This proposition provides a characterization of the computational
complexity (or expressivity) of the language evaluated by the
instrumented evaluator $\mathcal{E}$ of section
\autoref{section-instrumented} as a function of the contour length.

It also provides an analytic understanding of the empirical
observation researchers have made: computing a more precise analysis
is often cheaper than performing a less precise one, which ``yields
coarser approximations, and thus induces more merging. More merging
leads to more propagation, which in turn leads to more reevaluation''
\cite{wright-jagannathan-toplas98}.  \citet{might-shivers-icfp06} make
a similar observation: ``imprecision reinforces itself during a flow
analysis through an ever-worsening feedback loop.''  This
ever-worsening feedback loop, in which we can make \False\
(spuriously) flow out of $\Implies\,x\,x$, is the critical ingredient
in our \exptime\ lower bound.

Finally, the asymptotic differential between the complexity of exact
and abstract interpretation shows that abstract interpretation is
strictly more expressive, for any fixed $k$.


\section{Discussions}


We observe an ``exponential jump'' between contour length and
complexity of the control flow decision problem for every
polynomial-length contour, including contours of constant length.
Once $k=n$ (contour length equals program size), an exponential-time
hardness result can be proved which is essentially a linear circuit
with an exponential iterator---very much like \citet{mairson-popl90}.
When the contours are exponential in program length, the decision
problem is doubly exponential, and so on.

The reason for this exponential jump is the cardinality of
environments in closures.  This, in fact, is the bottleneck for
control flow analysis---it is the reason that 0CFA (without closures)
is tractable, while 1CFA is not.  If $f(n)$ is the
contour length and $n$ is the program length, then
\begin{displaymath}
|\CEnv| = |\Var \rightarrow \Delta^{\leq f(n)}| = (n^{f(n)})^n = 2^{f(n)n \lg n}
\end{displaymath}
This cardinality of environments effectively determines the size of
the universe of values for the abstract interpretation realized by CFA.

When $k$ is a constant, one might ask why the inherent complexity is
exponential time, and not more---especially since one can iterate (in
an untyped world) with the $\mathbf{Y}$ combinator.  Exponential time
is the ``limit'' because with a polynomial-length tuple (as
constrained by a logspace reduction), you can only code an exponential
number of closures.

The idea behind $k$CFA is that the precision of could {\em dialed up},
but there are essentially two settings to the $k$CFA hierarchy: {\em
  high} ($k > 0$, \exptime) and {\em low} $(k = 0)$.  We can see, from
a computational complexity perspective, that 0CFA is strictly less
expressive than $k$CFA.  In turn, $k$CFA is strictly less expressive
than, for example, Mossin's flow analysis \citeyearpar{mossin-sas97}.
Mossin's analysis is a stronger analysis in the sense that it is exact
for a larger class of programs than 0CFA or $k$CFA---it exact not only
for linear terms, but for all simply-typed terms.  In other words, the
flow analysis of simply-typed programs is synonymous with running the
program, and hence non-elementary.  This kind of expressivity is also
found in Burn-Hankin-Abramsky-style strictness analysis
\citeyearpar{burn-hankin-abramsky}.  But there is a considerable gap
between $k$CFA and these more expressive analyses.  What is in between
and how can we build a real {\em hierarchy} of static analyses that
occupy positions within this gap?

This argues that the relationship between dial level $N$ and $N+1$
should be exact. This is the case with say simple-typing and
ML-typing.  (ML = simple + let reduction).  There is no analogous
relationship known between $k$ and $k+1$CFA.  A major computational
expense in $k$CFA is the approximation engendering further
approximation and re-evaluation.  Perhaps by staging analysis into
polyvariance and approximation phases, the feedback loop of spurious
flows can be avoided. 

If you had an analysis that did some kind of exact, bounded,
evaluation of the program and then analyzed the residual with 0CFA,
you may have a far more usable analysis than with the $k$CFA
hierarchy.

The precision of $k$CFA is highly sensitive to syntactic structure.
Simple program refactorings such as $\eta$-expansion have drastic
effects on the results of $k$CFA and can easily undermine the added
work of a more and more precise analysis.  Indeed, we utilize these
simple refactorings to undermine the added precision of $k$CFA to
generalize the hardness results from the case of 1CFA to all $k > 0$
CFA.  But an analysis that was robust in the face of these
refactorings could undermine these lower bounds.

In general, techniques that lead to increased precision will take
computational power {\em away} from our lower bound constructions.
For instance, it is not clear what could be said about lower bounds on
the complexity of a variant of $k$CFA that employed abstract garbage
collection \cite{might-shivers-icfp06}, which allows for the safe
removal of values from the cache during computation.  It is critical
in the lower bound construction that what goes into the cache, stays
in the cache.

L\'evy's notion of labeled reduction \citeyearpar{levy-phd, levy-80}
provides a richer notion of ``instrumented evaluation'' coupled with a
richer theory of exact flow analysis, namely the geometry of
interaction \cite{girard-goi89, gonthier-abadi-levy-popl92}.  With the
proper notion of abstraction and simulated reduction, we should be
able to design more powerful flow analyses, filling out the hierarchy
from 0CFA up to the expressivity of Mossin's analysis in the limit.

\section{Conclusions}

Empirically observed increases in costs can be understood analytically
as {\em inherent in the approximation problem being solved}.

We have given an exact characterization of the $k$CFA approximation
problem.  The \exptime\ lower bound validates empirical observations
and shows that there is no tractable algorithm for $k$CFA.

The proof relies on previous insights about linearity, static
analysis, and normalization (namely, when a term is linear, static
analysis and normalization are synonymous); coupled with new insights
about using nonlinearity to realize the full computational power of
approximate, or abstract, interpretation.

Shivers wrote in his best of PLDI retrospective
\citeyearpar{shivers-sigplan04},
\begin{quotation}
Despite all this work on formalising CFA and speeding it up, I have
been disappointed in the dearth of work extending its {\em
power}.
\end{quotation}
This work has shown that work spent on speeding up $k$CFA is an
exercise in futility; there is no getting around the exponential
bottleneck of $k$CFA.  The one-word description of the bottleneck is {\em closures},
which do not exist in 0CFA, because free variables in a closure would necessarily map to $\epsilon$,
and hence the environments are useless.

This detailed accounting of the ingredients that combine to make
$k$CFA hard, when $k > 0$, should provide guidance in designing new
abstractions that avoid computationally expensive components of
analysis.
A lesson learned has been that {\em closures}, as they exist when
$k>0$, result in an exponential value space that can be harnessed for
the \exptime\ lower-bound construction.


\chapter{Related Work}

This dissertation draws upon several large veins of research.  At the
highest level, this includes complexity, semantics, logic, and program
analysis.  This chapter surveys related work to sketch applications
and draw parallels with existing work.

\section{Monovariant Flow Analysis}

In the setting of first-order programming languages, \citet{reps-ai96}
gives a complexity investigation of program analyses and shows
interprocedural slicing to be complete for \ptime\ and that obtaining
``meet-over-all-valid-paths'' solutions of distributive data-flow
analysis problems \cite{hecht} is \ptime-hard in general, and
\ptime-complete when there are only a finite number of data-flow
facts.  A circuit-value construction by interprocedural data-flow
analysis is given using Boolean circuitry encoded as call graph
gadgets, similar in spirit to our constructions in
\autoref{chapter-0cfa}.

In the setting of higher-order programming languages,
\citet{melski-reps00} give a complexity investigation of 0CFA-like,
inclusion-based monovariant flow analysis for a functional language
with pattern matching.
The analysis takes the form of a constraint satisfaction problem and
this satisfaction problem is shown to be complete for \ptime.
See  \autoref{sec-context-free-lang-reach} for further discussion.

The impact of pattern matching on analysis complexity is further
examined by \citet{heintze-mcallester-icfp97}, which shows how deep
pattern matching affects monovariant analysis, making it complete for
\exptime.

\section{Linearity and Static Analysis}


\citet{jagannathan-etal-popl98} observe that flow analysis, which is a
{\em may} analysis, can be adapted to answer {\em must} analysis
questions by incorporating a ``per-program-point {\em variable
  cardinality map}, which indicates whether all reachable environments
binding a variable $x$ hold the same value.  If so, $x$ is marked
single at that point; otherwise $x$ is marked multiple.''  The
resulting must-alias information facilities program optimization such
as lightweight closure conversion \cite{steckler-wand-toplas97}.  This
must analysis is a simple instance of tracking linearity information
in order to increase the precision of the analysis.
\citet{might-shivers-icfp06} use a similar approach of {\em abstract
  counting}, which distinguish singleton and non-singleton flow sets,
to improve flow analysis precision.

Something similar can be observed in 0CFA without cardinality maps;
singleton flow sets $\cache(\ell) = \{\lambda x.e\}$, which are
interpreted as ``the expression labelled $\ell$ {\em may} evaluate to
one of $\{ \lambda x.e \}$,'' convey {\em must} information.  The
expression labelled $\ell$ either diverges or evaluates to $\lambda
x.e$.  When $\lambda x.e$ is linearly closed---the variables map to
singleton sets containing linear closures---then the run-time value
produced by the expression labelled $\ell$ can be determined
completely at analysis time.  The idea of taking this special case of
{\em must} analysis within a {\em may} analysis to its logical
conclusion is the basis of \autoref{chapter-0cfa}.

\citet{damian-danvy-jfp03} have investigated the impact of linear
$\beta$-reduction on the result of flow analysis and show how
leastness is preserved.  The result is used to show that leastness is
preserved through CPS and administrative reductions, which are linear.


An old, but key, observation about the type inference problem for
simply typed $\lambda$-terms is that, when the term is linear (every
bound variable occurs exactly once), the most general type and normal
form are isomorphic
\cite{hindley-tcs89,hirokawa-tacs91,henglein-mairson-popl91,mairson-jfp04}.%
\footnote{The seed of inspiration for this work came from a close
  study of \citet{mairson-jfp04} in the Spring of 2005 for a seminar
  presentation given in a graduate course on advanced topics in
  complexity theory at the University of Vermont.}

The observation translates to flow analysis, as shown in
\autoref{chapter-0cfa}, but in a typed setting, it also scales to
richer systems.
The insight leads to an elegant reproof of the \exptime-hardness of
ML-type inference result from \citet{mairson-popl90}
\cite{henglein-unpub90}.
It was used to prove novel lower bounds on type inference for System
$F_\omega$ \cite{henglein-mairson-popl91} and rank-bound intersection
type inference \cite{neergaard-mairson-icfp04}.
See \autoref{sec-quantifier-elim} for further discussion.

\section{Context-Free-Language Reachability}
\label{sec-context-free-lang-reach}

\citet{melski-reps00} show the interconvertibility between a number of
set-constraint problems and the context-free-language (CFL)
reachability problem, which is known to be complete for \ptime\
\cite{ullman-allen-sfcs86}.  \citet{heintze-lfp94} develops a
set-based approach to flow analysis for a simple untyped functional
language with functions, applications, pattern-matching, and
recursion.  The analysis works by making a pass over the program,
generating set constraints, which can then be solved to compute flow
analysis results.  Following Melski and Reps, we refer to this
constraint system as ML set-constraints.  For the subset of the
language considered in this dissertation, solving these constraints
computes a monovariant flow analysis that coincides with 0CFA.

In addition to the many set-constraint problems considered, which have
applications to static analysis of first-order programming languages,
\citet[section 5]{melski-reps00} also investigate the problem of
solving the ML set-constraints used by Heintze.  They show this class
of set-constraint problems can be solved in cubic time with respect to
the size of the input constraints.  Since \citet{heintze-lfp94} gave a
$O(n^3)$ algorithm for solving these constraints, Melski and Reps'
result demonstrates the conversion to CFL-reachability preserves
cubic-solvability, while allowing CFL-reachability formulations of
static analyses, such as program slicing and shape analysis, to be
brought to bear on higher-order languages, where previously they had
only been applied in a first-order setting.

After showing ML set-constraints can be solved using CFL-reachability,
\citet[section 6]{melski-reps00} also prove the converse holds:
CFL-reachability problems can be solved by reduction to ML
set-constraint problems while preserving the worse-case asymptotic
complexity.  By the known \ptime-hardness of CFL-reachability, this
implies ML set-constraint satisfaction is \ptime-complete.  It does
not follow, however, that 0CFA is also \ptime-complete.

It is worth noting that Melski and Reps are concerned with constraint
satisfaction, and not directly with flow analysis---the two are
intimately related, but the distinction is important.  It follows as a
corollary that since ML set-constraints can be solved, through a
reduction to CFL-reachability, flow analysis can be performed in cubic
time.  \citet[page 314]{heintze-lfp94} observes that the size of the
set-constraint problem generated by the initial pass of the program is
linear in the size of the program being analyzed.  Therefore it is
straightforward to derive from the ML set-constraint to
CFL-reachability reduction the (known) inclusion of 0CFA in \ptime.

In the other direction, it is not clear that it follows from the
\ptime-hardness of ML set-constraint satisfaction that flow analysis
of Heintze's subject language is \ptime-hard.  Melski and Reps use the
constraint language directly in their encoding of CFL-reachability.
What remains to be seen is whether there are programs which could be
constructed that would induce these constraints.  Moreover, their
reduction relies soley on the ``case'' constraints of Heintze, which
are set constraints induced by pattern matching expressions in the
source language.

If the source language lacks pattern matching, the Boolean circuit
machinery of Melski and Reps can no longer be constructed since no
expressions induce the needed ``case'' constraints.  For this
language, the \ptime-hardness of constraint satisfaction and 0CFA does
not follow from the results of Melski and Reps.

This reiterates the importance of Reps' own observation that analysis
problems should be formulated in ``trimmed-down form,'' which both
leads to a wider applicability of the lower bounds and ``allows one to
gain greater insight into exactly what aspects of an [\dots] analysis
problem introduce what computational limitations on algorithms for
these problems,'' \cite[section 2]{reps-ai96}.

By considering only the core subset of every higher-order programming
language and relying on the specification of analysis, rather than its
implementation technology, the 0CFA \ptime-completeness result implies
as an immediate corollary the \ptime-completeness of the ML
set-constraint problem considered by Melski and Reps.  Moreover, as we
have seen, our proof technique of using linearity to subvert
approximation is broadly applicable to further analysis
approximations, whereas CFL-reachability reductions must be replayed
{\em mutatis mutandis}.

\section{2NPDA and the Cubic Bottleneck}
\label{sec-2npda}

The class \twonpda\ contains all languages that are recognizable by a
two-way non-deterministic push-down automaton.\footnote{This section
  is derived from material in \citet{midtgaard-vanhorn-tr09}.}
The familiar PDAs found in undergraduate textbooks~\citep{martin-97},
both deterministic and non-deterministic, are one-way: consuming their
input from left-to-right.
In contrast, two-way NPDAs accept their input on a read-only input
tape marked with special begin and end markers, on which they can move
the read-head forwards, backwards, or not at all.

Over a decade ago, \citet{heintze-mcallester-lics97} proved deciding a
monovariant flow analysis problem to be at least as hard as \twonpda,
and argued this provided evidence the ``cubic bottleneck'' of flow
analysis was unlikely to be overcome since the best known algorithm
for \twonpda\ was cubic and had not been improved since its
formulation by \citet{aho-hopcroft-ullman-ic68}.
This statement was made by several other 
papers~\citep{neal-89,heintze-mcallester-lics97,heintze-mcallester-pldi97,
  melski-reps00,mcallester-jacm02,vanhorn-mairson-sas08}.
Yet collectively, this is simply an oversight in the history of
events; \citet{rytter-ic85} improved the cubic bound by a logarithmic
factor.

Since then, Rytter's technique has been used in various contexts: in
diameter verification, in Boolean matrix multiplication, and for the
all pairs shortest paths
problem~\citep{basch-etal-95,zwick-alg06,chan-stoc07}, as
well as for reachability in recursive state
machines~\citep{chaudhuri-popl08},
and for maximum node-weighted $k$-clique~\citep{vassilevska-ipl09} to name
a few.
In particular, \citet{chaudhuri-popl08} recently used Rytter's
techniques to formulate a subcubic algorithm for the related problem
of context-free language (CFL) reachability.
Perhaps unknown to most, indirectly this constitutes the first
subcubic inclusion-based flow analysis algorithm when combined with a
reduction due to \citet{melski-reps00}.

The logarithmic improvement can be carried over to the flow analysis
problem directly, by applying the same known set compression
techniques \citet{rytter-ic85} applies to improve deciding \twonpda.
Moreover, refined analyses similar to
\citet{heintze-mcallester-icfp97} that incorporate notions of
reachability to improve precision remain subcubic.  See
\citet{midtgaard-vanhorn-tr09} for details.

0CFA is complete for both \twonpda\ \citep{heintze-mcallester-lics97} and
\ptime\ (\autoref{chapter-0cfa}).  Yet, it is not clear what relation
these class have to each other.

The \twonpda\ inclusion proof of Heintze and McAllester is sensitive to
representation choices and problem formulations. They use an encoding
of programs that requires a non-standard bit string labelling scheme
in which identical subterms have the same labels. The authors remark
that without this labelling scheme, the problem ``appears not to be in
\twonpda.'' 

Moreover, the notions of reduction employed in the definitions of
\twonpda-hardness and \ptime-hardness rely on different computational
models.  For a problem to be \twonpda-hard, all problems in the class
must be reducible to it in $O(nR(\log n))$ time on a RAM, where $R$ is
a polynomial. Whereas for a problem to be \ptime-hard, all problems in
the class must be reducible to it using a $O(\log n)$ space work-tape
on a Turing machine.

\section{\kcfa}
\label{sec-rel-kcfa}

Our coding of Turing machines is descended from work on Datalog
(Prolog with variables, but without constants or function symbols), a
programming language that was of considerable interest to researchers
in database theory during the 1980s; see \citet{hkmv,gmsv}.

In $k$CFA and abstract interpretation more generally, an expression
can evaluate to a set of values from a finite universe, clearly motivating
the idiom of programming with sets.  Relational database queries
take as input a finite set of tuples, and compute new tuples from
them; since the universe of tuples is finite and the computation is
monotone, a fixed-point is reached in a finite number of iterations.  The
machine simulation here follows that framework very closely.  Even the
idea of splitting a machine configuration among many tuples has its
ancestor in \citet{hkmv}, where a ternary ${\tt cons}(A,L,R)$ is used
to simulate a {\tt cons}-cell at memory address $A$, with pointers
$L,R$.  It needs emphasis that the computing with sets described in
this paper has little to do with normalization, and everything to do
with the approximation inherent in the abstract interpretation.

Although $k$CFA and ML-type inference are two static analyses complete
for \exptime\ \cite{mairson-popl90}, the proofs of these respective
theorems is fundamentally different.  The ML proof relies on type
inference simulating exact normalization (analogous to the
\ptime-completeness proof for 0CFA), hence subverting the approximation
of the analysis.  In contrast, the $k$CFA proof harnesses the
approximation that results from nonlinearity.

\section{Class Analysis}

Flow analysis of functional languages is complicated by the fact that
{\em computations are expressible values}.
This makes basic questions about control flow undecidable in the
general case.
But the same is true in object-oriented programs---computations may be
package up as values, passed as arguments, stored in data-structures,
etc.---and so program analyses in object-oriented settings often deal
with the same issues as flow analysis. 
A close analogue of flow analysis is {\em class analysis}.

Expressions in object-oriented languages may have a declared class (or
type) but, at run-time, they can evaluate to objects of every subclass
of the class.
Class analysis computes the actual set of classes that an expression
can have at run-time \cite{johnson-etal-oopsla88,
  chambers-ungar-pldi90, palsberg-schwartzbach-oopsla91,
  bacon-sweeney-oopsla96}.
Class analysis is sometimes called receiver class analysis, type
analysis, or concrete type inference; it informs static method
resolution, inlining, and other program optimizations.

An object-oriented language is higher-order in the same way as a
language with first-class functions and exactly the same circularity
noted by Shivers occurs in the class analysis of an object-oriented
language.

\citet{grove-chambers-toplas01}:
\begin{quote}
  In object-oriented languages, the method invoked by a dynamically
  dispatched message send depends on the class of the object receiving
  the message; in languages with function values, the procedure
  invoked by the application of a computed function value is
  determined by the function value itself. In general, determining the
  flow of values needed to build a useful call graph requires an
  interprocedural data and control flow analysis of the program. But
  interprocedural analysis in turn requires that a call graph be built
  prior to the analysis being performed.
\end{quote}

Ten years earlier, \citet[page 6]{shivers-phd}%
\footnote{It is a testament to Shivers' power as a writer that his
  original story has been told over and over again in so many places,
  usually with half the style.} had written essentially the same:
\begin{quote}
  So, if we wish to have a control-flow graph for a piece of Scheme
  code, we need to answer the following question: for every procedure
  call in the program, what are the possible lambda expressions that
  call could be a jump to? But this is a flow analysis question! So
  with regard to flow analysis in an HOL, we are faced with the
  following unfortunate situation:
\begin{itemize}
\item In order to do flow analysis, we need a control-flow graph. 
\item In order to determine control-flow graphs, we need to do flow analysis.
\end{itemize}
\end{quote}

Class analysis is often presented using the terminology of type
inference, however these type systems typically more closely resemble
flow analysis: types are finite sets of classes appearing
syntactically in the program and subtyping is interpreted as set
inclusion.

In other words, objects are treated much like functions in the flow
analysis of a functional language---typically both are approximated by
a set of definition sites, i.e. an object is approximated by a set of
class names that appear in the program; a function is approximated by
a set of $\lambda$ occurrences that appear in the program.  In an
object-oriented program, we may ask of a subexpression, what classes
may the subexpression evaluate to?  In a functional language we may
ask, what $\lambda$ terms may this expression evaluate to?  Notice
both are general questions that analysis must answer in a higher order
setting if you want to know about control flow.  To know where control
may transfer to from $(f\;x)$ we have to know what $f$ may be.  To
know where control may transfer to from \verb|f.apply(x)| we have to
know what \verb|f| may be.  In both cases, if we approximate functions
by sets of $\lambda$s and objects by sets of class names, we may
determine a set of possible places in code where control may transfer,
but we will not know about the {\em environment} of this code,
i.e. the environment component of a closure or the record component of
an object.

\citet{spoto-jensen-toplas03} give a reformulation of several class
analyses, including that of \citet{palsberg-schwartzbach-oopsla91,
  bacon-sweeney-oopsla96, diwan-etal-oopsla96}, using abstract
interpretation.

\citet{defouw-etal-popl98} presents a number of variations on the
theme of monovariant class analysis.  They develop a framework that
can be instantiated to obtain inclusion, equality, and optimistic
based class analyses with close analogies to 0CFA, simple closure
analysis, and rapid type analysis \cite{bacon-sweeney-oopsla96},
respectively.
Each of these instantiations enjoy the same asymptotic running times
as their functional language counterparts; cubic, near linear, and
linear, respectively.

Although some papers give upper bounds for the algorithms they
present, there are very few lower bound results in the
literature.\footnote{I was able to find zero papers that deal directly
  with lower bounds on class analysis complexity.}

Class analysis is closely related to {\em points-to} analysis in
object-oriented languages.
``{\em Points-to analysis} is a fundamental static analysis used by
optimizing compilers and software engineering tools to determine the
set of objects whose addresses may be stored in reference variables
and reference fields of objects,'' \cite{milanova-etal-tosem05}.
When a points-to analysis is {\em flow-sensitive}---``analyses take
into account the flow of control between program points inside a
method, and compute separate solutions for these points,''
\cite{milanova-etal-tosem05}---the analysis necessarily involves some
kind of class analysis.


In object-oriented languages, context-sensitive is typically
distinguished as being object-sensitive \cite{milanova-etal-tosem05},
call-site sensitive \cite{grove-chambers-toplas01}, or partially flow
sensitivity \cite{rinetzky-etal-toplas08}.

\citet{grove-chambers-toplas01} provide a framework for a functional
and object-oriented hybrid language that can be instantiated to obtain
a $k$CFA analysis and an object-oriented analogue called {\em
  $k$-$l$-CFA}.  There is a discussion and references in Section
9.1.
In this discussion, \citet{grove-chambers-toplas01} cite
\citet{oxhoj-etal-ecoop92} as giving ``1-CFA extension to Palsberg and
Schwartzbach's algorithm,'' although the paper develops the analysis
as a type inference problem.
Grove and Chambers also cite \citet{vitek-etal-cc92} as one of several
``adaptations of $k$CFA to object-oriented programs,'' and although
this paper actually has analogies to $k$CFA in an object-oriented
setting (they give a call-string approach to call graph context
sensitivity in section 7), it seems to be developed completely
independently of Shivers' $k$CFA work or any functional flow analysis
work.  


\begin{figure}
{\small
\begin{verbatim}
new Fun<Fun<B,List<B>>,List<B>>() {
 public List<B> apply(Fun<B,List<B>> f1) {
   f1.apply(true);
   return f1.apply(false);
 }
}.apply(new Fun<B,List<B>>() {
  public List<B> apply(final B x1) {
    return
      new Fun<Fun<B,List<B>>,List<B>>() {
       public List<B> apply(Fun<B,List<B>> f2) {
         f2.apply(true);
         return f2.apply(false);
       }
      }.apply(new Fun<B,List<B>>() {
        public List<B> apply(final B x2) {
          return
            ...
            new Fun<Fun<B,List<B>>,List<B>>() {
             public List<B> apply(Fun<B,List<B>> fn) {
               fn.apply(true);
               return fn.apply(false);
             }
            }.apply(new Fun<B,List<B>>() {
              public List<B> apply(final B xn) {
                return
                  new List<B>{x1,x2,...xn};}}
\end{verbatim}}
  \caption{Translation of $k$CFA \exptime-construction into an
    object-oriented language.}
  \label{fig-kcfa-java-construction}
\end{figure}

The construction of \autoref{fig-kcfa-construction} can be translated
in an object-oriented language such as Java, as given in
\autoref{fig-kcfa-java-construction}.\footnote{This translation is
  Java except for the made up list constructor and some abbreviation
  in type names for brevity, i.e.~{\tt B} is shorthand for {\tt
    Boolean}.}  Functions are simulated as objects with an apply
method.  The crucial subterm in \autoref{fig-kcfa-java-construction}
is the construction of the list \verb|{x1,x2,..xn}|, where
\verb|x|$_i$ occur free with the context of the innermost ``lambda''
term, \verb|new Fun() {...}|. To be truly faithful to the original
construction, lists would be Church-encoded, and thus represented with
a function of one argument, which is applied to \verb|x1| through
\verb|xn|.  An analysis with a similar context abstraction to 1CFA
will approximate the term representing the list \verb|x1,x2,...,xn|
with an abstract object that includes 1 bit of context information for
each {\em instance variable}, and thus there would be $2^n$ values
flowing from this program point, one for each mapping \verb|x|$_i$ to
the calling context in which it was bound to either true or false for
all possible combinations.  \citet{grove-chambers-toplas01} develop a
framework for call-graph construction which can be instantiated in the
style of 1CFA and the construction above should be adaptable to show
this instantiation is \exptime-hard.



A related question is whether the insights about linearity can be
carried over to the setting of pointer analysis in a first-order
language to obtain simple proofs of lower bounds.  If so, is it
possible higher-order constructions can be transformed systematically
to obtain first-order constructions?

Type hierarchy analysis is a kind of class analysis particularly
relevant to the discussion in \autoref{sec:approach} and the broader
applicability of the approach to proving lower bounds employed in
\autoref{chapter-0cfa}.  Type hierarchy analysis is an analysis of
statically typed object-oriented languages that bounds the set of
procedures a method invocation may call by examining the type
hierarchy declarations for method overrides.
``Type hierarchy analysis does not examine what the program actually
does, just its type and method declarations,''
\cite{diwan-etal-oopsla96}.
It seems unlikely that the technique of \autoref{sec:approach} can be
applied to prove lower bounds about this analysis since it has nothing
to do with approximating evaluation.

\section{Pointer Analysis}
\label{sec-pointer-analysis}

Just as flow analysis plays a fundamental role in the analysis of
higher-order functional programs, {\em pointer analysis}\footnote{Also
  known as {\em alias} and {\em points-to} analysis.} plays a
fundamental role in imperative languages with pointers
\cite{landi-phd92} and object-oriented languages, and informs later
program analyses such as live variables, available expressions, and
constant propagation.  Moreover, flow and alias analysis variants are
often developed along the same axes and have natural analogues with
each other.  

For example, Henglein's \citeyearpar{henglein92d} simple closure
analysis and Steensgaard's \citeyearpar{steensgaard-popl96} points-to
analysis are natural analogues.  Both operate in near linear time by
relying on equality-based (rather than inclusion-based) set
constraints, which can be implemented using a union-find
data-structure.  Steensgaard algorithm ``is inspired by Henglein's
\citeyearpar{henglein-fplca91} binding time analysis by type
inference,'' which also forms the conceptual basis for
\citet{henglein92d}.  Palsberg's \citeyearpar{palsberg-toplas95} and
Heintze's \citeyearpar{heintze-lfp94} constraint-based flow analysis
and Andersen's \citeyearpar{andersen-phd94} pointer analysis are
similarly analogous and bear a strong resemblance in their use of
subset constraints.

To get a full sense of the correspondence between pointer analysis and
flow analysis, read their respective surveys in parallel
\cite{hind-paste01,midtgaard-07}.  These comprise major, mostly
independent, lines of research.  Given the numerous analogies, it is
natural to wonder what the pointer analysis parallels are to the
results presented in this dissertation.  The landscape of the pointer
analysis literature is much like that of flow analysis; there are
hundreds of papers; similar, over-loaded, and abused terminology is
frequently used; it concerns a huge variety of tools, frameworks,
notations, proof techniques, implementation techniques, etc.  Without
delving into too much detail, we recall some of the fundamental
concepts of pointer analysis, cite relevant results, and try to more
fully develop the analogies between flow analysis and pointer
analysis.

A pointer analysis attempts to statically determine the possible
run-time values of a pointer.  Given a program and two variables $p$
and $q$, points-to analysis determines if $p$ can point to $q$
\cite{chakaravarthy-popl03}.  It is clear that in general, like all
interesting properties of programs, it is not decidable if $p$ can
point $q$.  A traditional assumption in this community is that all
paths in the program are executable.  However, even under this
conservative assumption, the problem is undecidable.  The history of
pointer analysis can be understood largely in terms of the trade-offs
between complexity and precision.


Analyses are characterized along several dimensions
\cite{hind-paste01}, but of particular relevance are those of:
\begin{itemize}
\item {\em Equality-based}: assignment is treated as an undirected
  flow of values.
\item {\em Subset-based}: assignment is treated as a directed flow of values.

\item {\em Flow sensitivity}

  A points-to analysis is {\em flow-sensitive} analysis if it is given
  the control flow graph for the analyzed program.  The control flow
  graphs informs the paths considered when determining the points-to
  relation.  A {\em flow-insensitive} analysis is not given the
  control flow graph and it is assumed statements can be executed in
  any order.  See also section 4.4 of \citet{hind-paste01} and section
  2.3 of \citet{rinetzky-etal-toplas08}.

\item {\em Context sensitivity}

calling context is considered when
  analyzing a function so that calls return to their caller.
See also section 4.4 of \cite{hind-paste01}.

\citet{bravenboer-smaragdakis-oopsla09} remark:
\begin{quote}
  In full context-sensitive pointer analysis, there is an ongoing
  search for context abstractions that provide precise pointer
  information, and do not cause massive redundant
  computation.\footnote{That search has been reflected in the
    functional community as well, see for example, \citet{shivers-phd,
      jagannathan-weeks-popl95, banerjee-icfp97, faxen-lomaps97,
      nielson-nielson-popl97, sereni-icfp07, ashley-dybvig-toplas98,
      wright-jagannathan-toplas98, might-shivers-popl06, might-phd}.}
\end{quote}
\end{itemize}

The complexity of pointer analysis has been deeply studied
\cite{meyers-popl81, landi-ryder-popl91, landi-phd92, landi-loplas92,
  choi-etal-popl93,
  ramalingam-toplas94, 
  horwitz-toplas97, muth-debray-popl00, chaterjee-etal-tse01,
  chakaravarthy-horwitz-ai02, chakaravarthy-popl03,
  rinetzky-etal-toplas08}.

Flow sensitive points-to analysis with dynamic memory is not decidable
\cite{landi-loplas92,ramalingam-toplas94,chakaravarthy-popl03}.  Flow
sensitive points-to analysis without dynamic memory is \pspace-hard
\cite{landi-phd92,muth-debray-popl00}, even when pointers are
well-typed and restricted to only two levels of dereferencing
\cite{chakaravarthy-popl03}.  
Context-sensitive pointer analysis can
be done efficiently in practice
\cite{emami-etal-pldi94,wilson-lam-pldi95}.  
Flow and context-sensitive points-to analysis for Java can be
efficient and practical even for large programs
\cite{milanova-etal-tosem05}.  

See \citet{muth-debray-popl00,chakaravarthy-popl03} for succinct
overview of complexity results and open problems.

\section{Logic Programming}

\citet{mcallester-jacm02} argues ``bottom-up logic program
presentations are clearer and simpler to analyze, for both correctness
and {\em complexity}'' and provides theorems for characterizing their
run-time.  McAllester argues bottom-up logic programming is especially
appropriate for static analysis algorithms.  The paper gives a
bottom-up logic presentation of evaluation (Fig. 4) and flow analysis
(Fig 5.) for the $\lambda$-calculus with pairing and uses the run-time
theorem to derive a cubic upper bound for the analysis.

Recent work by \citet{bravenboer-smaragdakis-oopsla09} demonstrates
how Datalog can be used to specify and efficiently implement pointer
analysis.  
By the \ptime-completeness of Datalog, any analysis that can be
specified is included in \ptime.

This bears a connection to the implicit computational complexity
program, which has sought to develop syntactic means of developing
programming languages that capture some complexity class
\cite{hofmann-hab98,leivant-popl93,hofmann-ic03,kristiansen-tcs04}.
Although this community has focused on general purpose programming
languages---with only limited success in producing usable systems---it
seems that restricting the domain of interest to program analyzers may
be a fruitful line of work to investigate.

The \exptime\ construction of \autoref{sec-kcfa-exptime-hard} has a
conceptual basis in Datalog complexity research \cite{hkmv,gmsv}.
See \autoref{sec-rel-kcfa} for a discussion.

\section{Termination Analysis}

Termination analysis of higher-order programs \cite{jones-bohr-lmcs08,
  sereni-jones-aplas05, giesl-etal-rta06, sereni-icfp07} is inherently
tied to some underlying flow analysis.

Recent work by Sereni and Jones on the termination analysis of
higher-order languages has relied on an initial control flow analysis
of a program, the result of which becomes input to the termination
analyzer \cite{sereni-jones-aplas05,sereni-icfp07}.  Once a call-graph
is constructed, the so-called ``size-change'' principle\footnote{The
  size-change principle has enjoyed a complexity investigation in its
  own right \cite{lee-etal-popl01, ben-amram-etal-toplas07}.} can be
used to show that there is no infinite path of decreasing size through
through the program's control graph, and therefore the program
eventually produces an answer.  This work has noted the inadequacies
of 0CFA for producing precise enough graphs for proving most
interesting programs terminating.  Motivated by more powerful
termination analyses, these researchers have designed more powerful
(i.e., more precise) control flow analyses, dubbed $k$-limited CFA.
These analyses are parametrized by a fixed bound on the depth of
environments, like Shivers' $k$CFA. So for example, in 1-limited CFA,
each variable is mapped to the program point in which it is bound, but
no information is retained about this value's environment.  But unlike
$k$CFA, this ``limited'' analysis is not polyvariant
(context-sensitive) with respect to the most recent $k$ calling
contexts.

A lesson of our investigation into the complexity of $k$CFA is that it
is {\em not} the polyvariance that makes the analysis difficult to
compute, but rather the environments.  Sereni notes that the
$k$-limited CFA hierarchy ``present[s] different characteristics, in
particular in the aspects of precision and complexity''
\cite{sereni-icfp07}, however no complexity characterization is given.

\section{Type Inference and Quantifier Elimination}
\label{sec-quantifier-elim}

Earlier work on the complexity of compile-time type inference is a
precursor of the research insights described here, and naturally so,
since type inference is a kind of static analysis
\cite{mairson-popl90,henglein-unpub90,henglein-mairson-popl91,mairson-jfp04}.
The decidability of type inference depends on the making of
approximations, necessarily rejecting programs without type errors; in
simply-typed $\lambda$-calculus, for instance, all occurrences of a
variable must have the same type.  (The same is, in effect, also true
for ML, modulo the finite development implicit in {\tt let}-bindings.)
The type constraints on these multiple occurrences are solved by
first-order unification.

As a consequence, we can understand the inherent complexity of type
inference by analyzing the expressive power of {\em linear} terms,
where no such constraints exist, since linear terms are always
simply-typable.  In these cases, type inference is synonymous with
normalization.\footnote{An aberrant case of this phenomenon is
  examined by \citet{neergaard-mairson-icfp04}, which analyzed a type
  system where normalization and type inference are synonymous in {\em
    every} case.  The tractability of type inference thus implied a
  certain inexpressiveness of the language.}  This observation
motivates the analysis of type inference described by
\citet{mairson-popl90, mairson-jfp04}.

Compared to flow analysis, type reconstruction has enjoyed a much more
thorough complexity analysis.


A key observation about the type inference problem for simply typed
$\lambda$-terms is that, when the term is linear (every bound variable
occurs exactly once), the most general type and normal form are
isomorphic
\cite{hindley-tcs89,hirokawa-tacs91,henglein-mairson-popl91,mairson-jfp04}. So
given a linear term in normal form, we can construct its most general
type (no surprise there), but conversely, when given a most general
type, we can construct the normal form of all terms with that type.

This insight becomes the key ingredient in proving the lower bound
complexity of simple-type inference---when the program is linear,
static analysis is effectively ``running'' the program.  Lower bounds,
then, can be obtained by simply hacking within the linear
$\lambda$-calculus.

\begin{quotation}%
{\bf Aside:} 
The normal form of a linear program can be ``read back'' from its most
general type in the following way: given a type
$\sigma_1\rightarrow\sigma_2\rightarrow \dots \rightarrow \sigma_k
\rightarrow \alpha$, where $\alpha$ is a type variable, we can
conclude the normal form has the shape $\lambda x_1.\lambda x_2.\dots
\lambda x_k.e$.
Since the term is linear, and the type is most general, every type
variable occurs exactly twice: once positively and once negatively.
Furthermore, there exists a
unique $\sigma_i \equiv \tau_1\rightarrow\tau_2\rightarrow \dots
\rightarrow \tau_m \rightarrow \alpha$, so $x_i$ must be the head
variable of the normal form, i.e., we now know: $\lambda x_1.\lambda
x_2.\dots \lambda x_k.x_1e'$, and $x_i$ is applied to $m$ arguments,
each with type $\tau_1,\dots,\tau_m$, respectively. 
But now, by induction, we can recursively construct the normal forms
of the arguments.  The base case occurs when we get to a base type (a
type variable); here the term is just the occurrence of the
$\lambda$-bound variable that has this (unique) type.  In other words,
a negative type-variable occurrence marks a $\lambda$-binding, while
the corresponding positive type-variable occurrence marks the single
occurrence of the bound variable.  The rest of the term structure is
determined in a syntax-directed way by the arrow structure of the
type.
\end{quotation}


It has been known for a long time that type reconstruction for the
simply typed $\lambda$-calculus is decidable
\cite{curry-dialectica69,hindley-ams69}, i.e.~it is decidable whether
a term of the untyped $\lambda$-calculus is the image under
type-erasing of a term of the simply typed
$\lambda$-calculus.\footnote{See \citet{tiuryn-mfcs90} for a survey of
  type inference problems, cited in \citet{cardone-hindley-06}.}
\citet{wand-fi87} gave the first direct reduction to the unification
problem
\cite{herbrand,robinson-jacm65,dwork-et-al-jlp84,kanellakis-etal-robinson}.
\citet{henglein-fplca91,henglein92d} used unification to develop
efficient type inference for binding time analysis and flow analysis,
respectively.  This work directly inspired the widely influential
\citet{steensgaard-popl96} algorithm.\footnote{See
  \autoref{sec-pointer-analysis} for more on the relation of pointer
  analysis and flow analysis.}

A lower bound on the complexity of type inference can often be
leveraged by the combinatorial power behind a quantifier elimination
procedure \cite{mairson-jfp92}.  These procedures are syntactic
transformations that map programs into potentially larger programs
that can been typed in a simpler, quantifier-free setting.

As an example, consider the case of ML polymorphism.  The universal
quantification introduced by {\tt let}-bound values can be eliminated
by reducing all {\tt let}-redexes.  The residual program is
simply-typable if, and only if, the original program is ML-typable.

This is embodied in the following inference rule:\footnote{In the
  survey, {\em Type systems for programming languages},
  \citet{mitchell-volb} attributes this observation to Albert Meyer.
  \citet[page~122]{henglein-mairson-popl91} point out in a footnote that
  it also appears in the thesis of \citet{damas-phd}, and is the
  subject of a question on the 1985 postgraduate examination in
  computing at Edinburgh University.}
\begin{displaymath}
\inferrule{\Gamma \vdash M : \tau_0\\ \Gamma \vdash [M/x]N:\tau_1}
          {\Gamma \vdash \mbox{\tt{let} } x = M \mbox{ \tt{in} } N:\tau_1}
\end{displaymath}

The residual may be exponentially larger due to nested {\tt let}
expressions that must all be eliminated.  From a Curry-Howard
perspective, this can be seen as a form of cut-elimination.  From a
computational perspective, this can be seen as a bounded running of
the program at compile time.  From a software engineering perspective,
this can be seen as code-reuse---the ML-type inference problem has
been reduced to the simple-type inference problem, and thus to
first-order unification.  But the price is that an exponential amount
of work may now be required.

Full polymorphism is undecidable, but ML offers a limit form of
outermost universal quantification.  But this restriction relegates
polymorphic functions to a second-class citizenship, so in particular,
functions passed as arguments to functions (a staple of higher-order
programming) can only be used monomorphically.

Intersection types restore first-class polymorphism by offering a
finite form of explicit quantification over simple types.  The type
$\tau_1 \wedge \tau_2$ is used for a term that is typable as both
$\tau_1$ and $\tau_2$.  This can formalized as the following inference
rule for $\wedge$:\footnote{This presentation closely follows the
  informal presentation of intersection types in Chapter 4 of
  \citet{neergaard-phd}.}
\begin{displaymath}
\inferrule{\Gamma_1 \vdash M:\tau_1\\ \Gamma_2 \vdash M:\tau_2}
          {\Gamma_1 \wedge \Gamma_2 \vdash M:\tau_1\wedge\tau_2}
\end{displaymath}
where $\wedge$ is lifted to environments in a straightforward way.
Notice that this allows expressions such as,
\begin{displaymath}
(\lambda f.\lambda z.z(f\; 2)(f\; \mathbf{false}))\;(\lambda x.x),
\end{displaymath}
to be typed where $x$ has type $\mathbf{int} \wedge \mathbf{bool}$.

The inference rule, as stated, breaks syntax-directed inference.
\citet{vanbakel-tcs92} observed that by limiting the rule to the
arguments of function application, syntax-direction can be recovered
without changing the set of typable terms (although some terms will
have fewer typings).  Such systems are called {\em strict
  intersections} since the $\wedge$ can occur only on the left of a
function type.

The finite $\wedge$-quantifiers of strict intersections too have an
elimination procedure, which can be understood as a program
transformation that eliminates $\wedge$-quant\-ification by {\em rank}.
A type is rank $r$ if there are no occurrences of $\wedge$ to the left
of $r$ occurrences of an arrow.  The highest rank intersections
can be eliminated by performing a {\em minimal complete development}.

Every strongly normalizing term has an intersection type, so type
inference in general is undecidable.  However, decidable fragments can
be regained by a standard approach of applying a {\em
  rank} restriction, limiting the depth of $\wedge$ to the left of a
function type.  

By bounding the rank, inference becomes decidable; if the rank is
bound at $k$, $k$ developments suffice to eliminate all intersections.
The residual program is simply-typable if, and only if, the original
program is rank-$k$ intersection typable.  Since each development can
cause the program to grow by an exponential factor, iteratively
performing $k$-MCD's results in an elementary lower bound
\cite{kfoury-etal-icfp99,neergaard-mairson-icfp04}.

The special case of rank-2 intersection types have proved to be an
important case with applications to modular flow analysis, dead-code
elimination, and typing polymorphic recursion, local definitions,
conditionals and pattern matching \cite{damiani-prost-types96,
  damiani-toplas03, banerjee-jensen-mscs03, damiani-fi07}.



System F, the polymorphic typed $\lambda$-calculus
\cite{reynolds-pcp74,girard-proofs-and-types}, has an undecidable
Curry-style inference problem \cite{wells-apal99}.  Partial inference
in a Church-style system is investigated by
\citet{boehm-sfcs85,pfenning-fi93} and Pfenning's result shows even
partial inference for a simple predicative fragment is undecidable.

The quantifier-elimination approach to proving lower bounds was
extended to System F$_\omega$ by \citet{henglein-mairson-popl91}.
They prove a sequence of lower bounds on recognizing the System
F$_k$-typable terms, where the bound for F$_{k+1}$ is exponentially
larger than that for F$_k$.  This is analogous to intersection
quantifier elimination via complete developments at the term level.
The essence of \citet{henglein-mairson-popl91} is to compute
developments at the kind level to shift from System F$_{k+1}$ to
System F$_k$ typability.  This technique led to lower bounds on
System F$_i$ and the non-elementary bound on System F$_\omega$
\cite{henglein-mairson-popl91}.  \citet{urzyczyn-mscs97} showed
Curry-style inference for System F$_\omega$ is undecidable.

There are some interesting open complexity problems in the realm of
type inference and quantifier elimination.  Bounded polymorphic
recursion has recently been investigated \cite{comini-etal-sas08}, and
is decidable but with unknown complexity bounds, nor quantifier
elimination procedures.
Typed Scheme \cite{tobin-hochstadt-felleisen-popl08}, uses explicit
annotations, but with partial inference and flow sensitivity.  It
includes intersection rules for function types.  Complexity bounds on
type checking and partial inference are unknown.

The simple algorithm of \citet{wand-fi87}, which generates constraints
for type reconstruction, can also be seen as compiler for the linear
$\lambda$-calculus.  It compiles a linear term into a ``machine
language'' of first-order constraints of the form $a = b$ and $c =
d\rightarrow e$.  This machine language is the computational analog of
logic's own low-level machine language for first-order propositional
logic, the machine-oriented logic of \citet{robinson-jacm65}.

Unifying these constraints effectively runs the machine language,
evaluating the original program, producing an answer in the guise of a
solved form of the type, which is isomorphic to the normal form of the
program.

Viewed from this perspective, this is an instance of
normal\-ization-by-eval\-uation for the linear $\lambda$-calculus.  A
linear term is mapped into the domain of first-order logic, where
unification is used to evaluate to a canonical solved form, which can
be mapped to the normal form of the term.  Constraint-based
formulations of monovariant flow analyses analogously can be seen as
instances of {\em weak} normal\-ization-by-eval\-uation functions for
the linear $\lambda$-calculus.


\chapter{Conclusions and Perspective}

\section{Contributions}

Flow analysis is a fundamental static analysis of functional,
object-oriented, and other higher-order programming languages; it is a
ubiquitous and much-studied component of compiler technology with
nearly thirty years of research on the topic.
This dissertation has investigated the computational complexity of
flow analysis in higher-order programming languages, yielding novel
insights into the fundamental limitations on the cost of performing
flow analysis.

Monovariant flow analysis, such as 0CFA, is complete for polynomial
time.
Moreover, many further approximations to 0CFA from the literature,
such as Henglein's simple closure analysis, remain complete for
polynomial time.
These theorems rely on the fact that when a program is linear (each
bound variable occurs exactly once), the analysis makes no
approximation; abstract and concrete interpretation coincide.
More generally, we conjecture {\em any} abstract and concrete
interpretation will have some sublanguage of coincidence, and this
sublanguage may be useful in proving lower bounds.

The linear $\lambda$-calculus has been identified as an important
language subset to study in order to understand flow analysis.
Linearity is an equalizer among variants of static analysis, and a
powerful tool in proving lower bounds.
Analysis of linear programs coincide under both equality and
inclusion-based flow constraints, and moreover, concrete and abstract
interpretation coincide for this core language.
The inherently sequential nature of flow analysis can be understood as
a consequence of a lack of abstraction on this language subset.

Since linearity plays such a fruitful role in the study of program
analysis, we developed connections with linear logic and the
technology of sharing graphs.
Monovariant analysis can be formulated graphically, and the technology
of graph reduction and optimal evaluation can be applied to flow
analysis.
The explicit control representation of sharing graphs makes it easy to
extend flow analysis to languages with first-class control.

Simply-typed, $\eta$-expanded programs have a potentially simpler 0CFA
problem, which is complete for logarithmic space.
This discovery is based on analogies with proof normalization for
multiplicative linear logic with {\em atomic} axioms.

Shivers' polyvariant $k$CFA, for any $k>0$, is complete for
deterministic exponential time.
This theorem validates empirical observations that such control flow
analysis is intractable. 
A fairly straightforward calculation shows that $k$CFA can be computed
in exponential time.  We show that the naive algorithm is essentially
the best one.  There is, in the worst case---and plausibly, in
practice---no way to tame the cost of the analysis.  Exponential time
is required.

Collectively, these results provide general insight into the
complexity of abstract interpretation and program analysis.

\section{Future Work}

We end by outlining some new directions and open problems worth
pursuing, in approximately ascending order of ambition and import.

\subsection{Completing the Pointer Analysis Complexity Story}

Compared with flow analysis, pointer analysis has received a much more
thorough complexity investigation.  A series of important refinements
have been made by \citet{landi-ryder-popl91,
  landi-phd92,landi-loplas92,choi-etal-popl93,
  horwitz-toplas97,muth-debray-popl00,chaterjee-etal-tse01,
  chakaravarthy-popl03}, yet open problems persist.
\citet{chakaravarthy-popl03} leaves open the lower bound on the
complexity of pointer analysis with well-defined types with less than
two levels of dereference.  We believe our insights into linearity and
circuit construction can lead to an answer to this remaining problem.

\subsection{Polyvariant, Polynomial Flow Analyses}

To echo the remark of \citet{bravenboer-smaragdakis-oopsla09}, only
adapted to the setting of flow analysis rather than pointer analysis,
there is an ongoing search for polyvariant, or context-sensitive,
analyses that provide precise flow information without causing massive
redundant computation.
There has been important work in this area
\cite{jagannathan-weeks-popl95,nielson-nielson-popl97}, but the
landscape of tractable, context-sensitive flow analyses is mostly open
and in need of development.

The ingredients, detailed in \autoref{chapter-kcfa}, that combine to
make $k$CFA hard, when $k > 0$, should provide guidance in designing new
abstractions that avoid computationally expensive components of
analysis.
A lesson learned has been that {\em closures}, as they exist when
$k>0$, result in an exponential value space that can be harnessed for
the \exptime\ lower-bound construction.
It should be possible to design alternative closure abstractions while
remaining both polyvariant and polynomial (more below).

\subsection{An Expressive Hierarchy of Flow Analyses}

From the perspective of computational complexity, the $k$CFA hierarchy
is flat (for any fixed $k$, $k$CFA is in \exptime; see
\autoref{sec-kcfa-in-exptime}).  On the other hand, there are far more
powerful analyses such as those of \citet{burn-hankin-abramsky} and
\citet{mossin-njc98}.  How can we systematically bridge the gap
between these analyses to obtain a real expressivity hierarchy?

Flow analyses based on rank-bounded intersection types offers one
approach.  
It should also be possible to design such analyses by composing
notions of precise but bounded computation---such as partial
evaluation or a series of complete developments---followed by course
analysis of residual programs.  
The idea is to stage analysis into two phases: the first eliminates
the need for polyvariance in analysis by transforming the original
program into an equivalent, potentially larger, residual program.
The subsequent stage performs a course (monovariant) analysis of the
residual program.
By staging the analysis in this manner---first computing a precise but
bounded program evaluation, {\em then} an imprecise evaluation
approximation---the ``ever-worsening feedback loop''
\cite{might-shivers-icfp06} is avoided.
By using a sufficiently powerful notion of bounded evaluation, it
should be possible to construct flow analyses that form a true
hierarchy from a complexity perspective.
By using a sufficiently weak notion of bounded evaluation, it should
be possible to construct flow analyses that are arbitrarily
polyvariant, but computable in polynomial time.

\subsection{Truly Subcubic Inclusion-Based Flow Analysis}

This dissertation has focused on lower bounds, however recent upper
bound improvements have been made on the ``cubic bottleneck'' of
inclusion-based flow analyses such as 0CFA
\cite{midtgaard-vanhorn-tr09}.  These results have shown known set
compression techniques can be applied to obtain direct 0CFA algorithms
that run in $O(n^3 / \log n)$ time on a unit cost random-access memory
model machine.  While these results do provide a logarithmic
improvement, it is natural to wonder if there is a $O(n^c)$ algorithm
for 0CFA and related analyses, where $c < 3$.

At the same time, there have been recent algorithmic breakthroughs on
the all-pairs shortest path problem resulting in truly subcubic
algorithms.  Perhaps the graphical formulation of flow analysis from
\autoref{chap:linear-logic} can be adapted to exploit these
breakthroughs.

\subsection{Toward a Fundamental Theorem of Static Analysis}

A theorem due to \citet{statman79} says this: let {\bf P} be a
property of simply-typed $\lambda$-terms that we would like to detect
by static analysis, where {\bf P} is invariant under reduction
(normalization), and is computable in elementary time (polynomial, or
exponential, or doubly-exponential, or\dots).  Then {\bf P} is a {\em
  trivial} property: for any type $\tau$, {\bf P} is satisfied by {\em
  all} or {\em none} of the programs of type $\tau$.
\citet{henglein-mairson-popl91} have complemented these results,
showing that if a property is invariant under $\beta$-reduction for a
class of programs that can encode all Turing Machines solving problems
of complexity class {\sc{f}} using reductions from complexity class
{\sc{g}}, then any superset is either {\sc{f}}-complete or trivial.
Simple typability has this property for linear and linear affine
$\lambda$-terms \cite{henglein-mairson-popl91,mairson-jfp04}, and
these terms are sufficient to code all polynomial-time Turing
Machines.

We would like to prove some analogs of these theorems, with or without
the typing condition, but weakening the condition of ``invariant under
reduction'' to ``invariant under abstract interpretation.''



\cleardoublepage
\phantomsection
\addcontentsline{toc}{chapter}{Bibliography}
\bibliography{bibliography}


\cleardoublepage
\phantomsection
\addcontentsline{toc}{chapter}{Colophon}
\chapter*{Colophon}
\label{colophon}
\thispagestyle{empty}

This dissertation was produced using Free Software on a digital
computer.  The document was composed in GNU Emacs and typeset using
\LaTeX\ and the {\em Brandeis dissertation} class by Peter M{\o}ller
Neergaard.  The Times font family is used for text and the Computer
Modern font family, designed by Donald E.~Knuth, is used for
mathematics.  Some figures were typeset using the \MP\ language by
John D.~Hobby and sequent calculus proofs were typeset using Didier
R\'emy's {\em Math Paragraph for Typesetting Inference Rules}.

This dissertation is Free Software.  It is released under the terms of
the Academic Free License version 3.0.  Source code is available:
\begin{center}
\url{http://svn.lambda-calcul.us/dissertation/}
\end{center}

Machine wash cold in commercial size, front loading machine, gentle
cycle, mild powder detergent.  Rinse thoroughly.  Tumble dry low in
large capacity dryer.  Do not iron.  Do not dry clean.  Do not bleach.
Secure all velcro closures.
\begin{center}
{\Large \ShortThirty\ \Tumbler\ \NoIroning\ \NoChemicalCleaning\
\NoBleech\ $\lambda^{\mbox{\small\Lineload}}$}
\end{center}

\end{document}